\newtheorem{theorem}{Theorem}[section]
\newtheorem{Lemma}[theorem]{Lemma}
\newtheorem{claim}[theorem]{Claim}
\newtheorem{proposition}[theorem]{Proposition}
\newtheorem{corollary}[theorem]{Corollary}
\newtheorem{definition}[theorem]{Definition}
\newcommand{\R}{\mathbb{R}}
\def\bra#1{\mathinner{\langle{#1}|}}
\def\ket#1{\mathinner{|{#1}\rangle}}
\newcommand{\braket}[2]{\langle #1|#2\rangle}
\renewcommand{\part}[2]{\frac{\partial #1}{\partial #2}}
\newcommand{\all}[2]{\begin{align}\label{#2} #1\end{align}}
\newcommand{\al}[1]{\begin{align} #1\end{align}}
\newcommand{\enum}[1]{\begin{enumerate}#1\end{enumerate}}
\newcommand{\en}[1]{\left ( #1 \right )}
\newcommand{\nl}{\notag \\}
\newcommand{\norm}[1]{\lVert#1\rVert}
\newcommand{\thmref}[1]{\hyperref[#1]{{Theorem~\ref*{#1}}}}
\newcommand{\lemref}[1]{\hyperref[#1]{{Lemma ~\ref*{#1}}}}
\newcommand{\remref}[1]{\hyperref[#1]{{Remark~\ref*{#1}}}}
\newcommand{\corref}[1]{\hyperref[#1]{{Corollary~\ref*{#1}}}}
\newcommand{\eqnref}[1]{\hyperref[#1]{{Equation~(\ref*{#1})}}}
\newcommand{\claimref}[1]{\hyperref[#1]{{Claim~\ref*{#1}}}}
\newcommand{\remarkref}[1]{\hyperref[#1]{{Remark~\ref*{#1}}}}
\newcommand{\propref}[1]{\hyperref[#1]{{Proposition~\ref*{#1}}}}
\newcommand{\factref}[1]{\hyperref[#1]{{Fact~\ref*{#1}}}}
\newcommand{\defref}[1]{\hyperref[#1]{{Definition~\ref*{#1}}}}
\newcommand{\exampleref}[1]{\hyperref[#1]{{Example~\ref*{#1}}}}
\newcommand{\hypref}[1]{\hyperref[#1]{{Hypothesis~\ref*{#1}}}}
\newcommand{\secref}[1]{\hyperref[#1]{{Section~\ref*{#1}}}}
\newcommand{\chapref}[1]{\hyperref[#1]{{Chapter~\ref*{#1}}}}
\newcommand{\apref}[1]{\hyperref[#1]{{Appendix~\ref*{#1}}}}
\DeclareMathOperator{\poly}{poly}
\DeclareMathOperator{\polylog}{polylog}
\DeclareMathOperator{\diag}{diag}
\DeclareMathOperator{\QFT}{QFT}
\title{A quantum spectral method for simulating stochastic processes, with applications to Monte Carlo}
\author[1, 2]{Adam Bouland}
\author[1]{Aditi Dandapani}
\author[1]{Anupam Prakash}
\affil[1]{\footnotesize QC Ware, Palo Alto, CA, USA}
\affil[2]{\footnotesize Department of Computer Science, Stanford University, Stanford, CA, USA}
\date{\today}
\begin{document}

\clearpage\maketitle
\thispagestyle{empty}

\begin{abstract}
 Stochastic processes play a fundamental role in physics, mathematics, engineering and finance. 
One potential application of quantum computation is to better approximate properties of stochastic processes. For example, quantum algorithms for Monte Carlo estimation combine a quantum simulation of a stochastic process with amplitude estimation to improve mean estimation.
In this work we study quantum algorithms for simulating stochastic processes which are compatible with Monte Carlo methods. We introduce a new ``analog'' quantum representation of stochastic processes, in which the value of the process at time t is stored in the amplitude of the quantum state, enabling an exponentially efficient encoding of process trajectories. 
    We show that this representation allows for highly efficient quantum algorithms for simulating certain stochastic processes, using spectral properties of these processes combined with the quantum Fourier transform.
    In particular, we show that we can simulate $T$ timesteps of fractional Brownian motion using a quantum circuit with gate complexity  $\polylog(T)$, which coherently prepares the superposition over Brownian paths.
 We then show this can be combined with quantum mean estimation to create end to end algorithms for estimating certain time averages over processes in time $O(\polylog(T)\epsilon^{-c})$ where $3/2<c<2$ for certain variants of fractional Brownian motion, whereas classical Monte Carlo runs in time $O(T\epsilon^{-2})$ and quantum mean estimation in time $O(T\epsilon^{-1})$.
 Along the way we give an efficient algorithm to coherently load a quantum state with Gaussian amplitudes of differing variances, which may be of independent interest.

\end{abstract}

\clearpage
\pagenumbering{arabic} 
\newpage

\section{Introduction}

Stochastic processes play a fundamental role in mathematics, physics, engineering, and finance, modeling time varying quantities such as the motion of particles in a gas, the annual water levels of a reservoir, and the prices of stocks and other commodities. 
One potential application of quantum computation is to better estimate properties of stochastic processes or random variables derived therefrom.
A long line of works have shown that one can quadratically improve the precision of estimating expectation values of random variables, using quantum amplitude estimation and variants thereof \cite{abrams1999fast,heinrich2002quantum,heinrich2002optimal,heinrich2003monte,brassard2011optimal,montanaro2015quantum, hamoudi2018quantum,hamoudi2021quantum,kothari2022mean}.
For example, if one wishes to estimate the expectation value of a random variable that one can efficiently classically sample, then classical Monte Carlo methods require $\Theta(1/\epsilon^2)$ samples to $\epsilon$-approximate the mean, while
quantum algorithms can do so using only $\Theta(1/\epsilon)$ calls to the classical sampling algorithm in superposition.
These algorithms are optimal in a black-box setting \cite{bennett1997strengths,nayak1999quantum}. 
Such algorithmic approaches have received much attention as a potential application of quantum computation.
For example, there has been much excitement about the possibility of using this approach for the Monte Carlo pricing of financial derivatives and risk analysis, e.g. \cite{rebentrost2018quantum,woerner2019quantum,bouland2020prospects,egger2020quantum,stamatopoulos2020option,egger2020credit,chakrabarti2021threshold,doriguello2021quantum,doriguello2022quantum}.

However, achieving a practical quantum speedup for estimation of expectations over stochastic processes is challenging, even with potential future improvements in quantum hardware. 
This is for two reasons.
First, in these algorithms one must simulate the underlying random variable/stochastic process in quantum superposition.
That is, one needs to coherently prepare a quantum state which encodes the randomness used to generate the trajectory as well as a trajectory of the stochastic process under consideration.
While in principle this can always be done in the same amount of time to classically simulate the process -- for example by compiling the classical simulation down to Toffoli gates with uniform random seeds as input -- in practice this can result in prohibitively 
large gate counts in the simulation circuit.
Second, one must not only simulate the above simulation circuit once, but ($O(1/\epsilon)$) in series in order to achieve the quadratic quantum speedup\footnote{It is however possible to perform lower-depth variants of the algorithm at the cost of decreased speedups \cite{giurgica2022low} that are proportional to the number of times the simulation circuit is performed in series.}. The depth for the simulation circuit is therefore another bottleneck in obtaining speedups
for quantum Monte Carlo methods. 
Due to these constraints, it has recently been noted that in certain future projections of quantum hardware development, the clockspeed overheads of quantum error correction might overwhelm the quadratic speedups for relevant parameter regimes \cite{babbush2021focus,troyersimons}. For example, recent estimates of the effective error rate needed to implement financial derivative pricing in a practical setting using state of the art algorithms has revealed it might require many orders of magnitude improvements over existing hardware \cite{chakrabarti2021threshold}.

Fortunately, these quantum Monte Carlo algorithms naturally compound with any speedup in the process simulation.
Therefore, a critical goal is to find a quantum speedup for simulating stochastic processes, or at the very least a more gate-efficient method of simulating such processes, to render these techniques practical in the future. 
Indeed, \cite{montanaro2015quantum} noted that quantum walk methods can achieve such a speedup in certain cases, and used this to show a quantum algorithm for estimating the partition function of the Ising model exhibiting a quadratic speedup in both the error parameter $\epsilon^{-1}$ and the mixing time of the corresponding random walk over classical methods.
In a similar spirit there has been interest in efficient loading of particular probability distributions, such as the Gaussian distribution \cite{rattew2021efficient}, into quantum registers for future use in finance algorithms.

\subsection{Our results}

In this work we study the quantum simulation of stochastic processes for use in Monte Carlo algorithms. We focus on two questions: first, beyond quantum walks, are there scenarios can one create a coherent quantum simulation of a stochastic process using significantly fewer gates than trivially ``quantizing'' a classical simulation algorithm (i.e. compiling to Toffolis)? And second, could this create an end to end algorithm for an any potentially relevant applications which surpasses classical Monte Carlo?

We answer both questions in the affirmative.
First, we introduce a new notion of stochastic process simulation which we call the \emph{analog} simulation of a process, as opposed to the \emph{digital} simulation obtained by ``quantizing'' a classical algorithm.
We then show that one can create a highly efficient analog simulation for Brownian motion (BM) and a generalization thereof known as fractional Brownian motion (fBM). In particular we show how to $\epsilon$-approximately simulate a $T$-step Brownian motion process in merely $\tilde{O}(\polylog(T)\poly(\epsilon^{-1}))$ qubits and even shorter circuit depth. 

\begin{theorem} [Main Theorem, informal\footnote{See Theorem \ref{thm:FBM} for formal statement.}]
There is a quantum algorithm to produce an $\epsilon$-approximate analog simulation for fractional Brownian motion with Hurst parameter $H\in (0, 1]$, using a quantum circuit with 
$O(\polylog(T)+\poly(\epsilon^{- 1/2H}))$ gates,  $\widetilde{O}(\polylog(T)+\polylog(\epsilon^{-1/2H}))$ depth, and $O(\polylog(T)+O(\epsilon^{-1/2H}))$ qubits.
\end{theorem}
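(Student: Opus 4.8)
The plan is to exploit the fact that fractional Brownian motion is a Gaussian process whose increments — or rather the process values themselves — have a known covariance structure, and to diagonalize that covariance using a (discretized) Fourier-type transform so that sampling reduces to loading independent Gaussians of prescribed variances. Concretely, I would represent a discretized fBM path $(B_{t_1},\dots,B_{t_T})$ as a mean-zero Gaussian vector with covariance matrix $\Sigma_H$, where $(\Sigma_H)_{ij} = \tfrac12(|t_i|^{2H}+|t_j|^{2H}-|t_i-t_j|^{2H})$. The ``analog'' encoding stores the path in the amplitudes of a $\log T$-qubit register, so what we must coherently prepare is essentially $\sum_{\vec x} \sqrt{p(\vec x)}\,\ket{\vec x}\ket{\text{path}(\vec x)}$, where $\vec x$ are the Gaussian randomness coordinates in the diagonalizing basis. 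The first step is therefore to identify a basis in which $\Sigma_H$ is (approximately) diagonal and in which the change of basis is implemented by a $\polylog(T)$-gate circuit: the natural candidate is the quantum Fourier transform, since for stationary-increment processes the covariance is approximately circulant, and fBM in particular has a spectral density behaving like $|\omega|^{-(2H+1)}$. I would make this precise by passing to the increment process (fractional Gaussian noise), whose covariance is Toeplitz and well-approximated by a circulant matrix, diagonalized exactly by the DFT; the eigenvalues are then read off from the spectral density.

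The second step is to load the resulting product of independent Gaussians with differing variances $\sigma_1^2,\dots,\sigma_T^2$ into amplitudes. Here I would invoke the paper's own claimed subroutine — the efficient coherent loader for Gaussian states of varying variance advertised in the abstract — to prepare $\bigotimes_k \big(\sum_{x} \sqrt{g_{\sigma_k}(x)}\,\ket{x}\big)$ on a register of $O(\epsilon^{-1/2H})$ qubits total (the truncation/discretization level being dictated by how heavy-tailed the largest-variance coordinate is, which is governed by $H$, explaining the $\epsilon^{-1/2H}$ dependence). Applying the inverse diagonalizing transform (QFT, cost $\polylog(T)$) then rotates this product state into the superposition over fGn sample vectors; a prefix-sum / cumulative-sum circuit (also $\polylog(T)$ depth via a parallel-prefix network, and this is where the analog encoding pays off, since the cumulative sum can be done in the amplitude picture) converts increments to the path values $B_{t_j}$. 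Finally one argues the $\epsilon$-approximation in total variation / trace distance composes from three sources: the circulant approximation of the Toeplitz covariance, the truncation of each Gaussian to finitely many digits, and the error of the Gaussian loader subroutine — each controlled at cost $\polylog$ in $T$ and $\poly$ in $\epsilon^{-1/2H}$.

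I expect the main obstacle to be the error analysis of the circulant approximation uniformly in $H$, especially as $H \to 0$ (the spectral density blows up, so the small-$\omega$ eigenvalues are delicate) and as $H \to 1$ (long-range dependence is strongest, the Toeplitz-to-circulant embedding error is largest, and one may instead need to work with the $2T\times 2T$ circulant embedding à la Davies–Harte rather than a naive periodization). A clean way to handle this is the Davies–Harte / Dietrich–Newsam exact circulant embedding: embed $\Sigma_H$ (or the fGn Toeplitz matrix) into a larger circulant matrix that is exactly PSD for $H \in (0,1]$, so that the DFT diagonalization is exact and the only approximation left is the Gaussian-amplitude discretization. The remaining subtlety is then bounding how the trace-distance error of a product of truncated Gaussians transforms under the (unitary) QFT and the cumulative-sum map — unitaries preserve trace distance, so this reduces to bounding the initial product-state error, which in turn reduces to a union bound over $T$ coordinates of a single truncated-Gaussian error, contributing only a $\log T$ overhead to the per-coordinate qubit count. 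I would also need to verify the qubit count: the per-coordinate precision must scale so that the largest eigenvalue's Gaussian is resolved to relative error $\poly(\epsilon)$, and tracking that the dominant eigenvalue scales polynomially in $T$ (not exponentially) while the required resolution scales like $\epsilon^{1/2H}$ is what yields the stated $O(\polylog(T)+\epsilon^{-1/2H})$ qubit bound.
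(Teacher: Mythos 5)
Your high-level instinct --- diagonalize the Gaussian process in a Fourier-type basis so that simulation reduces to loading independent Gaussians of prescribed variances, then apply the QFT --- matches the paper's strategy. The paper, however, works directly with the Wiener sine-series for the path, $B_H(t)\propto\sum_k a_k\,k^{-(H+1/2)}\sin(kt)$ with $a_k$ i.i.d.\ $N(0,1)$, truncated to $L=O(\epsilon^{-1/2H})$ terms, rather than with the increment process plus a circulant embedding. This difference is not cosmetic: your route needs a final cumulative-sum step to pass from fractional Gaussian noise back to the path, and in the amplitude picture that map is a lower-triangular Toeplitz matrix, which is \emph{not unitary}. The paper implements exactly this Toeplitz-via-circulant multiplication in its L\'{e}vy-process section, but only by postselection with success probability governed by the ratio of extreme Fourier coefficients, and it explicitly obtains only an \emph{incoherent} simulation there (not composable with amplitude estimation). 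For fBM the spectrum $|\omega|^{-(2H+1)}$ is far from flat, so that postselection cost is not benign; your ``parallel-prefix network in the amplitude picture'' does not exist as a unitary.

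The more serious gap is the Gaussian-loading step, which is the technical core of the theorem and which you invoke as a black box (``the paper's own claimed subroutine''). Moreover, the object you propose to prepare, $\bigotimes_k\bigl(\sum_x\sqrt{g_{\sigma_k}(x)}\,\ket{x}\bigr)$, is a \emph{digital} encoding of the Gaussian randomness: each coordinate's sample value lives in a ket, not in an amplitude. Applying the QFT to such a state does not ``rotate it into the superposition over sample vectors'' --- the QFT acts on the time/frequency index register and transforms amplitudes, so from a digitally stored sample you would instead need an arithmetic DFT circuit on the values, costing $\poly(T)$ gates and defeating the purpose. What is actually needed is the state $\sum_k a_k\sigma_k\ket{k}$ (normalized) with the random values sitting in the amplitudes, entangled with a garbage register recording the randomness. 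Producing this coherently is what the paper calls ``data loading the data loader'': it computes the exact and provably \emph{independent} distributions (in terms of beta/gamma laws) of the unary data-loader angles for an i.i.d.\ Gaussian vector, loads those angle distributions on separate registers, applies controlled rotations to transduce them into amplitudes, and then injects the decaying variances $k^{-(H+1/2)}$ by a reversible-addition-and-measurement trick exploiting permutation invariance of the i.i.d.\ vector. None of this is supplied or replaced in your proposal, so the argument is circular at precisely the point where the theorem is nontrivial.
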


\noindent Here the input to the algorithm is a description of the parameters of the fractional Brownian motion -- namely the drift, variance and the ``Hurst parameter'' which describes the amount of correlation or anti-correlation between subsequent steps of the Brownian motion. We define this more formally in Section \ref{p1}.
Our algorithm makes critical use of the quantum Fourier transform and spectral properties of fractional Brownian motion, as well as a recursive application of data loading algorithms which uses special properties of this spectrum.
Additionally, we generalize these methods to a broader class of stochastic processes known as L\'{e}vy processes, albeit with weaker simulation guarantees.

Second, we show how to use this new representation to obtain an end to end quantum algorithm for estimating properties of stochastic processes, which is faster than classical Monte Carlo. This is not straightforward, as our analog simulation makes use of the exponential size of Hilbert space to efficiently encode the stochastic process trajectories. This does not allow one to directly measure quantities readily available in the digital simulation. For example, one cannot easily read out the value of the process at a particular time, similar to how the HHL algorithm does not allow one to extract individual entries of the solution vector of a linear system \cite{harrow2009quantum,aaronson2015read}.
Therefore, some work must be done to identify properties of stochastic processes which are easily extractable from this analog representation. 

To this end, we describe two quantities which are \emph{time averages} of the stochastic processes which meet this criteria, which can be efficiently estimated by combining our analog simulation algorithm with quantum mean estimation algorithms \cite{brassard2011optimal,montanaro2015quantum, hamoudi2021quantum,kothari2022mean}. 
For example, we show that one an efficiently price a certain over-the counter financial option currently traded -- in particular, an option on realized variance -- under a particular assumption about the evolution of the asset.
We also show that one can create an efficient statistical test for anomalous diffusion in fluids.
The first algorithm runs in time $O(\polylog(T)\epsilon^{-c})$ where $3/2<c<2$ is a constant depending on certain parameters of the stochastic process. This is an improvement over classical Monte Carlo which runs in time $O(T\epsilon^{-2})$, and incomparable\footnote{Here the algorithms are incomparabale as coming from the fact that our analog simulation introduces additional error terms into the simulation at order $\poly(\epsilon^{-1})$, which are exponentially suppressed in the digital simulation. } to standard quantum mean estimation which runs in time $O(T\epsilon^{-1})$. 
Our algorithm therefore creates a black-box quantum speedup compared to the best black-box classical algorithm.

We leave open the question of whether our techniques can generate a genuine (white box) quantum speedup for estimating certain properties of stochastic processes.
Here the central questions are a) to characterize what sorts of properties of fBM we can estimate with our methods and b) to determine if there exist faster classical methods for computing such properties than classical Monte Carlo sampling.
For the particular quantities we consider here, there exist closed-form analytical formulae for these quantities, and therefore our results do not represent white-box speedups over the best possible classical algorithm.
However, our technique easily generalize to (mildly) postselected subsets of the process trajectories, which quickly allow one to depart from the regime of closed-form analytical formulae. 
Therefore we expect that our techniques can easily price certain options or evaluate properties of sub/super diffusive fluids which do not have closed-form analytical formulae, and therefore could possibly represent white-box quantum speedups.
As with all black-box speedups (including standard quantum mean estimation algorithms), the central question is whether or not faster classical algorithms exist beyond classical Monte Carlo despite the non-existence of closed form solutions.
We discuss this further, as well as additional open problems, in Section \ref{subsec:openqs}.

\subsection{Analog vs digital simulation}

A discrete-time stochastic process $S(T)$ is a description of a probability distribution over values $v_1, v_2,\ldots v_T$ of a particular quantity at specified times $1,2,\ldots T$. 
The differences between successive values $v_{i+1}-v_i$ are referred to as the increments of the processes, which might be dependent on one another.
For example, the  stochastic process describing a particle subject to diffusion will have independent increments, while the process representing the annual water level in a reservoir will have positively correlated increments due to long-term drought cycles \cite{H66}. 

We say one can perform a \emph{quantum simulation} of a stochastic processes if one can coherently produce a quantum state $\psi$ representing the stochastic process. The complexity of this task might depend on the quantum representation of the stochastic process. For example, a commonly used representation (e.g. as mentioned in \cite{kothari2022mean}) is to consider the quantum state
\[\displaystyle\sum_{v_1,v_2,\ldots v_T} \sqrt{p_{v_1,v_2,\ldots v_T}} \ket{v_1,v_2,\ldots v_T} \ket{g} \]
where $p_{v_1,v_2,\ldots v_T}$ is the probability the process takes values $v_1,v_2,\ldots v_T$, $\ket{g}$ is a garbage state entangled with the values, and the sum is taken over all possible values of the tuples $v_1\ldots v_T$.
In other words, the ket of the state encodes the trajectory of the process (i.e. the tuple of values $v_1,v_2,\ldots v_T$), and the amplitude stores the probability that trajectory occurs.
If one traces out the garbage qubits, the diagonal entries of the reduced density matrix are precisely the probability distribution of the stochastic process.

We call this the \emph{digital representation} of the stochastic process, because  it meshes well with classical digital simulation algorithms.
Namely, if one has a classical algorithm to sample from $v_1,v_2,\ldots v_T$ in time $f(T)$, then it immediately implies a quantum algorithm to produce the digital representation in time $O(f(T))$ -- simply by compiling the algorithm down to Toffoli gates, and replacing its coin flips with $\ket{+}$ states\footnote{The coin flip registers then become the garbage register of the above state.}.
Therefore there is no quantum \emph{slowdown} for the digital simulation task in general.
To the best of our knowledge, the best type of quantum \emph{speedup} for digital simulation occurs via quantum walk algorithms as noted in \cite{montanaro2015quantum}, where $f(T)$ goes to $\sqrt{f(T)}$ in certain cases.

In this work we introduce a new representation of stochastic processes, which we call the \emph{analog} representation. 
We first describe the analog encoding of a single trajectory $ (v_1,v_2,\ldots v_T)$ of the stochastic process $S(T)$ is defined as,

\[ \ \ket{\psi_{v_1,v_2,\ldots v_T}}=\left(\frac{1}{\norm{v}}\displaystyle\sum_{i=1}^T v_{i}\ket{i}\right) \]

\noindent Here the value of the process is encoded in the \emph{amplitude} of the quantum state and the ket stores the time. This representation of the stochastic process manifestly takes advantage of the exponential nature of quantum states -- as only $O(\log T)$ qubits are required to represent a $T$-timestep process. It is an analog representation as the values are stored in the amplitude of the state, rather than digitally in the value of the ket.
An $\epsilon$-approximate encoding of the trajectory of the stochastic process trajectory is a state such that $\norm { \ket{\psi'_{v_1,v_2,\ldots v_T}} - \ket{\psi_{v_1,v_2,\ldots v_T}} }^{2}  \leq \epsilon$. 
Note that here the representation discards the normalization information, but we will later consider modifications of this formalism which keeps normalization information as well, at the cost of introducing an additional flag register which is 0 on the desired (sub-normalized) state.

Preparing the analog encoding of a single trajectory of $S(T)$ is equivalent to the task of preparing copies of a density matrix $\rho= \mathbb{E}_{v_1,v_2,\ldots v_T} \left[ \ket{\psi_{v_1,v_2,\ldots v_T}}\bra{\psi_{v_1,v_2,\ldots v_T}}\right] $. Such a density matrix represents a single trajectory of the stochastic process sampled according to the correct probabilities.  
However, for quantum Monte Carlo methods to estimate a function of a stochastic process, a stronger notion of coherent analog encodings for $S(T)$ is required.

The coherent analog representation of the stochastic process $S(T)$ is defined as follows
\[ \ket{S(T)}= \displaystyle\sum_{v_1,v_2,\ldots v_T} \sqrt{p_{v_1,v_2,\ldots v_T}} \ket{\psi_{v_1,v_2,\ldots v_T}} \ket{g}  \]
where $\ket{g}$ is an orthonormal garbage register entangled with the trajectory $v_1\ldots v_T$.
In other words, we assume that tracing out the garbage register yields the state
\[\rho = \displaystyle\mathbb{E}_{v_1,v_2,\ldots v_T \sim S} \ket{\psi_{v_1,v_2,\ldots v_T}}\bra{\psi_{v_1,v_2,\ldots v_T}} .\]
The garbage register essentially encodes the randomness needed to sample from the process trajectories.
The coherent analog representation $\ket{S(T)}$ of the stochastic process is compatible with quantum Monte Carlo methods and can be used as part of the simulation circuit/oracle for estimating a function of the stochastic process using the 
quantum amplitude estimation algorithm. An $\epsilon$ approximate analog encoding for $\ket{S_{\epsilon}(T)}$ is defined similarly where the trajectories generated are $\epsilon$ approximately correct. We note that a method of preparing $\rho$ directly (e.g. by classically sampling random trajectories and then coherently preparing the trajectory states) is not compatible with amplitude estimation as it is not unitary.

\subsection{Proof sketch}
Our first result is to show that the mathematical structure of fractional Brownian motion -- a fundamental stochastic process that can be used to model diffusion processes like the motion of particles in a gas -- is particularly amenable to efficient analog simulation.

\subsubsection{Step 1: View in Fourier basis using the QFT}
Our algorithm is derived from combining spectral techniques with the quantum Fourier transform. 
The starting point is the classic spectral analysis of Brownian motion and its Wiener series representation as a Fourier series with stochastic coefficients.
Brownian motion is a continuous time process, i.e. a probability distribution over continuous real-valued functions $B(t):[0,1]\rightarrow \mathbb{R}$, with Gaussian increments between distinct times. 
There are several mathematical definitions of Brownian motion, but the most helpful is a description of its Fourier series due to Wiener (1924).
Wiener observed that Brownian motion\footnote{Technically, this is the description of a Brownian bridge which has fixed start and end points. But this spectral analysis can be generalized to other forms of Brownian motion.} with zero drift and variance $\sigma$ on the interval $[0,1]$ can be written as
\[B(t) =  \displaystyle\sum_{k=1}^{\infty} \frac{a_k}{k} \sin(\pi kt)\]
where the variables $a_k$ are independent identically distributed (i.i.d.) Gaussian variables with mean 0 and variance 1.
In other words, when viewed in frequency space, Brownian motion is extremely simple -- its frequency components decouple from one another.

The Wiener series representation gives rise to a family of classical spectral algorithms for simulating Brownian motion, which are commonly used in computational finance \cite{cherubini2010fourier}. The basic idea is to discretize time to $T$ timesteps, draw a set of random Gaussian variables $a_k$ of diminishing variances (typically imposing some cutoff on the maximum frequency considered), and take their Fourier transform to obtain a trajectory of $B(t)$.
Classically this takes time $O(T\log T)$ via the fast Fourier transform algorithm \cite{cooley1965algorithm}.

Our first key observation is that this classical spectral algorithm offers an opportunity for a highly efficient quantum analog simulation for Brownian motion trajectories via the quantum Fourier transform (QFT). 
The QFT performs a Fourier transform over a vector of length $T$ using only $\polylog(T)$ qubits and quantum gates -- essentially by exponentially parallelizing the FFT algorithm -- and is at the is at the core of many quantum speedups, e.g. \cite{shor1999polynomial}.
Therefore, if one could efficiently prepare a quantum state encoding the Fourier transform of a stochastic process, then by taking its QFT\footnote{For technical reasons we perform a Real version of quantum Fourier transform on such a vector known as the Discrete Sine Transform (DST), but this also admits an efficient quantum circuit implementation based on the QFT circuit.} one would obtain an analog simulation of the process.
The problem of analog simulation therefore reduces to the problem of loading the stochastic coefficients of the processes' Fourier transform.
This is the conceptual core of our quantum spectral algorithm.

For Brownian motion we therefore need to efficiently load a quantum state where the amplitudes are distributed as independent Gaussians of diminishing variances\footnote{We note that this task is different than the one considered in \cite{rattew2021efficient}, as we are probabilistically loading the Gaussian into a single amplitude, rather than deterministically preparing a single quantum state with many amplitudes in the shape of a Gaussian.}.
As we discuss next, the symmetries of the Brownian motion and the decoupling of the stochastic coefficients allows us to obtain a very efficient quantum analog 
simulation algorithm for the Brownian motion. A similar analysis holds for fractional Brownian motion as well. Here the Fourier coefficients of the process decouple as well to independent Gaussians, but the functional form of the diminishing variances is given by a power low function of the Hurst parameter which controls the amount of correlation between steps. For simplicity of presentation, we will sketch our algorithm for standard Brownian motion. The extension to fractional Brownian motion will later be shown in Section \ref{efbm} using similar ideas.

\subsubsection{Step 2: Efficient Gaussian loading}

Via the QFT, we have shown that to produce an analog simulation of a single trajectory of Brownian motion, we need to show how to efficiently prepare a quantum state encoding its Fourier transform.
In other words, we need to prepare the state,

\[\ket{\phi_{\vec{a}}}\propto \displaystyle \sum_{k=1}^{T} \frac{a_k}{k} \ket{k}\]
where the $a_k \sim N(0,1)$ and the series coefficients are independent Gaussian variables of decreasing variance according the function $f(k)=1/k$.
We call this the ``Gaussian loading problem'' for the function $f(k)=1/k$ -- and in general one can consider this problem with different decay functions of the variance.

The first step of our algorithm is to truncate the Fourier series to a finite number of terms $L$. That is, we instead prepare the state
\[\ket{\phi_{\vec{a}}}\propto \displaystyle \sum_{k=1}^{L} \frac{a_k}{k} \ket{k}\]
This introduces a small amount of error in our simulation algorithm. However, as the function $1/k$ is rapidly diminishing as a function of $k$, we show that this only introduces a small amount of error in our simulation. More generally in one wishes to find an $\epsilon$-approximate simulation algorithm, this only requires setting $L=\poly(\epsilon^{-1})$.

We then give an efficient quantum for solving this truncated Gaussian loading problem, which we believe may be of independent interest.
Our algorithm uses only $O(L+\log T + \log(\epsilon^{-1}))$ qubits and computation time. 
Our algorithm applies to a variety of decay functions for the variance -- which will play a key role in our generalization to fractional Brownian motion.
This algorithm is the technical core of our results.

The starting point for our Gaussian loading algorithm is highly efficient data loader circuits \cite{J21} for particular quantum states. These are circuit realizations of previous recursive constructions that used specialized quantum memory devices such as those of Grover and Rudolph \cite{grover2002creating} and Kerenidis and Prakash \cite{kerenidis2016quantum}. For any fixed values of the $a_i$, one can define a log-depth circuit to load the vector $\ket{\phi_{\vec{a}}}$, by now standard recursive doubling tricks -- one simply computes how much $\ell_2$ mass is on the first vs second half of the state, hard-codes this as a rotation angle between the first and second halves, and recurses in superposition.
This results in a log-depth circuit for loading the state, where $k$ is represented in unary. 

There are two issues which must be solved to apply this algorithm to our Gaussian loading problem.
For one, this loading occurs in unary, and we are using a binary representations of $k$ and $T$ in our analog encoding, but this turns out to be a minor issue which can be solved with low-depth binary to unary converters (see Appendix \ref{app:c}). The second and more fundamental issue is that this only describes how to efficiently load a single state, and we wish to load the analog representation of the Brownian motion $\ket{B(t)}$ in order to be 
compatible with quantum Monte Carlo methods. 

We solve this by applying the data loading algorithm twice recursively -- which we call ``data loading the data loader." The basic idea is that for any data loading algorithm $\mathcal{A}$, it takes as input some rotation angles $\vec{\theta}$, and outputs a state $\ket{\mathcal{A}(\vec{\theta})}$.
The data loading algorithms we consider are \emph{onto}, in other words for any state $\ket{\psi}$, there exists a setting of the angles $\ket{\theta}$ such that $\mathcal{A}(\vec{\theta})=\ket{\psi}$.
Thus, given any distribution $D$ over quantum states, this induces a classical probability distribution $D'$ over vectors $\vec{\theta}$. Therefore, if we could only efficiently load the quantum state corresponding to $D'$, i.e.
\[\ket{D'} = \displaystyle\sum_{\vec{\theta}}\sqrt{D'(\vec{\theta})}\ket{\vec{\theta}}\]
where $D'(\vec{\theta})$ is the probability of $\vec{\theta}$ in $D'$, 
then by feeding this state into $\mathcal{A}$ and tracing out the angle registers, this would efficiently allow us to prepare $\sigma$.

If one considers applying this technique directly, however, it turns out to produce highly complex quantum circuits. 
While there exists a distribution $D'$ over data loader angles to produce states of independent diminishing Gaussians, the joint distribution induced on angles is quite complicated.
In particular, the angles are highly \emph{correlated} with one another.
In other words, the induced distribution on the angle $\theta_i$ applied at a particular stage of the algorithm is dependent on the prior angles applied.
Therefore, to load the probability distribution on angles $D'$ would be prohibitively costly -- as it would require solving a highly correlated data loading problem across many qubit registers. This increases the complexity of the data loading circuit which reduces or eliminates our potential advantage from using the QFT.

We circumvent this obstacle in two steps. First, we show one can highly efficiently load large vectors of i.i.d. Gaussians, i.e. where the Gaussian entries all have the same variance. 
This is because the induced distribution on data loader angles is independent -- we show the angles decouple due to symmetries of the high-dimensional Gaussian, which mesh particularly well with the binary tree data-loading circuits.
In fact the distribution on data loader angles $\vec{\theta}$ turn out to have a closed form given in terms of the $\beta$ and $\gamma$ distributions due to the fact that the sum of squares of $k$ i.i.d. Gaussians are distributed according to the $\gamma(k/2)$ distribution.
Therefore, there is a highly efficient circuit to load these angles - one just loads each angle register separately, and feeds it into the \cite{J21} circuits.
This allows us to quickly prepare quantum states with i.i.d. Gaussian entries. 

Second, we show that we can efficiently \emph{convert} such states into states with decreasing variances with a simple trick.
The basic idea is to artificially inject diminishing variances with reversible addition -- we first prepare prepare $\displaystyle\sum_{1}^{L} b_{k} \ket{k}$ where the $b_{k}\sim N(0,1)$ as described above, then prepare the state $\displaystyle\sum_{1}^L \frac{1}{k} \ket{k}$, and reversibly add their register mod $L$ to produce
\[\displaystyle\sum_{k,k'=1}^{L} \frac{b_{k}}{k'} \ket{k}\ket{k'}\ket{k+k' \mod L}
\]
we then measure the auxiliary register to get the value of $l=(k+k' \mod L)$.
Post measurement, the amplitudes are $\frac{b_{k+l}}{k'}$ -- which looks like exactly what we desire, except the entries of the Guassian vector $b$ have been permuted (shifted by $l$). 
The key observation, however, is that the iid Gaussian vector $\vec{b}$ is permutation invariant.
Therefore
 $\frac{b_{k'+l}}{k'}$ ranging over $k'$ have the same distribution as i.i.d. Gaussian amplitudes with decaying variances, irrespective of the value of $l$. 
 The same technique works for a wide variety of functional forms of diminishing variances, which we discuss in detail in the main text, as it is key to generalizing our algorithm to fractional Brownian motion.

\subsubsection{Sketch of end to end applications}

We also provide two end-to-end examples using our analog encoding which provide black-box speedups over classical Monte Carlo sampling.

The first example entails the pricing of a variance swap. A variance swap is a financial instrument which pays out proportional to the mean squared volatility observed in a stock price -- so the more volatile the stock, the more it pays out. It can be used as a hedge against volatility, and is traded as an over the counter option in financial markets. We show that we can use our analog encoding to efficiently price a variance swap in certain conditions. At a high level, our algorithm is efficient because the price of a variance swap is naturally a \emph{time average} of a square of values of the volatility, and it is particularly easy to extract time averages from our analog encoding (as they are certain amplitudes of our state which are relatively large). Our algorithm works under a particular assumption about the time evolution of the price of the underlying asset. In particular, the asset must evolve by Geometric Brownian motion with changing variance, and where the variance evolves by fractional Brownian motion. By combining our algorithm with quantum mean estimation, we can $\epsilon$-approximate the value of the option in time $O(\polylog(T)\epsilon^{-c})$ where $3/2<c=1+1/2H<2$ where $H$ is the Hurst parameter of the fBM. Our algorithm works better with higher Hurst parameters of the volatility. See Section \ref{QMC} for details. 
While this particular option has a closed-form analytical solution, we note we can easily apply post-selection on the fBM, for example over paths whose norm lies in a given interval, which would circumvent the possibility of an analytic solution. 

Our second example involves a statistical test to distinguish between different diffusive regimes in single- particle motion. While in ideal fluids particle positions evolve by Brownian motion, in particular sub or super-diffusive fluids, they evolve by fBM with a nontrivial Hurst parameter.
We show that we can use our analog encoding for fBM to create a statistical test that distinguishes between a particle following an fBM with given Hurst parameter, or fBM with an alternative Hurst parameter, or even a simple case of a continuous-time-random walk. 
Our algorithm again runs in time $\tilde{O}(\polylog(T) \epsilon^{-c'}),$ where $c'>2$ depends on the Hurst parameter of the fBM, rather than $\tilde{O}(\poly(T)*N)$ in the classical case, where $N$ is determined by the discretization used to sample its characteristic function. 
Unlike our prior application, here we do not run quantum mean estimation, but rather use our simulation directly to produce estimates of the average mean-squared-displacement of the particle under certain Hurst parameters, which is compared to the observed data. Therefore this application has a worse scaling in $\epsilon^{-1}$ compared to classical, but a better scaling in $T$. As we will discuss shortly, this could still possibly generate a faster black-box algorithm than classical Monte Carlo in situations with large $T$.

\subsection{Generalizations and Open Problems}
\label{subsec:openqs}

There are many open problems remaining.
Of course, the most direct one is whether or not our method can produce an end to end asymptotic speedup for computing properties of certain stochastic processes, as previously discussed. 
Here the basic issue is to identify interesting properties of fractional Brownian motion which are complicated enough to require classical Monte Carlo approaches.
In this direction we believe considering functions of postselected subsets of trajectories is the most promising approach. 
Postselection typically takes one out of the regime of analytical formulae.
For example, in computational finance, barrier options (which only can pay out if the price of the underlying asset breaches a certain value at some point in time) typically do not have analytical formulae and therefore are priced by Monte Carlo methods. 
Postselection slows both our quantum algorithm and classical Monte Carlo in unison, preserving the relative speedup of our method relatively to classical MC.
Therefore, if one could find a postelected property of fBM for which the best classical estimation algorithm is classical MC, then this could yield a white-box speedup for our algorithm.

Another possible direction to search for speedups is to consider other inner products one could compute with respect to Brownian Motion. We generalize our algorithm to the following: given a function $f(t)$, one can efficiently evaluate its inner product with BM, i.e. $|\braket{f(T)}{B(T)}|^2$, assuming that one can prepare a quantum state encoding $f(t)$ (for details see Section \ref{sec:mcmethods}). 
Our given applications are the special case where $f$ is the indicator function between times $t_1$ and $t_2$.
One can ask if other functions might give a quantum speedup.

In any case, quantifying such a speedup would require careful work.
For one, there are many parameters at play.
To quantify an end to end speedup, one would need to take into account that the parameter $T$ is implicitly a function of $\epsilon$ (see e.g. \cite{montanaro2016quantum}).
For example if one must set $T=O(\epsilon^{-1})$ vs $T=O(\epsilon^{-2})$ vs $T=O(\epsilon^{-1/2})$, then our method's black-box speedup becomes polynomial, but with differing degrees.
We note that even if $T=O(\epsilon^{-1})$, our method's savings in $T$ pushes our algorithms' performance beyond that of standard quantum mean estimation -- and the gap only grows if $T$ is larger.  
Other factors might also affect the apparent speedup -- for example if the quantity being estimated is invariant to high-frequency components of the stochastic process (as with a time average), then our algorithms' omission of high-frequency content might result in a better error scaling than our naive bounds, and potentially result in a $\tilde{O}(\polylog(T) \epsilon^{-c})$ algorithm where $c<1$.
For this reason we believe quantifying potential asymptotic speedups for potential problems of interest to be an interesting line of inquiry.
More broadly, we leave open the question of whether our techniques can be ``de-quantized'' in a similar spirit to \cite{tang2019quantum,gilyen2018quantum,tang2021quantum,chia2022sampling,gharibian2022dequantizing}, i.e. if it is possible achieve a similar $\polylog(T)$ dependence for sampling from values/times of fBM trajectories\footnote{However, we note that this would only ``de-quantize'' the applications which do not make use of quantum mean estimation/amplitude estimation, as the latter algorithms intrinsically require coherent state preparation and not probabilistic samples.
}.

Another interesting direction is to explore what other stochastic processes might be amenable to efficient analog simulation. For example, would it be possible to give an efficient analog simulation for Geometric Brownian motion? The case of Brownian motion is particularly nice because its Fourier spectrum decouples.
 For a general stochastic processes, there are non trivial dependencies between the stochastic coefficients and the resources 
required for loading the joint distribution of the Fourier coefficients may be prohibitive. 
However there are more general families of stochastic processes with well-behaved spectra,
One reason the Fourier spectrum of Brownian motion is well-behaved it that it is a stationary process, i.e. the joint probability distribution does not change when shifted in time. This cyclic shift symmetry is precisely the symmetry of the QFT, and therefore it might be possible to give simulations for other stationary processes. In another direction, our results use the fact that Brownian motion can be expressed as an integral over white noise -- which is noise with i.i.d Gaussian Fourier components.

In this spirit, in Section \ref{sec:integrals} we generalize our algorithm to produce analog encodings of trajectories of L\'{e}vy processes.
L\'{e}vy processes are stationary stochastic processes generalizing Brownian motion, which can be expressed as integrals over a linear combination of Poisson and Brownian noise. The quantum simulation method for L\'{e}vy process trajectories is obtained by quantizing the classical method that embeds the Toeplitz discrete integration operator into a circulant matrix \cite{dieker2003spectral}. The method remains efficient in the quantum setting with gate complexity $O(\poly(\log T, 1/\epsilon))$ as circulant matrices are diagonalized by the quantum Fourier transform and further the Fourier spectrum of L\'{e}vy noise is flat, similar to the spectrum of the white noise. However our simulation results for L\'{e}vy processes are weaker than those for fBM, as we can only provide an incoherent simulation of these processes due to the coupling of the Fourier coefficients, and therefore cannot combine this method with amplitude estimation. The stochastic integral method can be used to generate encodings of It\^{o} processes that are defined as integrals over white noise and time.  We leave open the question of whether the quantum spectral method can be generalized further to (fractional) integrals over white noise and Poisson shot noise-- this  family of stochastic processes includes L\'{e}vy and It\^{o} processes as well as fractional Brownian motions. Indeed our extension to fractional Brownian motion -- which can be expressed as a fractional integral of Brownian motion -- is a step in this direction.

There is also the more direct question of whether analog simulation results in smaller quantum circuits than digital simulation for stochastic processes of interest beyond Brownian motion, in a non-asymptotic setting relevant to potential future applications of error-corrected quantum computers. This was part of our original motivation for this line of work, and we hope our work spurs further efforts in this area.

\section{Preliminaries} 
We introduce some preliminaries on stochastic processes and quantum computing in this section. 
In subsection \ref{p1}, we begin with the defining the Brownian motion. More generally, our techniques are applicable to stochastic processes that can be written as stochastic integrals over time and over Brownian motion, these processes include the fractional Brownian motion and It\^{o} processes. 
Subsection \ref{p2} introduces the quantum Fourier transform and state preparation circuits that will be used for constructing the quantum encoding for the stochastic processes. 
\subsection{Brownian motion and It\^{o} processes} \label{p1} 
The stochastic processes considered in this work are Brownian motion and its generalization to It\^{o} processes. We first introduce the Brownian motion and then the more general processes that can be represented as 
integrals over Brownian motion.  The Brownian motion is defined as follows, 
\begin{definition} 
The Brownian motion is a stochastic process $B: \R^{+} \to \R$ such that: 
\enum{ 
\item $B(0)=0$ and $B(t+h) - B(t) \sim N(0, h)$ for all $t, h \in \R^{+}$ where $N(0,h)$ is the normal distribution with mean $0$ and variance $h$. 
\item For any finite subdivision $0< t_{1} <t_{2} \cdots <t_{n}$ of time steps, the increments $B(t_{i+1}) - B(t_{i})$ are independent random variables. 
\item Almost surely, $t \to B(t)$ is a continuous function. 
} 
\end{definition} 
\noindent The existence of Brownian motion is not obvious from this definition, it was first demonstrated by Wiener \cite{W23} and 
a different alternate construction was given by L\'{e}vy \cite{MP10}.  Wiener obtained a stochastic Fourier series representation for the Brownian motion on $[0,\pi]$, 
\begin{theorem} (Wiener \cite{W23}) \label{wiener} 
For independent random variables $a_{i} \sim N(0,1)$, the following Fourier series represents Brownian motion on the interval $[0, \pi]$, 
\al{ 
B(t) = \sqrt{ \frac{1}{ \pi } }  a_{0} t + \sqrt{ \frac{2} {\pi}}  \sum_{k\geq 1} a_{k} \frac{ \sin(kt) }{k}  
} 
\end{theorem}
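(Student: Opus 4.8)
The plan is to recognize this series as a \emph{random orthogonal series} associated with an orthonormal basis of $L^2[0,\pi]$, and to reduce the theorem to three claims: (i) the series converges uniformly in $t$ almost surely, so its sum is almost surely a continuous function of $t$; (ii) the sum has covariance $\E[B(s)B(t)] = \min(s,t)$; and (iii) the sum is a Gaussian process. Claims (ii) and (iii) together force disjoint increments to be uncorrelated and jointly Gaussian, hence independent (condition~2), and force $B(t+h)-B(t)$ to be mean-zero Gaussian with variance $(t+h)+t-2t = h$ (condition~1; note $B(0)=0$ is immediate since each term vanishes at $t=0$), while (i) is condition~3.

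For the Hilbert-space part, put $\phi_0 \equiv 1/\sqrt{\pi}$ and $\phi_k(s) = \sqrt{2/\pi}\,\cos(ks)$ for $k\ge 1$; this is the classical cosine basis, a complete orthonormal system of $L^2[0,\pi]$. A direct integration gives $\int_0^t \phi_0 = t/\sqrt{\pi}$ and $\int_0^t \phi_k = \sqrt{2/\pi}\,\sin(kt)/k$, so with $\Phi_n(t) := \int_0^t \phi_n(s)\,ds = \langle \chi_{[0,t]},\phi_n\rangle$ the series in the statement is exactly $B(t) = \sum_{n\ge 0} a_n\,\Phi_n(t)$. Granting (i), for fixed $t$ this is an $L^2$-limit of mean-zero Gaussians, hence mean-zero Gaussian, and the same holds for every finite linear combination $\sum_j c_j B(t_j)$, giving (iii). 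By orthonormality of the $\phi_n$ and Parseval's identity applied to the indicators $\chi_{[0,s]}, \chi_{[0,t]}$,
\[
\E[B(s)B(t)] = \sum_{n\ge 0}\Phi_n(s)\,\Phi_n(t) = \langle \chi_{[0,s]},\chi_{[0,t]}\rangle_{L^2[0,\pi]} = \min(s,t),
\]
which is (ii).

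The main obstacle is (i); the $\sqrt{1/\pi}\,a_0 t$ term is trivial, so the content is almost sure uniform convergence of $\sum_{k\ge 1} a_k \sin(kt)/k$ on $[0,\pi]$. The plan is a dyadic-block estimate. Write $D_j(t) = \sum_{2^j \le k < 2^{j+1}} a_k \sin(kt)/k$, a trigonometric polynomial of degree $< 2^{j+1}$; by Bernstein's inequality $\sup_{t\in[0,\pi]}|D_j(t)|$ is within a constant factor of the maximum of $|D_j|$ over an $O(2^j)$-point net. At any fixed $t$, $D_j(t)$ is mean-zero Gaussian with variance $\sum_{2^j\le k<2^{j+1}}\sin^2(kt)/k^2 = O(2^{-j})$, so a Gaussian maximal inequality over the net yields $\E \sup_{t}|D_j(t)| = O(\sqrt{j}\,2^{-j/2})$, which is summable in $j$. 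Together with a L\'evy-type maximal inequality (to pass from block sums to arbitrary partial sums) and Borel--Cantelli, this gives almost sure uniform convergence of the series on $[0,\pi]$; a uniform limit of continuous functions is continuous, so the sum is almost surely continuous. Alternatively, one can invoke the It\^o--Nisio theorem, which directly gives almost sure uniform convergence of $\sum_n a_n \int_0^\cdot \phi_n$ for any complete orthonormal system $\{\phi_n\}$ of $L^2[0,\pi]$. With (i), (ii), (iii) established, $B$ satisfies the three defining properties of Brownian motion on $[0,\pi]$.
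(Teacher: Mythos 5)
The paper does not prove this statement at all --- it is quoted as a classical result of Wiener with a citation to \cite{W23}, so there is no in-paper argument to compare against. Your proof is correct and is essentially the standard modern construction. The Hilbert-space bookkeeping is right: $\{\pi^{-1/2}\}\cup\{\sqrt{2/\pi}\cos(k\cdot)\}_{k\ge1}$ is a complete orthonormal system of $L^{2}[0,\pi]$, its term-by-term primitives are exactly the functions appearing in the series, and Parseval applied to $\chi_{[0,s]},\chi_{[0,t]}$ gives the covariance $\min(s,t)$, from which mean-zero Gaussian increments of variance $h$ and independence of disjoint increments (uncorrelated jointly Gaussian) follow as you say. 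The only genuinely delicate point is almost sure uniform convergence, and your dyadic-block argument is the classical Paley--Zygmund/Kahane route: the pointwise variance of a block is $O(2^{-j})$, Bernstein's inequality reduces the sup over $[0,\pi]$ to a max over $O(2^{j})$ points, the Gaussian maximal inequality gives $\E\sup_{t}|D_{j}(t)|=O(\sqrt{j}\,2^{-j/2})$, and summability plus a L\'evy-type maximal inequality (to control partial sums inside a block) yields a.s.\ uniform convergence; invoking It\^o--Nisio is an equally valid shortcut. One cosmetic remark: the $L^{2}(\Omega)$ convergence of $B(t)$ for fixed $t$ needs only $\sum_{n}\Phi_{n}(t)^{2}=t<\infty$, so the Gaussianity claims in (ii) and (iii) do not actually depend on granting (i); stating this would cleanly decouple the distributional identities from the pathwise regularity.
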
 
\noindent The Brownian motion on $\R^{+}$ is obtained by concatenating independent Brownian motions on intervals $[k\pi, (k+1)\pi]$. Discarding the drift term in the Wiener series, one obtains the Brownian bridge, which represents a Brownian path with the values at the start and end point fixed to $0$. The Fourier series representation of the Brownian motion will also be used for the quantum simulation algorithm. 

More general stochastic processes can be defined as stochastic integrals over the Brownian motion. The stochastic integral $\int^{t}_{0} f(s) dB_{s}=  \lim_{N \to \infty} \sum_{i \in [N]} f(t_{i}) (B(t_{i}) - B(t_{t-1}) $ is defined as the limit over equal subdivisions $t_{i}, i \in [N]$ of the interval $[0,t]$ of the sum $\sum_{i \in [N]} f(t_{i}) (B(t_{i}) - B(t_{t-1})$. An important class of processes defined as stochastic integrals over Brownian motion is the fractional Brownian motion (fBM), a one parameter extension of Brownian motion for a Hurst parameter $H\in [0,1]$. The fBM with Hurst parameter $H=1/2$ corresponds to standard Brownian motion.  The fractional Brownian motion was first discussed by L\'{e}vy \cite{L53} as an integral over the standard Brownian motion.

\begin{definition} \label{levyfBM} 
The fractional Brownian motion with Hurst exponent $H \in [0, 1]$ is defined to be the stochastic process, 
\all{ 
fBM_{H} (t) := \int^{t}_{0} ( t- s)^{H-0.5} dB_{s} 
} {eq2} 
\end{definition} 
\noindent   Mandelbrot and Van Ness \cite{MV68} provided an origin independent fractional Brownian motion given by the 
Weyl integral, this definition as well can be written in integral form as  $\int^{t}_{0} K_{H}(s-t) dB_{s}$ for a Kernel function depending only on the difference $(s-t)$. The definition of fBM used for spectral simulation is that as a fractional integral of the white noise, the BM in turn can be viewed as  
integral of the white noise. 

Stochastic processes that can written as a linear combination of a stochastic integral over Brownian motion and a stochastic integral over time are called It\^{o} processes. An It\^{o} process has a representation of the form, 
\al{
X_{t} = X_{0} + \int \sigma_{s} dB_{s} + \int \mu_{t} dt 
} 
In addition to fractional Brownian motion, we also develop quantum simulation methods for generating trajectories of It\^{o} and L\'{e}vy processes 
that can be represented as stochastic integrals in section \ref{sec:integrals}. 

\subsection{Quantum computing preliminaries} \label{p2} 
We introduce in this section the quantum Fourier transform and logarithmic depth state preparation circuits that are components of the quantum stochastic process simulation 
algorithm. The quantum Fourier transform is defined as follows, 
\begin{definition} 
The quantum Fourier transform ($\QFT$) is an $N$ dimensional unitary matrix $U$ with entries given by $(U)_{jk} = e^{2\pi i jk/N} = \omega^{jk}$ for $0\leq j, k \leq n$, where $\omega = e^{2 \pi i/N}$ is an $N$-th root of unity. 
\end{definition} 
\noindent The real and the imaginary part of the the quantum Fourier transform are known as the discrete cosine transform (DCT) and the discrete sine transforms (DST) respectively. It is well known that the quantum Fourier transform (QFT) can be implemented as a logarithmic depth circuit, an explicit implementation using Hadamard and phase gates is provided in Appendix \ref{dst}. It also follows that the logarithmic depth circuit for the QFT can be used to implement the discrete sine and cosine transforms. 

The second quantum computing primitive that we use are the logarithmic depth state preparation circuits in quantum machine learning termed as data loaders \cite{J21}. The data loader circuit is a parametrized circuit that prepares the amplitude encoding $\ket{x}$ for a vector $x \in \R^{n}$. The data loader circuits are composed of recursive beam splitter (RBS) gates that are two qubit gates given as, 
\begin{equation} 
RBS(\theta) =  \begin{pmatrix} 
&1  &0 & 0 & 0 \\  
&0  & \cos(\theta) &\sin(\theta)  &0 \\
&0 & -\sin(\theta)   & \cos(\theta)   & 0 \\
&0 & 0 & 0     & 1 
\end{pmatrix}. 
\label{btheta} 
\end{equation}  

The logarithmic depth data loader circuit is illustrated in Figure \ref{fig1}, it outputs the state $\ket{x}$ on input $\ket{0^{n}}$. The data loader can be viewed as a circuit based realization of binary tree data structure for state preparation. 
The angles for the beam splitter gates in the data loader are determined by the vector $x$ that is being prepared by the circuit and are deterministic functions for quantum machine learning applications.

\begin{figure}[H] 
	\centering
	\includegraphics[scale = 0.6]{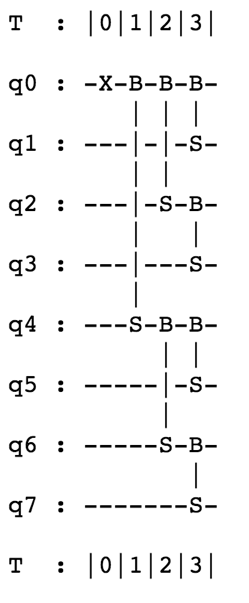}
	\caption{Quantum circuit for the 8 qubit unary data loader implemented using RBS gates \eqref{btheta}, labelled as B-S in the figure. } \label{fig1} 
\end{figure}

For the quantum simulation of stochastic processes, the data loader is used with stochastic input, that is the input vector $x$ is not fixed but drawn from a distribution over the unit sphere. The angles in the data loader circuit are thus drawn from a specific distribution, the explicit calculation of the angle distribution for the Haar random vector on the unit sphere will be an important part of the quantum algorithm for simulating Brownian motion trajectories. 

More explicitly, the angles for the data loader for vector $x \in \R^{n}$ are computed using a binary heap data structure where each node stores the sum of squares of the values in its subtree and an angle $\theta$. 
Denoting the value stored at node $j$ by $r(j)$ and, the angle $\theta$ is given by $\theta= arccos(\sqrt{\frac{r(2j)}{r(j)}})$ where $r(2j)$ is the sum of squares of the values stored in the left subtree for node $j$, 
that is $\cos^{2}(\theta) = \frac{r(2j)}{r(j)}$ and $\sin^{2}(\theta) = \frac{r(2j+1)}{r(j)}$. This description is useful for computing the distribution on the data loader angles for a Gaussian random vector. 

The data loader circuit produces a unary encoding for vector $x$ using $n$ qubits. It is further possible to convert the unary encoding into a binary encoding with a quantum circuit having depth poly-logarithmic in the dimension of the vector. The circuit for the unary to binary conversion is given in appendix \ref{app:c}.

\section{Quantum encodings of stochastic processes} 
A discrete-time stochastic process $S(T)$ is a description of a probability distribution over values $v_1, v_2,\ldots v_T$ of a particular quantity at specified times $1,2,\ldots T$. 
The differences between successive values $v_{i+1}-v_i$ are referred to as the increments of the processes, which might be dependent on one another. A \emph{quantum simulation} of a stochastic processes is a procedure to prepare quantum state $\psi$ representing the stochastic process. 

The complexity of this task depends on the quantum representation of the stochastic process. We recall first the commonly used digital quantum encoding for a stochastic process and 
then introduce two different types of analog encodings. 
\begin{definition}
A digital representation for a quantum stochastic processes is defined as the state, 
\[\displaystyle\sum_{v_1,v_2,\ldots v_T} \sqrt{p_{v_1,v_2,\ldots v_T}} \ket{v_1,v_2,\ldots v_T} \ket{g} \]
where $p_{v_1,v_2,\ldots v_T}$ is the probability the process takes values $v_1,v_2,\ldots v_T$ while $\ket{g}$ is a garbage state entangled with the values, and the sum is taken over all possible values of the tuples $v_1\ldots v_T$.
\end{definition} 
The registers in the digital encoding store the entire trajectory of the process (i.e. $(v_1,v_2,\ldots v_T)$), while the amplitude stores the probability that trajectory occurs.
If one traces out the garbage qubits, the diagonal entries of the reduced density matrix are precisely the probability distribution of the stochastic process.

We call this the \emph{digital representation} of the stochastic process, because  it meshes well with classical digital simulation algorithms.
Namely, if one has a classical algorithm to sample from $v_1,v_2,\ldots v_T$ in time $f(T)$, then it immediately implies a quantum algorithm to produce the digital representation in time $O(f(T))$ -- simply by compiling the algorithm down to Toffoli gates, and replacing its coin flips with $\ket{+}$ states. There is no quantum \emph{slowdown} for the digital simulation task, but to the best of our knowledge, nor are there any quantum speedups.

In this work we ask if quantum computation might admit faster algorithms for stochastic simulation tasks. 
We begin by introducing the \emph{analog} representation where the values of the stochastic process are stored in the amplitudes. 
\begin{definition} \label{d2} 
The analog encoding of a single trajectory $ (v_1,v_2,\ldots v_T)$ of the stochastic process $S(T)$ is the quantum state, 
\[ \ \ket{\psi_{v_1,v_2,\ldots v_T}}=\left(\frac{1}{\norm{v}}\displaystyle\sum_{i=1}^T v_{i}\ket{i}\ket{0}\right). \]
An $\epsilon$-approximate encoding of the trajectory of the stochastic process trajectory is a state such that $\norm { \ket{\psi'_{v_1,v_2,\ldots v_T}} - \ket{\psi_{v_1,v_2,\ldots v_T}} }^{2}  \leq \epsilon$. 
\end{definition}

\noindent Here the value of the process is encoded in the \emph{amplitude} of the quantum state and the ket stores the time. This representation of the stochastic process manifestly takes advantage of the exponential nature of quantum states -- as only $O(\log T)$ qubits are required to represent a $T$-timestep process. It is an analog representation as the values are stored in the amplitude of the state, rather than digitally in the value of the ket.

Preparing the analog encoding of a single trajectory of $S(T)$ is equivalent to the task of preparing copies of a density matrix $\rho= \mathbb{E}_{v_1,v_2,\ldots v_T} \left[ \ket{\psi_{v_1,v_2,\ldots v_T}}\bra{\psi_{v_1,v_2,\ldots v_T}}\right] $. Such a density matrix represents a single trajectory of the stochastic process sampled according to the correct probabilities.  
However, for quantum Monte Carlo methods to estimate a function of a stochastic process, a stronger notion of coherent analog encodings for $S(T)$ is required. 

\begin{definition}
The coherent analog representation of the stochastic process $S(T)$ is a superposition analog representation of the corresponding trajectories, 
\[ \ket{S(T)}= \displaystyle\sum_{v_1,v_2,\ldots v_T} \sqrt{p_{v_1,v_2,\ldots v_T}} \ket{\psi_{v_1,v_2,\ldots v_T}} \ket{g}  \]
Additionally, there is a garbage register $\ket{g}$ that encodes the randomness used to generate the corresponding trajectory.  An $\epsilon$ approximate analog encoding for $\ket{S_{\epsilon}(T)}$ is a superposition over $\epsilon$ approximate trajectories  $\sqrt{p_{v_1,v_2,\ldots v_T}} \ket{\psi'_{v_1,v_2,\ldots v_T}}$ with $\psi'_{v_1,v_2,\ldots v_T}$ being $\epsilon$ approximate trajectories as in definition \ref{d2}. 
\end{definition}

\section{Quantum simulation of Brownian Motion} 
The stochastic Fourier series representation of the Brownian motion (Theorem \ref{wiener}) provides a classical algorithm with complexity $O(T \log T)$ for simulating a Brownian path over $T$ steps using the fast Fourier transform. It also suggests a quantum algorithm for generating analog encodings of Brownian trajectories on a quantum computer. The quantum algorithm first prepares the state $\ket{W} = \sum_{i \in [L]} \frac{a_{i}}{i} \ket{i}$ obtained by truncating the Wiener series to a fixed number of terms $L$ using a data loader circuit and then applies the quantum Fourier transform circuit.

The number of terms $L$ required to obtain an approximate representation of the Brownian path are much smaller than the number of time steps, for example taking $L=200$ terms in the series leads to an $\ell_{2}$-norm error of $0.3$ percent. The quantum simulation algorithms use $O(L)$ gates and have complexity poly-logarithmic in $T$, and achieve a speedup over the classical simulator in the regime $T \gg L$,

We next describe an optimized algorithm for preparing analog encodings of Brownian motion with circuit depth $O(\log L) + \log (T))$, this algorithm will be used as a subroutine for generating the coherent analog encoding for fractional Brownian motions. 
\begin{theorem} \label{thm:BM} 
There is a quantum algorithm for generating $\epsilon$-approximate analog encodings of Brownian motion trajectories on $T$ time steps that requires $O(L + \log T)$ qubits, has circuit depth $\widetilde{O}( \log L + \log T)$ and gate complexity $O(L + polylog(T))$ for $L=O(1/\epsilon)$. 
\end{theorem}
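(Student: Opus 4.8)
The plan is to realize the Wiener-series quantum algorithm sketched above and carefully account for the three resources (qubits, depth, gate count), with the main work being the coherent preparation of the truncated Wiener state $\ket{W}\propto\sum_{i\in[L]}\frac{a_i}{i}\ket{i}$ in superposition over the Gaussian vector $\vec a$. First I would fix the truncation level $L$: since the tail of Wiener's series contributes variance $\sum_{k>L}\frac{2}{\pi k^2}=O(1/L)$ to each coordinate of $B(t)$, truncating after $L$ terms perturbs the trajectory (in expected squared $\ell_2$-norm, after normalization) by $O(1/L)$, so $L=O(1/\epsilon)$ suffices for an $\epsilon$-approximate analog encoding; I would state this as a lemma and prove it by a direct second-moment estimate on the dropped frequencies.

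Second, I would construct the coherent loading of $\ket{W}$ using the ``data-loading the data loader'' strategy from the proof sketch. Start by preparing a state whose amplitudes are i.i.d.\ standard Gaussians on $L$ registers; the point (to be proven as a lemma) is that for a Haar-random unit vector — equivalently, a normalized i.i.d.\ Gaussian vector — the beam-splitter angles in the binary-tree data loader of \cite{J21} are \emph{independent} across nodes, with node $j$ at depth $d$ having $\cos^2\theta_j=r(2j)/r(j)$ distributed as a $\mathrm{Beta}$ law with parameters determined only by the subtree sizes (this follows from the fact that sums of squares of $k$ i.i.d.\ Gaussians are $\gamma(k/2)$ and that ratios of independent Gammas are Beta, together with rotational invariance). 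Hence I can load each angle register independently using a small (depth $\widetilde O(\log L)$, gate count $O(L)$) circuit of single-register state preparations, feed the resulting superposition of angle tuples into the RBS network of Figure \ref{fig1}, and trace out the angle registers to obtain the coherent i.i.d.\ Gaussian state. Then I apply the reversible-addition trick: prepare $\sum_k \frac1k\ket{k}$ alongside, add the index registers mod $L$ into an ancilla, measure the ancilla to get $l$, and use permutation-invariance of the i.i.d.\ Gaussian vector to conclude that the post-measurement amplitudes $b_{k'+l}/k'$ are distributed exactly as independent Gaussians with variances $1/k'^2$, i.e.\ the coherent analog encoding of $\ket{W}$, with the measured $l$ and the angle registers playing the role of the garbage $\ket{g}$.

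Third, I would apply the Discrete Sine Transform (the imaginary part of the QFT, implementable in depth $O(\log L)$ via the construction of Appendix \ref{dst}) to the frequency-space state to obtain the analog encoding of $B(t)$ on $L$ time-samples, then pad/interpolate and re-index to $T$ timesteps — the padding-to-$T$ step only touches the $\ket{i}$ register and costs an additional $O(\log T)$ qubits and depth, plus the unary-to-binary conversion of Appendix \ref{app:c} in $\polylog$ depth. Finally I would add up the resource counts: $O(L)$ qubits for the unary loader and angle registers plus $O(\log T)$ for the time index; depth $\widetilde O(\log L + \log T)$ dominated by the loader, the DST, and the converters; and gate count $O(L+\polylog(T))$. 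Substituting $L=O(1/\epsilon)$ gives the claimed bounds.

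The step I expect to be the main obstacle is proving the \emph{exact independence and closed form} of the data-loader angle distribution for the i.i.d.\ Gaussian (equivalently Haar) vector — one must check that conditioning on the angles already fixed at shallower nodes leaves the relevant subtree-mass ratio still Beta-distributed with the right parameters, which requires invoking the neutrality/aggregation property of the Dirichlet distribution (the Gamma-representation of Dirichlet makes this work but the bookkeeping over the binary tree is delicate). A secondary subtlety is making the ``coherent'' guarantee precise: one must verify that the measurement of the mod-$L$ ancilla does not disturb coherence in the sense required for amplitude estimation — i.e.\ that there is a \emph{unitary} (deffered-measurement) implementation whose garbage register depends only on the randomness and not on later processing — and that the $\epsilon$-approximation from truncation composes correctly through the DST and padding. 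The remaining estimates (tail bound, depth accounting) are routine.
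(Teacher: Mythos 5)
Your proposal follows essentially the same route as the paper: truncate the Wiener series at $L=O(1/\epsilon)$ via the $\sum_{k>L}1/k^2=O(1/L)$ tail bound, prepare the i.i.d.\ Gaussian state by exploiting the independence and Beta/Gamma closed form of the data-loader angles for a Haar-random vector, inject the $1/k$ decay by the reversible-addition/permutation-invariance trick (the paper uses $k\mapsto k\oplus j$ via CNOTs rather than addition mod $L$, but either is a permutation so the argument is identical), convert unary to binary, and finish with the DST; the two obstacles you flag (independence of the angles across tree levels, and coherence through the ancilla measurement) are exactly the points the paper addresses in its Lemmas on the angle distributions and in the correctness lemma. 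The one place your write-up departs from what actually works is the order of the last two steps: you should zero-pad the $\log L$-qubit frequency register up to $\log T$ qubits \emph{before} applying the $T$-dimensional DST (as the paper does), rather than applying an $L$-dimensional DST and then ``padding/interpolating'' in the time domain, since upsampling an $L$-sample trajectory to $T$ samples is not a padding operation; with that reordering the resource counts you give go through unchanged.
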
 
\noindent The poly-logarithmic gate complexity of the quantum Fourier transform further leads to the possibility of a potentially significant quantum speedup in for generating analog encodings of Brownian motion trajectories
 over $T$ time steps $T$. A classical algorithm that writes down the Brownian motion trajectory over $T$ time steps has complexity $O(T)$. 

The quantum algorithm for simulating Brownian paths is presented as Algorithm \ref{Brownian}. We describe the implementation of the different steps and then establish correctness of the algorithm.

\begin{algorithm} [H]
\begin{algorithmic}[1]
\REQUIRE Parameters $(L,T)$ where $L$ is the number of terms retained in the Wiener series and $T$ is the number of steps in the Brownian path. Both $L$ and $T$ are assumed to be powers of $2$. \\
Parameters $Z_{i}, 1\leq i\leq 3$ in the algorithm are appropriate normalizing factors. 
\ENSURE A quantum state representing the Brownian bridge $\ket{B_{L} } = \frac{1}{Z_{0}} \sum_{i \in [T]} B_{L} (i \pi /T) \ket{i}$ over $T$ time steps obtained by truncating the Wiener series to $L$ terms. 
\STATE  Prepare state $\ket{K}  = \frac{1}{Z_{1}} \sum_{k \in [L]} \frac{1}{k} \ket{k}$ using the unary data loader circuit. 
\STATE  Prepare state $\ket{R}  = \frac{1}{ Z_{2}} \sum_{i \in [L]} a_{i} \ket{i}$ where $a_{i}$ are independent $N(0,1)$ random variables using Lemma \ref{lem:gaussian}. 
\STATE  Starting with $\ket{K} \ket{R} = \frac{1}{ Z_{1}Z_{2}} \sum_{i, k \in [L] } \frac{a_{i}}{k} \ket{i, k}$, apply CNOT gates with first register as control to compute $(i \oplus k)$ in register 2 and measure register 2 in the standard basis to obtain 
$\frac{1}{ Z_{3}} \sum_{k \in [L] } \frac{ a_{k \oplus j } }{ k} \ket{k}  \ket{j}$. 
\STATE Trace out second register and apply a logarithmic depth unary to binary converter in Appendix \ref{app:c} to reduce the number of qubits to $O(\log L)$. Subsequently, pad with $0$ qubits so that the total number of qubits is $\log(T)$. 
\STATE The discrete sine transform matrix is the unitary matrix $U \in \R^{T \times T}$ with $U_{ij}= \sin( \pi i j/T)$. Apply the quantum circuit for $U$ from Corollary \ref{c0} to obtain $\ket{B_{L} } = \frac{1}{Z_{3}} \sum_{i \in [T]} B_{L} (i\pi/T) \ket{i}$. 
\end{algorithmic}
\caption{Quantum algorithm for generating analog encodings of Brownian paths.} 
\label{Brownian} 
\end{algorithm}
An efficient procedure for generating the random Gaussian state required for step 2 of algorithm \ref{Brownian} is described in Section \ref{rgs}, this procedure uses the data loader circuit but with angles drawn from a certain  
distribution to ensure that the vector prepared is the random Gaussian state. The discrete sine transform matrix can be implemented as a quantum circuit with depth $\widetilde{O}(\log T)$ as shown in appendix \ref{dst}. The total circuit depth for the algorithm is therefore $\widetilde{O}(\log L + \log T)$, the number of qubits used is $O(L+ \log T)$ and the gate complexity is $O(L + polylog(T))$ as claimed. The correctness of the Algorithm \ref{Brownian} is established in Lemma \ref{correct}. 
The dependence of the $\ell_{2}$ approximation error on the number of terms $L$ retained in the Wiener series is examined in Section \ref{error}.

\subsection{Gaussian state preparation} \label{rgs} 
In order to prepare the Gaussian state in step 2 of the algorithm, we need to compute the distribution on the angles for a unary data loader for a uniformly random unit vector in $S^{n}$ according to the Haar measure. Recall that a Haar random unit vector is obtained by choosing i.i.d. N(0,1) 
Gaussian random variables for the coordinates and rescaling to unit norm. Further, we show that for Haar random vectors, the angle distributions for the different angles in the data loader are independent and that these distributions can be specified exactly in terms of the gamma and beta distributions. We begin by recalling some of the useful properties of the beta and gamma distributions and then establishing the independence of the angle distributions for a uniformly random vector and explicitly computing the angle distributions. 

\begin{definition} 
The Gamma distribution $\gamma(a)$ has support $\R^{+}$, it is parametrized by $a>0$ and has cumulative distribution function, 
\al{ 
Pr [ X \leq x ] =  \frac{1}{ \Gamma(a)} \int^{x}_{0} t^{a-1} e^{-t} dt. 
} 
$\Gamma(a)$ is the Gamma function. 
 \end{definition} 
 \noindent The Beta distribution can be defined in terms of the Gamma distribution, 
 \begin{definition} \label{d1} 
 The Beta distribution $\beta(a,b)$ distribution is defined as $\frac{Y_{1}}{ Y_{1} + Y_{2} }$ where $Y_{1} \sim \gamma(a), Y_{2}\sim \gamma(b)$. 
 The density function for the $\beta(a,b)$ distribution is $\frac{\Gamma(a)\Gamma(b)}{\Gamma(a+b)} x^{\alpha-1}(1-x)^{\beta-1}$. 
 \end{definition} 
 A sum of squares of $k$ identically distributed $N(0,1)$ Gaussian random variables can be expressed in terms of the Gamma distribution.

\begin{proposition} \label{c1}
The sum of squares $\sum_{i} X_{i}^{2}/2$ where $X_{i}$ are $k$ independent Gaussian random variables has distribution $\gamma(k/2)$. 
\end{proposition} 
\noindent This fundamental fact underlies the 
 computation of the distribution for the data loader angles, a proof is provided in Appendix \ref{gamma} for completeness. 
 In addition to the above fact, we require a lemma that computes the distribution of $\theta$ if $\sin^{2}(\theta)$ has a $\beta(a, b)$ distribution. 
\begin{Lemma} \label{l6} 
If $\sin^{2}(\theta)$ is distributed according to $\beta(a,b)$, then $\theta$ has probability density function $F(t)=  \frac{\Gamma(a)\Gamma(b)}{\Gamma(a+b)}  \sin^{2a-1}(t) \cos^{2b-1}(t)$.  
\end{Lemma}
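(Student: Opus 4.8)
The plan is to obtain the law of $\theta$ from the known law of $X := \sin^2\theta$ by a one-dimensional change of variables. First I would fix the branch: the map $t \mapsto \sin^2 t$ is a smooth strictly increasing bijection from $[0,\pi/2]$ onto $[0,1]$ (which is the natural range for a data-loader angle, since $\cos^2\theta = r(2j)/r(j) \in [0,1]$), so I take $\theta \in [0,\pi/2]$ and write $\theta = \arcsin(\sqrt{X})$. By Definition \ref{d1} the density of $X \sim \beta(a,b)$ has the form $f_X(x) = c_{a,b}\, x^{a-1}(1-x)^{b-1}$ on $(0,1)$ for the appropriate beta normalizing constant $c_{a,b}$.

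Next I would apply the transformation rule for densities. With the substitution $x = \sin^2 t$ one has $\tfrac{dx}{dt} = 2\sin t\cos t \ge 0$ on $[0,\pi/2]$, hence
\als{
F(t) \;=\; f_X(\sin^2 t)\,\Big|\tfrac{dx}{dt}\Big| \;=\; c_{a,b}\,(\sin^2 t)^{a-1}\,(1-\sin^2 t)^{b-1}\,2\sin t\cos t \;=\; 2\,c_{a,b}\,\sin^{2a-1}(t)\,\cos^{2b-1}(t),
}
where I used $1-\sin^2 t = \cos^2 t$. This already gives the claimed functional form $F(t)\propto \sin^{2a-1}(t)\cos^{2b-1}(t)$ on $[0,\pi/2]$; the overall constant is exactly the one that enforces $\int_0^{\pi/2}F(t)\,dt = 1$, which is the classical beta-function identity $\int_0^{\pi/2}\sin^{2a-1}(t)\cos^{2b-1}(t)\,dt = \tfrac{1}{2}\,\tfrac{\Gamma(a)\Gamma(b)}{\Gamma(a+b)}$. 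One can either quote this identity or re-derive it in place by substituting $u = \sin^2 t$, which turns the integral back into $\int_0^1 u^{a-1}(1-u)^{b-1}\,du$.

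There is essentially no real obstacle: the computation is two lines once Definition \ref{d1} is in hand. The only things to be careful about are (i) restricting to a branch on which $\sin^2$ is injective — on $[0,\pi]$ each value of $x$ has two preimages and one would pick up an extra factor of $2$ — and (ii) not dropping the Jacobian factor $2\sin t\cos t$, which is precisely what shifts the exponents from $2a-2,\,2b-2$ to $2a-1,\,2b-1$. I would present this as a short lemma-proof, since all the substance is already contained in the beta/gamma preliminaries (Definition \ref{d1}, Proposition \ref{c1}).
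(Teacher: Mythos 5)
Your proof is correct and follows essentially the same route as the paper's: restrict to the branch $[0,\pi/2]$ where $\sin^2$ is monotone and perform the change of variables $x=\sin^2 t$ with Jacobian $2\sin t\cos t$ (the paper phrases it via the CDF, you via the density transformation rule, which is the same computation). Your version is in fact slightly more careful about the normalizing constant: the correctly normalized density is $\frac{2\,\Gamma(a+b)}{\Gamma(a)\Gamma(b)}\sin^{2a-1}(t)\cos^{2b-1}(t)$, whereas the paper's statement inherits an inverted Gamma-ratio from its Definition of the beta density and drops the factor of $2$ from the Jacobian; this does not affect how the lemma is used downstream, since only the functional form of the angle distribution matters there.
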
 
\begin{proof} 
As $\sin^{2}(\theta)$ is distributed according to $\beta(a,b)$, we have that $\Pr[\sin^{2}(\theta) \leq  t] = \frac{\Gamma(a)\Gamma(b)}{\Gamma(a+b)} \int^{t}_{0} x^{a-1}(1-x)^{b-1} dx$.  
The function $\sin^{2}(\theta)$ is monotone on the interval $[0, \pi/2]$, 
\al{ 
\Pr[\theta \leq t] = \Pr[\sin^{2}(\theta)<\sin^{2}(t)] = \frac{\Gamma(a)\Gamma(b)}{\Gamma(a+b)} \int^{\sin^{2}(t)}_{0} x^{\alpha-1}(1-x)^{\beta-1} dx.  
} 
Substituting $x= \sin^{2}(z)$ so that $dx= 2\sin(z) \cos(z) dz$, the integral above reduces to, 
\al{ 
\frac{\Gamma(a)\Gamma(b)}{\Gamma(a+b)} \int^{t}_{0} \sin^{2a-1}(z) \cos^{2b-1}(z) dz.  
} 
It follows that the density function for $\theta$ is $F(t) = \frac{\Gamma(a)\Gamma(b)}{\Gamma(a+b)} \sin^{2a-1}(t) \cos^{2b-1}(t)$ as claimed. 
\end{proof} 
 
 \noindent Using the above probabilistic facts, we are now ready to compute the distribution of the data loader angles for a vector with coordinates given by i.i.d. Gaussian random variables.

\begin{Lemma} \label{l9} 
If the vector $x\in \R^{n}$ stored in the binary data loader is uniformly random, the angle $\theta$ at node of height $h$ has probability density function 
$\frac{\Gamma(2^{h-2})^{2}}{\Gamma(2^{h-1})} \sin^{2^{h-1}-1}(t) \cos^{2^{h-1}-1}(t)$. 
\end{Lemma}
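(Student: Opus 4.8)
The plan is to compute the data-loader angle distribution recursively from the leaves up, using the two probabilistic facts already established: Proposition \ref{c1} (sum of squares of $k$ i.i.d.\ $N(0,1)$ variables is $\gamma(k/2)$, after dividing by $2$) and Lemma \ref{l6} (if $\sin^2\theta\sim\beta(a,b)$ then $\theta$ has density $\frac{\Gamma(a)\Gamma(b)}{\Gamma(a+b)}\sin^{2a-1}(t)\cos^{2b-1}(t)$). First I would fix the indexing convention: in a binary tree over $n=2^m$ leaves, a node at height $h$ has a subtree containing $2^h$ leaves, and its two children each subtend $2^{h-1}$ leaves. The value $r(j)$ stored at that node is the sum of squares of the $2^h$ i.i.d.\ Gaussian coordinates in its subtree, and the angle there satisfies $\cos^2\theta = r(2j)/r(j)$, $\sin^2\theta = r(2j+1)/r(j)$, where $r(2j),r(2j+1)$ are the sums of squares over the left and right child subtrees respectively.

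The key step is to identify the distribution of $\sin^2\theta$ at a height-$h$ node. Write $Y_1 = \tfrac12 r(2j)$ and $Y_2 = \tfrac12 r(2j+1)$; these are sums of squares (halved) of $2^{h-1}$ i.i.d.\ $N(0,1)$ variables over disjoint coordinate blocks, hence by Proposition \ref{c1} they are independent with $Y_1, Y_2 \sim \gamma(2^{h-1}/2) = \gamma(2^{h-2})$. Then
\[
\sin^2\theta \;=\; \frac{r(2j+1)}{r(j)} \;=\; \frac{r(2j+1)}{r(2j)+r(2j+1)} \;=\; \frac{Y_2}{Y_1+Y_2},
\]
which by Definition \ref{d1} is exactly $\beta(2^{h-2}, 2^{h-2})$. (The factor of $2$ in Proposition \ref{c1} cancels in the ratio, and disjointness of the coordinate blocks gives the needed independence of $Y_1,Y_2$.) Applying Lemma \ref{l6} with $a=b=2^{h-2}$ gives the density
\[
F(t) \;=\; \frac{\Gamma(2^{h-2})\,\Gamma(2^{h-2})}{\Gamma(2^{h-2}+2^{h-2})}\,\sin^{2\cdot 2^{h-2}-1}(t)\,\cos^{2\cdot 2^{h-2}-1}(t) \;=\; \frac{\Gamma(2^{h-2})^2}{\Gamma(2^{h-1})}\,\sin^{2^{h-1}-1}(t)\,\cos^{2^{h-1}-1}(t),
\]
which is the claimed formula.

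I anticipate the main obstacle is the \emph{independence} claim implicit in the statement (the lemma asserts the angle at a height-$h$ node has this \emph{marginal}, but the surrounding discussion wants the angles across the whole tree to be mutually independent, which is what makes the loader circuit cheap). The clean way to handle this is the standard normalization/beta-gamma factorization: conditioned on the total mass $r(j)$ in a subtree, the normalized vector of coordinates restricted to that subtree is uniform on the sphere $S^{2^h-1}$ and is independent of $r(j)$; moreover the split ratio $r(2j+1)/r(j)$ depends only on directional information at the current node and is independent of the finer directional information inside each child subtree, which determines the descendant angles. Thus by induction down the tree all angles are jointly independent with the marginals computed above. A secondary, purely bookkeeping, point is to double-check the base case (height $h=1$, i.e.\ a node with two leaf children: $Y_1,Y_2\sim\gamma(1/2)$, $\sin^2\theta\sim\beta(1/2,1/2)$, density $\propto \sin^0 t\cos^0 t$ on $[0,\pi/2]$, i.e.\ the arcsine-type law — consistent with the formula at $h=1$) and to confirm the height/height-$h$ convention matches the one used later when these densities are loaded into registers.
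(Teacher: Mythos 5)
Your proof takes essentially the same route as the paper's: identify the halved sums of squares over the two (disjoint) child subtrees as independent $\gamma(2^{h-2})$ variables via Proposition~\ref{c1}, conclude $\sin^2\theta\sim\beta(2^{h-2},2^{h-2})$ from Definition~\ref{d1}, and apply Lemma~\ref{l6}; the cross-node independence you sketch at the end is treated separately in the paper as Lemma~\ref{l10}, so it is not needed for this statement. The one point worth noting is that your height convention (a height-$h$ node subtends $2^{h}$ leaves) is the one under which the exponents actually come out as stated, whereas the paper's declared convention (leaves at height $1$) is off by one --- a bookkeeping discrepancy you correctly flagged.
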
 
\begin{proof} 
The binary data loader for vector $x \in \R^{n}$ uses a binary heap data structure where each node stores the sum of squares of the values in its subtree and an angle $\theta$. 
Denoting the value stored at node $j$ by $r(j)$ and, the angle $\theta$ is given by $\theta= arccos(\sqrt{\frac{r(2j)}{r(j)}})$ where $r(2j)$ is the sum of squares of the values stored in the left subtree for node $j$, 
that is $\cos^{2}(\theta) = \frac{r(2j)}{r(j)}$ and $\sin^{2}(\theta) = \frac{r(2j+1)}{r(j)}$. 

If $x \in \R^{n}$ is a uniformly random vector then the values at the leaf nodes have distribution $X_{i}^{2}$ where $X_{i}$ are independent $N(0,1)$ random variables. The sum of squares $\frac{1}{2} \sum_{i \in [k]} X_{i}^{2}$ for a $k$ independent N(0,1) random variables $X_{i}$ has distribution $\gamma(k/2)$ by Proposition \ref{c1}. It follows from the definition of the beta distribution \ref{d1} that for a node at height $h$ (with the convention that the leaf nodes are at height $1$), $\sin^{2}(\theta)=  \frac{r(2j+1)}{r(j)}$ 
has distribution $\beta(2^{h-2}, 2^{h-2})$. Applying Lemma \ref{l6}, the angle $\theta$ for a node at height $h$ has probability density function $\frac{\Gamma(2^{h-2})^{2}}{\Gamma(2^{h-1})} \sin^{2^{h-1}-1}(t) \cos^{2^{h-1}-1}(t)$. 
\end{proof} 
We computed the distribution of the angles at the nodes for a  for a uniformly random vector the angles at the nodes of the binary data loader. The next Lemma shows that these angles are in fact independent for uniformly random vectors. 
\begin{Lemma} \label{l10} 
If the vector $x\in \R^{n}$ stored in the binary data loader is a Haar random unit vector, the angles $\theta_{i}, \theta_{j}$ stored at different nodes $i, j$ are independent. 
\end{Lemma}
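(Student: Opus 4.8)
The plan is to exploit the same structural fact that underlies Lemma \ref{l9}: for a Haar-random unit vector $x\in\R^n$, one may take the coordinates to be i.i.d.\ $N(0,1)$ variables $X_1,\ldots,X_n$ (the subsequent normalization to the sphere is irrelevant, since every data-loader angle is a ratio of sums of squares and is therefore scale-invariant). With this representation, each leaf of the binary heap holds an independent $X_i^2$, and the value $r(j)$ at an internal node is the sum of the $X_i^2$ over the leaves in the subtree rooted at $j$. The key probabilistic tool is the classical ``Lukacs-type'' independence property of the Gamma family: if $Y_1\sim\gamma(a)$ and $Y_2\sim\gamma(b)$ are independent, then the ratio $Y_1/(Y_1+Y_2)$ is independent of the sum $Y_1+Y_2$. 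I would first record this as an auxiliary fact (it follows from the Jacobian change of variables $(Y_1,Y_2)\mapsto(Y_1+Y_2,\,Y_1/(Y_1+Y_2))$, which factors the joint density into a $\gamma(a+b)$ density times a $\beta(a,b)$ density).

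Next I would set up the recursion on the tree. Fix a node $j$ at height $h$; by Proposition \ref{c1}, $\tfrac12 r(j)\sim\gamma(2^{h-1})$, $\tfrac12 r(2j)\sim\gamma(2^{h-2})$, $\tfrac12 r(2j{+}1)\sim\gamma(2^{h-2})$, and $r(2j)$, $r(2j{+}1)$ are independent because they are sums over disjoint sets of leaves. The angle $\theta_j$ satisfies $\sin^2\theta_j=r(2j{+}1)/r(j)$, so by the Lukacs fact applied with $(a,b)=(2^{h-2},2^{h-2})$, the random variable $\theta_j$ is independent of $r(j)$. I would then argue that $\theta_j$ is in fact independent of the entire collection of $r$-values at node $j$ and everywhere outside the subtree rooted at $j$: the leaves outside the subtree are a disjoint block of independent Gaussians, and $r(j)$ together with the outside leaves determines all $r$-values that are not strictly inside the subtree; $\theta_j$ depends on the inside leaves only through the single ratio, which we have just shown is independent of $r(j)$ and a fortiori of the independent outside block.

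To finish, take two distinct nodes $i,j$. Either one is a descendant of the other, or their subtrees are disjoint. In the disjoint case, $\theta_i$ is a function of the leaves under $i$ and $\theta_j$ of the leaves under $j$, two independent blocks, so independence is immediate. In the ancestor case, say $i$ is an ancestor of $j$; then $\theta_j$ is measurable with respect to the leaves strictly inside the subtree at $i$'s child that contains $j$, while by the previous paragraph $\theta_i$ is independent of \emph{everything} inside the subtree rooted at $i$ — in particular independent of $\theta_j$. A clean way to package the general claim (mutual independence of all angles, not just pairwise) is to induct on the tree: condition on $\theta_{\mathrm{root}}$ and on the split of total mass to the two halves, observe that conditionally the two half-vectors are independent rescaled Haar vectors on their respective spheres, and invoke the inductive hypothesis on each subtree. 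The main obstacle is bookkeeping the conditional-independence structure carefully — specifically, making the step ``$\theta_i$ is independent of \emph{all} $r$-values within its subtree'' rigorous rather than just ``independent of $r(j)$'' — and the Lukacs independence lemma is exactly what makes that step go through.
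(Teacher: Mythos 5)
Your overall strategy matches the paper's: handle nodes with disjoint subtrees by disjointness of the underlying Gaussian blocks, and handle the ancestor--descendant case by showing that a node's angle decouples from the total mass of its block. The paper does the latter very tersely via scale-invariance (``$f(\theta_i\mid\theta_j)=f(\theta_i\mid r_{2j})=f(\theta_i)$''); your route through the beta--gamma independence fact (that $Y_1/(Y_1+Y_2)$ is independent of $Y_1+Y_2$ for independent gamma variables) makes that step rigorous and is the right tool. Your middle paragraph --- that $\theta_j$ is independent of the $\sigma$-algebra generated by $r(j)$ together with all leaves outside the subtree of $j$ --- is exactly what is needed, and is a genuine gain in precision over the paper's one-line claim.

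However, one sentence in your concluding paragraph is false as written: ``$\theta_i$ is independent of \emph{everything} inside the subtree rooted at $i$.'' Since $\theta_i=\arctan\sqrt{r(2i+1)/r(2i)}$ is a deterministic function of the leaves inside that subtree, it cannot be independent of them, and your middle paragraph does not assert this. The correct deduction reverses the roles: apply the middle paragraph to the \emph{descendant} $j$, concluding $\theta_j\perp\sigma\bigl(r(j),\ \text{leaves outside the subtree of } j\bigr)$; then observe that for any ancestor $i$ of $j$, both $r(2i)$ and $r(2i+1)$ are measurable with respect to that $\sigma$-algebra (each is either a sum over leaves disjoint from $j$'s subtree, or equals $r(j)$ plus such a sum), so $\theta_i$ is measurable with respect to it and hence independent of $\theta_j$. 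With that one sentence repaired the argument is complete; the inductive mutual-independence remark at the end is a bonus beyond the pairwise independence the lemma actually asserts.
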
 
\begin{proof} 
If the nodes $i,j$ are at the same height the angles are independent as they are functions of different i.i.d. Gaussian random variables. Without loss of generality let the $h(i)\leq h(j)$ where $h$ is the height of the nodes. If $j$ 
does not lie on the path from the root to $i$, then the angles at $i,j$ are independent as they are functions of different i.i.d. Gaussian random variables. It therefore suffices to show that for a given node $j$, the angles in the 
left sub-tree rooted at $j$ are independent of the angle at $j$. 

The angles for the binary data loader are independent of the $\norm{x}$, that is the angles are the same for $x$ and $cx$ for all $c \in \R$. The angle at node $j$ is a function ratio of the norms of the left and right subtrees, 
that is $\theta= arctan(\sqrt{ r_{2j+1}/r_{2j}})$. It follows that for $i$ in the left-subtree rooted at $j$, the density function $f(\theta_{i}|\theta_{j})= f(\theta_{i} | r_{2j})= f(\theta_{i})$ establishing the independence of $\theta_{i}$ and $\theta_{j}$.

\end{proof} 
We are now ready to provide the procedure for generating the random Gaussian state in step 2 of Algorithm \ref{Brownian}. 
\begin{Lemma} \label{lem:gaussian}
The quantum state $\ket{R}  = \frac{1}{ Z_{2}} \sum_{i \in [L]} a_{i} \ket{i}$ where $a_{i}$ are independent $N(0,1)$ random variables can be prepared using the binary data loader 
construction with independent angles $\theta_{i}$ at height $h$ distributed according to the density function $\frac{\Gamma(2^{h-2})^{2}}{\Gamma(2^{h-1})} \sin^{2^{h-1}-1}(t) \cos^{2^{h-1}-1}(t)$. 
\end{Lemma}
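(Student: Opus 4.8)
The plan is to deduce the statement directly from the angle-distribution facts already established --- Lemma~\ref{l9} (the marginal law of each angle) and Lemma~\ref{l10} (their mutual independence) --- by observing that the data loader's angle-to-state map is invertible, so a distributional identity proved ``forwards'' (unit vector $\mapsto$ angles) can be read ``backwards'' (angles $\mapsto$ unit vector). Concretely: since $L$ is a power of $2$, the binary heap underlying the loader of \cite{J21} is a complete tree, with one angle $\theta_j$ per internal node; on input $\vec\theta = (\theta_j)$ the circuit deterministically outputs the amplitude encoding $\ket{x(\vec\theta)}$ of a unit vector in $\R^L$, and conversely any unit vector $x$ determines its angles through $\cos^2\theta_j = r(2j)/r(j)$, $\sin^2\theta_j = r(2j+1)/r(j)$, where $r(\cdot)$ is the sum of squares of the entries of $x$ in the corresponding subtree. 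These two maps are mutual inverses on the full-measure set where no subtree norm $r(j)$ vanishes, with each $\theta_j$ confined to $[0,\pi/2]$; hence running the loader on a random $\vec\theta$ produces $\ket{x}$ with $x$ distributed as the pushforward of the law of $\vec\theta$, and conversely any law on $x$ pulls back to a law on $\vec\theta$.

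The steps, in order. First I would identify the target state: $\ket{R} = \frac{1}{Z_2}\sum_{i\in[L]} a_i\ket{i}$ with $a_i \sim N(0,1)$ i.i.d.\ is, as a (normalized) quantum state, exactly $a/\norm{a}$, which is a Haar-random unit vector on the sphere in $\R^L$ by the standard characterization of the uniform sphere measure. So it suffices to show that feeding the loader independent angles $\theta_j$ with, at height $h$, density $F_h(t) = \frac{\Gamma(2^{h-2})^2}{\Gamma(2^{h-1})}\sin^{2^{h-1}-1}(t)\cos^{2^{h-1}-1}(t)$ reproduces a Haar-random unit vector. Second, take $x = a/\norm{a}$: by Proposition~\ref{c1} the sum of squares of the entries of $a$ below each child of a height-$h$ node is $\gamma(2^{h-2})$-distributed, so by Definition~\ref{d1} $\sin^2\theta_j \sim \beta(2^{h-2},2^{h-2})$, and by Lemma~\ref{l6} the marginal of $\theta_j$ has density $F_h$ --- this is Lemma~\ref{l9}; Lemma~\ref{l10} gives that the $\theta_j$ are mutually independent. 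Thus the angle tuple induced by a Haar-random unit vector has law exactly $\prod_j F_{h(j)}$. Third, invert: because the angle-to-vector map is a bijection almost everywhere, the product distribution $\prod_j F_{h(j)}$ on $\vec\theta$ pushes forward to the Haar measure, i.e.\ to the law of $\ket{R}$, which is the claim.

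The one point that needs genuine care --- the ``hard part,'' though it is bookkeeping rather than a real obstruction --- is making the inversion rigorous: I would check that the set of unit vectors with some $r(j) = 0$ is a proper subspace and hence Haar-null, that on its complement the loader map is a diffeomorphism onto its image with angles in $[0,\pi/2]$ (so the map is onto up to a null set, consistent with the loaders being onto), and therefore that the change of variables identifying $\mathrm{law}(\vec\theta) = \prod_j F_{h(j)}$ with $\mathrm{law}(x) = \mathrm{Haar}$ is valid in both directions. I would close with the remark that independence of the $\theta_j$ is what makes the construction usable: each angle register can be prepared by its own small one-dimensional state-preparation subroutine (sampling from $F_{h(j)}$ coherently) and then fed into the loader circuit of \cite{J21}, with the gate/depth accounting deferred to the analysis of Algorithm~\ref{Brownian}.
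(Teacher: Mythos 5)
Your proposal is correct and follows essentially the same route as the paper, which simply invokes Lemma~\ref{l9} for the marginal angle densities and Lemma~\ref{l10} for their independence. The extra care you take in making the angle-to-vector inversion rigorous (the almost-everywhere bijection and the Haar-null exceptional set) is left implicit in the paper's one-line proof, but it is the same argument.
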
 
\begin{proof} 
The result follows from the distribution of the angles $\theta_{i}$ at height $h$ computed in Lemma \ref{l9} and the independence of the angles $\theta_{i}$ established in Lemma \ref{l10}.   
\end{proof}

\subsubsection{Correctness of the quantum algorithm} 

We have described the quantum circuits implementing the steps of Algorithm \ref{Brownian}, we are now ready to show that the algorithm produces the quantum states corresponding to superpositions 
over Brownian paths as claimed. 
\begin{Lemma} \label{correct} 
The output of Algorithm \ref{Brownian} is the quantum state  $\ket{B_{L} } = \frac{1}{Z_{3}} \sum_{i \in [T]} B_{L} (i \pi /T) \ket{i}$ representing the Brownian bridge with $T$ time steps obtained by truncating the Wiener series to $L$ terms. 
\end{Lemma}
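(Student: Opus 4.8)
The plan is to push the state through the five steps of Algorithm~\ref{Brownian}, so that the whole argument reduces to two clean facts: the reversible ``XOR and measure'' trick of Step~3 converts the product of the fixed weight state $\ket{K}$ with a Haar-random Gaussian state $\ket{R}$ into (a copy of) the truncated Wiener state $\ket{W}\propto\sum_{k\in[L]}\tfrac{a_k}{k}\ket{k}$ with the $a_k$ i.i.d.\ $N(0,1)$, and applying the discrete sine transform to $\ket{W}$ literally reproduces the truncated Wiener series of Theorem~\ref{wiener}. First I would record the outputs of Steps~1 and~2: the unary data loader prepares the fixed unit vector $\ket{K}\propto\sum_{k\in[L]}\tfrac1k\ket{k}$, and Lemma~\ref{lem:gaussian} prepares $\ket{R}\propto\sum_{i\in[L]}a_i\ket{i}$, which is a Haar-random unit vector. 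Their tensor product is $\propto\sum_{k,i\in[L]}\tfrac{a_i}{k}\ket{k}\ket{i}$.

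Next I would analyze Step~3. Applying the CNOT layer controlled on the first index register and targeting the second implements the in-place bijection $\ket{k}\ket{i}\mapsto\ket{k}\ket{k\oplus i}$ on the two index registers; here it is essential that $L$ is a power of two, so that $\oplus$ is the group operation on the index set $\{0,\dots,L-1\}$ and the map is invertible. Re-indexing the sum by $j:=k\oplus i$ rewrites the state as $\propto\sum_{k,j}\tfrac{a_{k\oplus j}}{k}\ket{k}\ket{j}$, and measuring the second register with outcome $j$ collapses the first register to $\tfrac1{Z_3}\sum_{k\in[L]}\tfrac{a_{k\oplus j}}{k}\ket{k}$. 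The key observation is that for every fixed $j$ the map $k\mapsto k\oplus j$ is a permutation of the index set, so $(a_{k\oplus j})_k$ is a relabeling of the i.i.d.\ sequence $(a_k)$ and is hence itself i.i.d.\ $N(0,1)$; thus the collapsed register is, in law, exactly $\ket{W}$. The delicate point here --- and the step I expect to be the main obstacle --- is to argue that conditioning on the measurement outcome does not distort this law: one must combine the exchangeability of $(a_k)$ with the fact that the total branch weight $\sum_{j}\sum_{k}a_{k\oplus j}^2/k^2=\|a\|^2\sum_k 1/k^2$ depends only on $\|a\|$, so that no particular outcome $j$ is correlated with the \emph{direction} of the Gaussian vector that survives into the amplitudes. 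Everything after this is bookkeeping: in Step~4 the second register is now in a product state and may be discarded, the unary-to-binary converter of Appendix~\ref{app:c} is norm-preserving and re-encodes $k$ on $\log L$ qubits, and zero-padding embeds the result into $\log T$ qubits as $\tfrac1Z\sum_{k=1}^{L}\tfrac{a_k}{k}\ket{k}$ with vanishing amplitude for $k>L$.

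Finally, Step~5 applies the (suitably normalized) unitary $U$ with $U_{ik}=\sin(\pi i k/T)$ supplied by Corollary~\ref{c0}, giving $U\ket{W}\propto\sum_{i\in[T]}\bigl(\sum_{k=1}^{L}\tfrac{a_k}{k}\sin(\pi i k/T)\bigr)\ket{i}$. Comparing with Theorem~\ref{wiener}: dropping the drift term $\sqrt{1/\pi}\,a_0 t$, the $L$-term truncation of the Brownian bridge on $[0,\pi]$ is $B_L(t)=\sqrt{2/\pi}\sum_{k=1}^{L}a_k\tfrac{\sin(kt)}{k}$, so $\sum_{k=1}^{L}\tfrac{a_k}{k}\sin(\pi i k/T)=\sqrt{\pi/2}\,B_L(i\pi/T)$ is a $k$-independent scalar multiple of the sampled path evaluated at $t_i=i\pi/T$. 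Absorbing all accumulated constants into the normalizer $Z_3$ then yields $U\ket{W}=\tfrac1{Z_3}\sum_{i\in[T]}B_L(i\pi/T)\ket{i}$, which is precisely the claimed output. The correctness of the ingredient circuits --- the unary data loader, the angle distribution of Lemma~\ref{lem:gaussian}, the unary-to-binary converter of Appendix~\ref{app:c}, and the DST circuit of Corollary~\ref{c0} --- is quoted from the earlier sections, so nothing further is needed there.
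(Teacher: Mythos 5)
Your proof follows the same route as the paper's: the XOR-and-measure step yields $\propto\sum_k (a_{k\oplus j}/k)\ket{k}$, permutation invariance of the i.i.d.\ Gaussian vector under $k\mapsto k\oplus j$ identifies this in law with the truncated Wiener state, and the DST then reproduces $B_L(i\pi/T)$ via Theorem~\ref{wiener}; the paper's proof is exactly this, minus your bookkeeping for the converter and padding.

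The one place you go beyond the paper is the ``delicate point'' about conditioning on the measurement outcome, and you are right that this is where the real content lies --- but the argument you sketch for it is a non sequitur. That the \emph{total} branch weight $\sum_j\sum_k a_{k\oplus j}^2/k^2=\|a\|^2\sum_k 1/k^2$ depends only on $\|a\|$ does not make the individual outcome probabilities direction-independent: $\Pr[J=j\mid a]\propto\sum_k a_{k\oplus j}^2/k^2$ manifestly depends on the direction of $a$ (for $a$ concentrated on coordinate $m$ it is proportional to $(m\oplus j)^{-2}$), so the outcome $J$ \emph{is} correlated with the surviving direction. Carrying the computation through --- use exchangeability to map every branch $j$ back to $j=0$ --- shows that the ensemble of post-measurement states is the truncated-Wiener ensemble reweighted by the factor $\bigl(\sum_k a_k^2/k^2\bigr)/\|a\|^2$, i.e.\ trajectories of larger $\ell_2$ norm are oversampled relative to the target density matrix $\E_a\bigl[\ket{\psi_a}\bra{\psi_a}\bigr]$. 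To be fair, the paper's own proof only establishes the per-outcome statement that $\vec{a}_j$ has the same distribution as $\vec{a}$ for each \emph{fixed} $j$ and never addresses the correlation between the random $J$ and $a$, so your write-up is no weaker than the original; but since you explicitly flagged this step as the main obstacle, you should either replace your justification with the reweighting computation (and qualify the conclusion accordingly) or retreat to the fixed-$j$ claim the paper actually proves.
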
 
\begin{proof} 
Algorithm \ref{Brownian} outputs the quantum state obtained by applying the discrete sine transform to the state $\frac{1}{ Z_{3}} \sum_{k \in [L] } \frac{ a_{k \oplus j} }{ k} \ket{k}  \ket{0^{\log T - \log L}}$ where $j$ is the 
result of the measurement in step $3$. The $L$ dimensional vector $\vec{a} = (a_{1}, a_{2}, \cdots, a_{L})$ has i.i.d. coordinates distributed according to $N(0,1)$. The i.i.d. property is preserved if an arbitrary permutation 
$\sigma \in S_{L}$ is applied to $\vec{a}$. 

For all $j \in [L]$ the mapping $a_{k} \to a_{k \oplus j}$ is a permutation as $k_{1} \oplus j = k_{2} \oplus j$ implies that $k_{1}= k_{2}$. The vector  $\vec{a_{j} } = (a_{1\oplus j}, a_{2 \oplus j}, \cdots, a_{L \oplus j})$
therefore has the same distribution as $\vec{a}$ for all $j$. By Wiener's Theorem, it follows that the discrete sine transform applied to $\frac{1}{ Z_{3}} \sum_{k \in [L] } \frac{ a_{k \oplus j} }{ k} \ket{k}  \ket{0^{\log T - \log L}}$ 
produces the state $\ket{B_{L} } = \frac{1}{Z_{3}} \sum_{i \in [T]} B_{L} (i \pi /T) \ket{i}$.

\end{proof} 

\subsection{Coherent analog encoding for Brownian motion} 
The coherent analog encoding for the Brownian motion is a superposition over the analog representation of the corresponding trajectories, along with the garbage register that encodes the randomness used for generating these trajectories, 
\[ \ket{S(T)}= \displaystyle\sum_{v_1,v_2,\ldots v_T} \sqrt{p_{v_1,v_2,\ldots v_T}} \ket{\psi_{v_1,v_2,\ldots v_T}} \ket{g}  \]
The algorithm for generating the coherent analog encoding for Brownian motion is a two step procedure, the first step creates the angle distributions in Lemma \label{lem:gaussian} such that angles from these distributions when used in a unary data loader generate a Haar random unit vector. As the angle distributions are independent, these distributions are created on independent registers using a unary data loader. After the first 
phase the following quantum state is obtained
\all{ 
 \sum_{\theta_{i}  }  \prod_{i \in [L-1]}  \sqrt{p(\theta_{i})} \ket{\theta_{1}, \theta_{2}, \cdots, \theta_{L-1} }.  
} {step1} 
The second step uses the angle registers to apply the rotations for a unary data loader in order to create the encoding of a $L$ dimensional uniformly random vector on a second register. More precisely, the angles 
$\theta_{i}$ are stored up to some qubits of precisions and controlled rotations conditioned on these bits are applied to get a uniformly random Gaussian vector on the second register. The algorithm can therefore 
be viewed as 'data loading the data loader', the first step generates the angle distributions and the second step uses these angles to prepare a Gaussian state on an independent register. 
In addition, the second step appends an independent register in state $\en{ \sum_{k \in [L]} \frac{1}{ k^{\alpha}}  \ket{ k } }$ with exponent $\alpha$ depending on the Hurst parameter. 
The quantum state is obtained after the second step is,  
\all{ 
\frac{1}{ Z_{1} } \sum_{\theta_{i}  }  \prod_{i \in [L-1]}  \sqrt{p(\theta_{i})} \ket{\theta_{1}, \theta_{2}, \cdots, \theta_{L-1} }  \en{ \sum_{k \in [L]} \frac{1}{ k^{\alpha}}  \ket{ k } } \en{ \sum_{i \in [L]} a_{i} \ket{ i } } , 
} {step2} 
where the normalization factor $\frac{1}{Z_{1}}$ ensures that the last two registers are normalized quantum states with unit norm. 
Using this state in step 3-5 of Algorithm \ref{Brownian} generates an $\epsilon$-approximate coherent analog encoding for Brownian motion, where $\epsilon$ depends on the number of terms $L$ 
retained in the Wiener series. Assuming that the angles distributions for the $\theta_{i}$ in step 1 of the algorithm are generated to a fixed $K$ bits of precision, the number of 
qubits needed is $O(2^{K} L)=O(L)$ and the asymptotic resource requirements for generating the coherent analog encoding of Brownian motion are the same as the requirements for generating 
a single trajectory, 

\begin{theorem} 
There is a quantum algorithm for generating $\epsilon$-approximate coherent analog encodings of Brownian motion with requires $O(L + \log T)$ qubits, 
has circuit depth $\widetilde{O}( \log L + \log T)$ and has gate complexity $O(L + polylog(T))$ for $L=O(1/\epsilon)$. 
\end{theorem}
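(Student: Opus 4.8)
The plan is to flesh out the ``data loading the data loader'' construction described around \eqref{step1}--\eqref{step2}, and then reuse Steps~3--5 of Algorithm~\ref{Brownian} verbatim. The only new ingredient beyond \thmref{thm:BM} is that the Gaussian register must be prepared \emph{coherently} -- entangled with a garbage register $\ket g$ that records the randomness -- rather than for a fixed realization $\vec a$; since everything downstream of the Gaussian register in Algorithm~\ref{Brownian} is already unitary, coherence (and hence compatibility with amplitude estimation) follows once this one register is handled. First I would prepare the state \eqref{step1}, a superposition over the $L-1$ data-loader angles weighted by $\sqrt{p(\theta_i)}$. By \lemref{l10} these angles are mutually \emph{independent} for a Haar-random unit vector, and by \lemref{l9} the angle at a node of height $h$ has the explicit density $\tfrac{\Gamma(2^{h-2})^2}{\Gamma(2^{h-1})}\sin^{2^{h-1}-1}(t)\cos^{2^{h-1}-1}(t)$. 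Fixing a constant precision $K$ and discretizing each $\theta_i$ to $K$ bits, the target is a tensor product across the $L-1$ angle registers, so it is prepared by running $L-1$ independent unary data loaders in parallel, each loading a fixed length-$2^K$ amplitude vector; this costs $O(2^K L)=O(L)$ qubits and $\widetilde O(\log L)$ depth at constant $K$. I would also attach, as separate fixed registers, the state $\sum_{k\in[L]}k^{-1}\ket k$ (the $\alpha=1$ case of \eqref{step2}, loaded by one more unary data loader) and the $\ket{0^L}$ register that will receive the Gaussian vector.

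\textbf{Applying the controlled data loader.} The unary data loader of Figure~\ref{fig1} is a depth-$O(\log L)$ binary tree of $L-1$ RBS gates, the gate at each node rotated by the corresponding angle. I would replace each fixed $\mathrm{RBS}(\theta)$ by a rotation \emph{controlled} on the $K$-bit register holding $\theta$, decomposing it into $K$ singly-controlled RBS rotations, one per bit of $\theta$. This multiplies each node's gate count by $O(K)$ but leaves the tree layout unchanged, so the controlled loader still has depth $\widetilde O(\log L)$ and uses $O(KL)=O(L)$ gates. Applying it to $\ket{0^L}$ produces on that register the amplitude encoding of a Haar-random unit vector correlated with the angle registers -- exactly the coherent analogue of \eqref{step2}, with $\ket g$ comprising the angle registers.

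\textbf{Running Steps 3--5 and correctness.} Feed the resulting state into the XOR/measure step of Algorithm~\ref{Brownian} (CNOTs from the Gaussian-index register onto the $k$-register, then measuring the latter to get an outcome $j$; this lone measurement may be deferred to the end of the circuit, keeping the algorithm unitary), then the log-depth unary-to-binary converter of \apref{app:c}, then padding to $\log T$ qubits, then the DST circuit of \corref{c0}. Correctness is \lemref{correct} applied inside the superposition: conditioned on the garbage register (angle registers, residual index register, and $j$), the Gaussian register carries $(a_{k\oplus j}/k)_{k\in[L]}$, which by permutation-invariance of an i.i.d.\ $N(0,1)$ vector has the same law as $(a_k/k)_k$ for \emph{every} $j$; so by Wiener's \thmref{wiener} the DST of this register is a Brownian-bridge trajectory drawn with weight $\sqrt{p_{v_1,\dots,v_T}}$, and the full state is the claimed $\ket{S_\epsilon(T)}$. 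The $\ell_2$ error from truncating the Wiener series to $L=O(1/\epsilon)$ terms is $O(\epsilon)$ by Section~\ref{error}, and a routine triangle-inequality argument shows constant angle precision $K$ contributes only $O(\epsilon)$ more.

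\textbf{Resources and the main obstacle.} Tallying: Step~1 is $O(L)$ qubits in depth $\widetilde O(\log L)$; the controlled loader is $O(L)$ gates in depth $\widetilde O(\log L)$; unary-to-binary and the DST are $O(\polylog T)$ gates in depth $\widetilde O(\log T)$ on $O(\log T)$ qubits. This gives $O(L+\log T)$ qubits, depth $\widetilde O(\log L+\log T)$, and $O(L+\polylog T)$ gates, which for $L=O(1/\epsilon)$ is the stated bound. The step I expect to need the most care is promoting the data loader's fixed RBS angles to quantum-controlled rotations without inflating the \emph{depth} past that of the uncontrolled loader: this holds because the controlled rotations at a given tree level act on disjoint qubit pairs (they commute and parallelize exactly as the uncontrolled ones do) and each is a shallow $O(K)$-gate gadget, and because the $O(2^K)$ ancillas per angle register keep the qubit count at $O(L)$ only when $K$ is an absolute constant folded into the $O(\epsilon)$ error budget -- so the delicate point is quantifying precisely how $K$ trades off against the approximation error.
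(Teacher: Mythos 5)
Your proposal is correct and follows essentially the same route as the paper: prepare the product state over the independent angle distributions of Lemma~\ref{l9}/Lemma~\ref{l10} to $K$ bits of precision, apply the data loader with rotations controlled on those angle registers (``data loading the data loader''), append the $\sum_k k^{-1}\ket k$ register, and reuse Steps~3--5 of Algorithm~\ref{Brownian}, with the $O(2^KL)=O(L)$ qubit count and unchanged asymptotic depth and gate complexity. The extra detail you supply on decomposing each controlled RBS gate into $K$ singly-controlled rotations and on deferring the measurement of $j$ is a faithful elaboration of what the paper leaves implicit rather than a different argument.
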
 

\noindent The angle distributions for the $\theta_{i}$ are heavily concentrated at the higher levels of the tree, this can be used to further reduce the resource requirements for near term instantiations of the 
coherent analog encoding preparation procedure for Brownian motion. The quantum Monte Carlo method for estimating expectations over Brownian paths using the coherent analog encoding 
is described in Section \ref{sec:mcmethods}.

\subsection{Quantum runtime and error analysis} \label{error} 
Algorithm \ref{Brownian} offers a potentially significant speedup over classical algorithms for the problem of preparing an analog quantum representations of Brownian motion. The classical algorithm using the discrete Fourier transform has complexity $O(T \log T)$ while the quantum algorithm has complexity $O(L+ \polylog(T))$. The extent of the quantum speedup thus depends on the number of terms $L$ retained in the Wiener series.

We argue that choosing $L$ to be a constant suffices to obtain good approximations of the Brownian path in the $\ell_{2}$ norm. The expected $\ell_{2}$ norm of the Brownian path $E[ \int^{\pi}_{0} B(t)^{2} dt]  = \frac{1}{\pi} ( 1+ \sum_{k\geq 1} \frac{2}{k^{2}}) \leq \frac{1}{\pi} + \frac{\pi}{3}$ 
can be calculated explicitly using Euler's celebrated summation of the series $\zeta(2) = \sum_{k \in \mathbb{N}} \frac{1}{k^{2} } = \pi^{2}/6$. 

A more precise tail bound establishing the asymptotic rate of convergence of the $\zeta(2)$ series can be obtained as follows. 
The $\zeta(2)$ power series truncated at $L$ terms can be approximated by the integral $\int^{\infty}_{L} \frac{1}{ x^{2} } dx = O(1/L)$. 
The expected truncation error $E[ \norm{B(t) - B_{L}(t)}^{2}]$ when the Wiener series is truncated to $L$ terms is $O(1/L)$ and as the 
error is a weighted sum of squares of Gaussian random variables $\norm{B(t) - B_{L}(t)}^{2}$ is concentrated around $O(1/L)$ with high probability. 
Thus $L=O(1/\epsilon)$ terms need to be retained to achieve $\epsilon$-approximate analog encodings (Definition \ref{d2}) for Brownian trajectories
with high probability. The approximation of Brownian motion by a constant number of terms of the Wiener is illustrated in Figure 2.  
\begin{figure} [H] 
\includegraphics[scale=0.35]{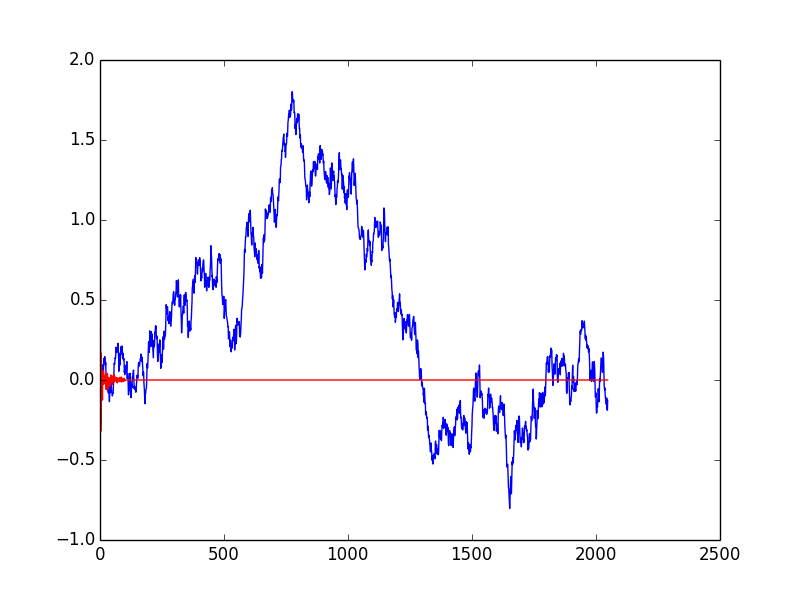}
\includegraphics[scale=0.35]{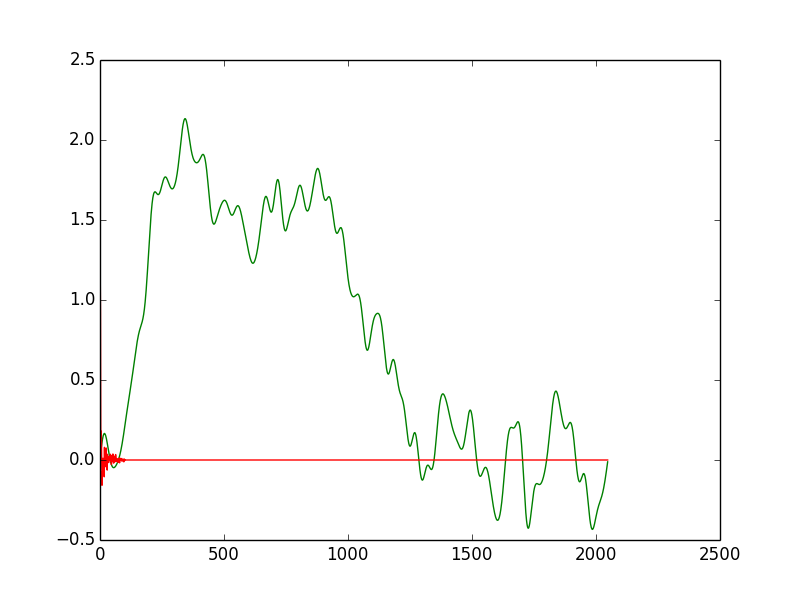}
\caption{(i) Simulation of Brownian bridge using Wiener series for $(N=1000, T=2048)$. (ii) Simulation of Brownian bridge using
truncated Wiener series $(N=100, T=2048)$.} 
\end{figure}

\subsection{Fractional Brownian motion} \label{efbm} 
Algorithm \ref{Brownian} can be used to prepare quantum representations of Fractional brownian motion (Definition \ref{levyfBM}) with arbitrary Hurst parameter $H \in [0,1]$. 
Wiener's Fourier series representation for the Brownian motion can be viewed as arising from the fact that Brownian motion is an integral over white 
noise where the white noise can be described as a stochastic Fourier series with i.i.d. Gaussian coefficients. Fractional Brownian motion in this view 
is a 'fractional' integral over the white noise, where the notion of fractional integral corresponds to the L\'{e}vy or the Mandelbrot definitions of the fBM 
given previously. 

Fractional integrals can be given different formulations, it suffices to determine the fractional integral of $\sin(kt)$ and $\cos(kt)$ to define it for functions having a well 
defined Fourier series. It can be shown that in the Fourier domain, the fractional integral with parameter $\alpha$ corresponds 
to Hadamard product by $k^{-1-\alpha}$ \cite{herrmann2011fractional}. With the interpretation of fractional Brownian motion as a fractional 
integral over white noise it follows that the Fourier coefficients of the fractional Brownian motion with Hurst parameter $H$ decay according to the 
power law with the $k$-th coefficient scaling as $k^{-H- 0.5}$. The quantum algorithm for simulating Brownian motion \ref{Brownian} can be generalized to simulate fBM 
for an arbitrary Hurst parameter by changing the scaling of the power law exponent. 

The number of terms $L$ retained in the stochastic Fourier expansion for the fBM are given by the number of terms required to approximate the power series $\zeta(t)$ for $t \in [1,3]$. 
For $H=0$, the power series $\zeta(1)$ is divergent and thus an arbitrarily large number of terms would be needed, for other values of $H$ the convergence rate is the number of 
terms needed to approximate the zeta power series. Comparing with the previous calculation for Brownian motion, for $H>1/2$ fewer than $200$ terms suffice to approximate the fractional Brownian motion 
up to $99.7\%$ variance while for $H<1/2$ the number of terms $N$ required to achieve this accuracy will be larger. The exact number of terms needed can be calculated from the value of $\zeta(1+2H)$, 
Similar to the case of Brownian motion, the dependence of the approximation error $\epsilon$ in the $\ell_{2}$ norm can be computed by approximating the tail probability for $\zeta(1+2H)$ by the integral $\epsilon = \int_{L}^{\infty} dx/x^{1+2H} = O(1/L^{2H})$. 
The asymptotic error dependence for fractional Brownian motion with $H>1/2$ is better than $O(1/\epsilon)$. For example, only 
$L=O(1/\epsilon^{1/2H})$ terms of the stochastic Fourier series need to be retained to approximate fBM with Hurst parameter $H$ to $\ell_{2}$ error of $\epsilon$. 

The quantum algorithm for simulating fractional Brownian motion with Hurst parameter $H$ is identical to Algorithm \ref{Brownian} with the state 
$\sum_{k \in [L]} \frac{1}{k^{H+ 0.5}}   \ket{k}$ in step 1 of the algorithm. Analogous to the results for Brownian motion ($H=1/2$) in Theorem \ref{thm:BM}, we have the following result on the running time and resource requirements for simulating fractional Brownian motion. 

\begin{theorem} \label{thm:FBM} 
There is a quantum algorithm for generating $\epsilon$-approximate analog encodings of fractional Brownian motion trajectories on $T$ time steps
that requires $O(L + \log T)$ qubits, has circuit depth $\widetilde{O}( \log L + \log T)$ and gate complexity $O(L + polylog(T))$ for $L=O(1/\epsilon^{1/2H})$ where 
$H\in (0,1]$ is the Hurst parameter. 
\end{theorem}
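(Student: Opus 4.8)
The plan is to obtain Theorem \ref{thm:FBM} as a direct adaptation of the proof of Theorem \ref{thm:BM}, changing only the deterministic ``variance profile'' loaded in step 1 of Algorithm \ref{Brownian} and re-running the error analysis. First I would invoke the spectral description of fractional Brownian motion recalled above: viewing $fBM_H$ as a fractional integral of order $\alpha = H - 1/2$ over white noise, and using the fact that in the Fourier domain such a fractional integral acts as the Hadamard product with $k^{-1-\alpha} = k^{-(H+1/2)}$ \cite{herrmann2011fractional}, one gets the stochastic Fourier series $fBM_H(t) \propto \sum_{k\ge 1} a_k\, k^{-(H+1/2)} \sin(kt)$ with $a_k \sim N(0,1)$ i.i.d. Truncating at $L$ terms, the target state to prepare before the discrete sine transform is $\frac{1}{Z}\sum_{k\in[L]} \frac{a_k}{k^{H+1/2}}\ket{k}$. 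Hence the algorithm is exactly Algorithm \ref{Brownian} with the state $\ket{K} = \frac{1}{Z_1}\sum_{k\in[L]} \frac{1}{k^{H+1/2}}\ket{k}$ in step 1; since this is a fixed classical vector, it is still prepared by a log-depth unary data loader with $O(L)$ gates.

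Next I would verify correctness, essentially re-deriving Lemma \ref{correct}. Steps 2 and 3 are untouched: step 2 prepares the i.i.d. Gaussian state $\frac{1}{Z_2}\sum_{i\in[L]} a_i\ket{i}$ via the data loader with the independent angle distributions of Lemma \ref{lem:gaussian} (and Lemma \ref{l9}), and step 3 performs the CNOT-and-measure trick, producing $\frac{1}{Z_3}\sum_{k\in[L]} \frac{a_{k\oplus j}}{k^{H+1/2}}\ket{k}\ket{j}$ for a measured value $j$. The only property used is that $k\mapsto k\oplus j$ is a permutation of $[L]$ and that the law of the i.i.d. Gaussian vector $\vec a$ is permutation-invariant; both hold verbatim, so after tracing out register $2$ the first register has the law of $\frac{1}{Z_3}\sum_{k\in[L]} \frac{a_k}{k^{H+1/2}}\ket{k}$. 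Applying the unary-to-binary converter (Appendix \ref{app:c}), padding to $\log T$ qubits, and applying the discrete sine transform circuit of Corollary \ref{c0} then yields, by the spectral representation above, the state $\frac{1}{Z_3}\sum_{i\in[T]} fBM_{H,L}(i\pi/T)\ket{i}$, i.e.\ the analog encoding of the $L$-term truncated fBM trajectory. The coherent version follows by the same ``data loading the data loader'' construction used for Brownian motion, with exponent $\alpha = H + 1/2$ in the injected register.

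Then I would redo the truncation-error analysis to fix $L$. The expected squared $\ell_2$ truncation error is a weighted sum of squared Gaussians, $E[\norm{fBM_H - fBM_{H,L}}^2] \propto \sum_{k > L} k^{-(1+2H)}$, which by comparison with $\int_L^\infty x^{-(1+2H)}\,dx$ is $O(L^{-2H})$ for $H>0$; concentration of this weighted $\chi^2$ tail then shows $\norm{fBM_H - fBM_{H,L}}^2 = O(L^{-2H})$ with high probability, so $L = O(\epsilon^{-1/2H})$ suffices for an $\epsilon$-approximate encoding. Plugging this $L$ into the resource count --- which is identical to that of Theorem \ref{thm:BM}, since the only changed subroutine (the deterministic loader in step 1) has the same asymptotic cost --- gives $O(L + \log T)$ qubits, $\widetilde{O}(\log L + \log T)$ depth, and $O(L + \polylog T)$ gates, as claimed.

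The main obstacle is the small-$H$ regime rather than anything structural. As $H \to 0$ the decay exponent $1+2H \to 1$, the series $\zeta(1+2H)$ diverges, and the bound $L = O(\epsilon^{-1/2H})$ degenerates, so one genuinely needs $H$ bounded away from $0$ (consistent with the hypothesis $H\in(0,1]$); one must also check that the weighted-$\chi^2$ concentration used for the truncation error remains quantitatively adequate when the weights $k^{-(1+2H)}$ decay slowly, which is the only place the argument is more delicate than for ordinary Brownian motion. Everything else --- the Gaussian state preparation, the permutation-invariance argument, and the circuit-depth bookkeeping --- transfers from the $H = 1/2$ case without change.
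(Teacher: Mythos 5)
Your proposal is correct and follows essentially the same route as the paper: it runs Algorithm \ref{Brownian} with the step-1 state replaced by $\frac{1}{Z_1}\sum_{k\in[L]} k^{-(H+1/2)}\ket{k}$, justifies the exponent via the fractional-integral/Fourier-domain interpretation of fBM, reuses the permutation-invariance argument of Lemma \ref{correct} verbatim, and fixes $L=O(\epsilon^{-1/2H})$ by comparing the tail $\sum_{k>L}k^{-(1+2H)}$ with $\int_L^\infty x^{-(1+2H)}dx$. Your added remarks on the degenerating bound as $H\to 0$ and on weighted-$\chi^2$ concentration are consistent with (and slightly more careful than) the paper's own discussion.
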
 
Fractional Brownian motions for varying $H$ parameters simulated using stochastic Fourier series with power law decay are illustrated in Figure 2. 

\begin{figure} [H] 
\includegraphics[scale=0.35]{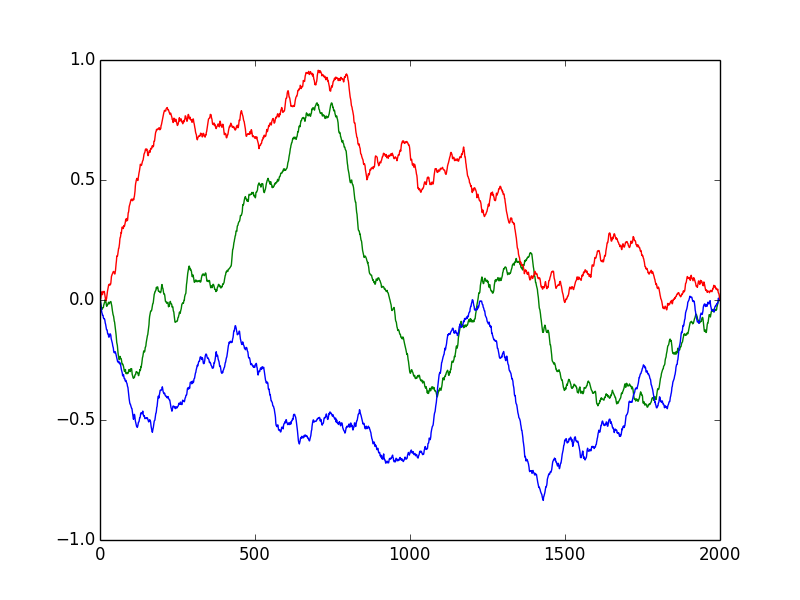}
\includegraphics[scale=0.35]{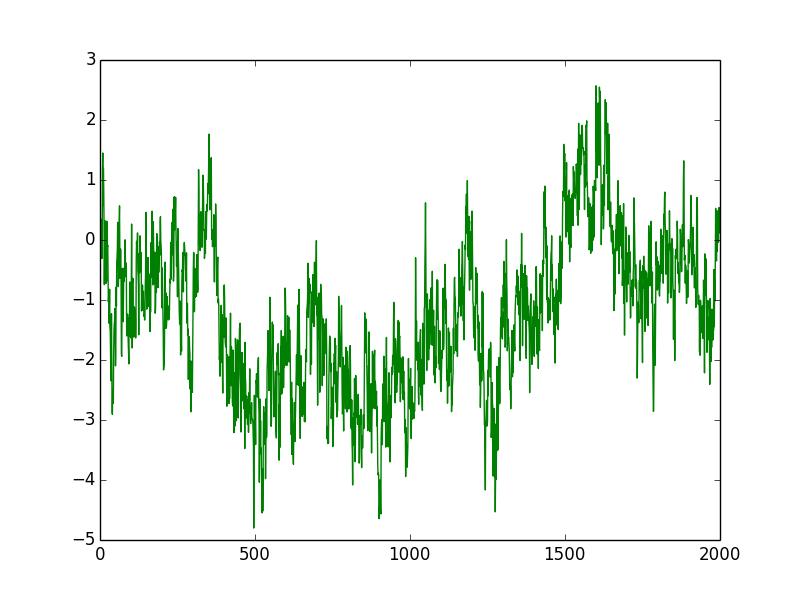}
\caption{(i) Three simulated trajectories for Fractional Brownian motion with Hurst parameter $H=0.8$. (ii) 
A simulated trajectory for Fractional Brownian motion with Hurst parameter $H=0.1$.} 
\end{figure} 

Fractional Brownian motion with $H<1/2$ is an important process in quantitative finance. In \cite{gatheral2018volatility}, it is determined that via estimation of volatility from high frequency financial data that log-volatility time series behave like a fractional Brownian motion, with Hurst parameter of order 0.1. Modeling volatility this way allows one to reproduce the behavior of the implied volatility surface with high accuracy. This result is robust and has been demonstrated with thousands of assets.

\begin{table} \label{tab1} 
\begin{center} 
  \begin{tabular}{|l|c|c|c|} 
   \hline
  &&&\\
   \textbf{ $\epsilon$ } & \textbf{$H=0.5$} & \textbf{$H=0.65$} & \textbf{$H=0.8$}  \\
      &&&\\
    \hline
      &&&\\
      $10^{-2}$ & 100 & 35  & 20 \\
    &&&\\
    \hline
     &&&\\
    $10^{-3}$   & 1000 & 205  & 75 \\
      &&&\\
    \hline 
     &&&\\
      $10^{-4}$ & 10000 & 1200 & 320 \\
    &&&\\
    \hline  
   \end{tabular} 
   \caption {Number of terms in the stochastic Fourier series for fBMs for varying Hurst parameters and error rates. } \label{table1} 
  \end{center} 
   \end{table} 

The truncated stochastic Fourier series captures the large scale variations on the Brownian path while filtering out the high frequency components.For most Monte Carlo estimation applications, the finer scale oscillations on the Brownian path can be safely ignored. The method of retaining only the leading coefficients of the Wiener series to get good approximations is an analogous to principal components analysis where only the leading eigenvalues of the covariance matrix are retained. The generalization of this method to arbitrary stochastic processes is formalized as the Karhunen-Loeve Theorem \cite{H18} in the stochastic processes literature.

\section{Quantum spectral method for stochastic integrals} 
\label{sec:integrals}

\subsection{L\'{e}vy Processes}\label{levy} 
The most general formulation of the quantum spectral method is applicable to stochastic integrals over L\'{e}vy processes. 
We begin with a definition of L\'{e}vy processes and stating some theorems about them. We assume we are given a filtered probability space $(\Omega, \mathbb{P}, \mathbb{F}, \mathcal{F}_t).$ An adapted stochastic process $X$ is called a \textit{L\'{e}vy Process} if it satisfies the following criteria:

\begin{enumerate}

\item  $X_t-X_s$ is independent of $\mathcal{F}_s,$ $0 \leq s \leq t \textless \infty.$

\item $X$ has stationary increments, i.e. $X_t-X_s$ has the same distribution as $X_{t-s},$ $0 \leq s \leq t \textless \infty.$ 

\item $X$ is continuous in probability, i.e. $lim_{t \rightarrow s} X_t=X_s,$ where the limit is taken in probability.

\end {enumerate}

\begin{theorem}[L\'{e}vy-Khintchine]

The following theorem gives the characteristic function of any L\'{e}vy process:

Let $X$ be a L\'{e}vy process with L\'{e}vy measure $\nu.$ Then,

$$E[e^{iuX_t}]=e^{-t \psi(u)},$$

where $\psi(u)$ is given by:

$$ \frac{\sigma^{2} u^{2}}{2} -i\alpha u + \int_{|x| \geq 1} (1-e^{iux}) \nu(dx) +  \int_{|x| \textless 1} (1-e^{iux}+iux) \nu(dx).$$ 

\end{theorem}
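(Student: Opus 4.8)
The plan is to proceed in two stages: first extract the exponential form $E[e^{iuX_t}] = e^{-t\psi(u)}$ from the increment structure alone, and then identify $\psi$ via the Lévy--It\^o path decomposition. \textbf{Stage 1 (exponential form).} Fix $u$ and set $\phi_t(u) = E[e^{iuX_t}]$. Writing $X_{s+t} = X_s + (X_{s+t}-X_s)$ and using independence of $X_{s+t}-X_s$ from $\mathcal{F}_s$ together with stationarity of increments, one gets $\phi_{s+t}(u) = \phi_s(u)\phi_t(u)$. Continuity in probability makes $t\mapsto\phi_t(u)$ continuous, so this multiplicative Cauchy equation forces $\phi_t(u) = e^{-t\psi(u)}$ with $\psi(u) = -\log\phi_1(u)$, provided $\phi_1(u)\neq 0$; the latter holds because $X_1 = \sum_{k=1}^n (X_{k/n}-X_{(k-1)/n})$ is a sum of $n$ i.i.d. terms for every $n$ (infinite divisibility), so $\phi_1(u) = \phi_{1/n}(u)^n$ can have no zero. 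It remains to pin down $\psi$.

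\textbf{Stage 2 (L\'evy--It\^o decomposition).} Let $N(dt,dx)$ be the random measure counting the jumps $(t,\Delta X_t)$ of the path. Using independence and stationarity of increments applied to disjoint time windows, one shows $N$ is a Poisson random measure on $(0,\infty)\times(\mathbb{R}\setminus\{0\})$ with intensity $dt\,\nu(dx)$, where $\nu(dx):=E\,N((0,1]\times dx)$; moreover finiteness of $\phi_1$ forces $\nu(|x|\ge 1)<\infty$ and $\int_{|x|<1} x^2\,\nu(dx)<\infty$. One then decomposes
\[
X_t = \alpha t + \sigma B_t + \int_{|x|\ge 1} x\,N(t,dx) + \lim_{\varepsilon\downarrow 0}\int_{\varepsilon\le|x|<1} x\,\bigl(N(t,dx) - t\,\nu(dx)\bigr),
\]
where the four summands are independent: the large-jump term is a compound Poisson process; the compensated small-jump term converges in $L^2$ (the $\varepsilon$-shells contribute orthogonal increments and $\int_{|x|<1}x^2\nu(dx)<\infty$ controls the sum); and what is left over is a continuous process with stationary independent increments, hence a Brownian motion with drift, $\alpha t + \sigma B_t$.

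\textbf{Stage 3 (assemble).} By independence, $\phi_t(u)$ is the product of the characteristic functions of the four pieces: $e^{i\alpha u t}$ for the drift, $e^{-\sigma^2 u^2 t/2}$ for the Brownian part, $\exp\bigl(t\int_{|x|\ge1}(e^{iux}-1)\,\nu(dx)\bigr)$ for the compound Poisson part, and, by the exponential (Campbell) formula for Poisson integrals, $\exp\bigl(t\int_{|x|<1}(e^{iux}-1-iux)\,\nu(dx)\bigr)$ for the compensated part. Multiplying these and comparing with $e^{-t\psi(u)}$ yields exactly the stated $\psi(u)$, with the sign conventions matching (drift contributes $-i\alpha u$, the jump terms contribute $\int(1-e^{iux})\nu$ and $\int(1-e^{iux}+iux)\nu$).

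\textbf{Main obstacle.} The genuinely technical step is Stage 2: verifying that the jump measure is a Poisson random measure (a Laplace-functional computation exploiting independence of increments over disjoint space-time regions) and, above all, that the compensated small-jump integral converges — the truncation level $1$ is arbitrary and the small jumps may have infinite total variation, so one must compensate and invoke the orthogonality of the contributions of disjoint annuli together with the a priori bound $\int_{|x|<1}x^2\nu(dx)<\infty$, which itself must be derived from finiteness of the characteristic function. Stages 1 and 3 are then essentially bookkeeping. An alternative purely analytic route avoids the path decomposition entirely: prove the L\'evy--Khintchine representation for the infinitely divisible law of $X_1$ via vague convergence of the compound-Poisson approximations $n\mu_{1/n}$ restricted away from the origin, and then deduce $\psi_t = t\psi_1$ from Stage 1; this trades stochastic calculus for tightness and weak-convergence estimates.
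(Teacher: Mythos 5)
The paper does not prove this statement: it is quoted as the classical L\'{e}vy--Khintchine theorem, used only as background to justify the decomposition $X_t=\alpha t+\sigma B_t+Y_t+C_t$ that follows it, so there is no in-paper argument to compare yours against. Judged on its own terms, your sketch is a correct outline of the standard textbook proof (semigroup property of the characteristic function, L\'{e}vy--It\^{o} decomposition, Campbell's formula), and your sign bookkeeping in Stage 3 does reproduce the paper's $\psi$. Three points deserve tightening. First, ``finiteness of $\phi_1$'' cannot force $\nu(|x|\ge 1)<\infty$ or $\int_{|x|<1}x^2\,\nu(dx)<\infty$, since characteristic functions are automatically bounded by $1$; these integrability conditions come instead from the c\`adl\`ag path structure (only finitely many jumps of modulus $\ge\varepsilon$ per unit time, giving $\nu(|x|\ge\varepsilon)<\infty$) and a truncation/variance estimate for the small jumps. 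Second, the zero-free claim for $\phi_1$ does not follow merely from $\phi_1(u)=\phi_{1/n}(u)^n$; you also need $\phi_{1/n}(u)\to 1$, i.e.\ continuity in probability at $t=0$, to rule out $\phi_{1/n}(u)=0$ for all $n$. Third, defining $\psi(u)=-\log\phi_1(u)$ for a complex-valued, nowhere-vanishing continuous function requires the distinguished (continuous) branch of the logarithm; this is also where one checks that $\psi_t=t\psi_1$ rather than differing by multiples of $2\pi i$. With those repairs the argument is the standard one, and your remark that the analytic route through infinitely divisible laws avoids the path decomposition is accurate.
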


It is a consequence of the L\'{e}vy-Khintchine Theorem that any L\'{e}vy process $X$ can be decomposed in the following manner, 
$$X_t=\alpha t +\sigma B_t +Y_t + C_t,$$
with $B$ being a Brownian motion, $Y$ being a pure-jump martingale with jumps bounded in absolute value by $1,$ and $C_t$ being a compound Poisson process, with jumps greater than absolute value $1.$
The approach to quantum simulation of L\'{e}vy processes is to generate first the L\'{e}vy noise and then to apply to it an integration operator, which is then implemented efficiently using the Quantum Fourier transform. 
The method is probabilistic and analysis requires bounds on the the Fourier spectrum of the L\'{e}vy noise, in particular it requires that the ratio of the maximum and the minimum Fourier coefficients is bounded. 

\begin{definition}
Given a L\'{e}vy process $X$, we define \textbf{L\'{e}vy white noise}, $Z,$ to be its generalized derivative:
$$Z_t:=\frac{dX}{dt}-E[X_1].$$
Note that by the properties of L\'{e}vy processes, L\'{e}vy white noise is a zero-mean, stationary process, and $Z_t || Z_s$ for $s \neq t.$
\end{definition}
In order to establish the flatness of the Fourier spectrum of L\'{e}vy noises, we use the Wiener-Khintchine Lemma stated below to 
relate the Fourier coefficients to the auto-correlation function for the process.

\begin{Lemma}[Wiener-Khintchine]

Let $F(\omega)$ be the Fourier Transform of $Z,$ and $G(\omega)$ be the Fourier Transform of its autocorrelation function, $R(\tau).$ 

$$S(\omega)=\lim_{T\to\infty}\frac{\mathbb{E}|F(\omega)|^{2}}{2T}=G(\omega)$$
\end{Lemma}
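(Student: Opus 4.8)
The plan is to follow the classical route: expand $\mathbb{E}\,|F(\omega)|^{2}$ as a double integral, invoke stationarity to replace the integrand by the autocorrelation function of the lag variable, collapse the double integral over a square to a single integral against a triangular (Fej\'er-type) weight, and then pass to the limit $T\to\infty$.

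First I would make the windowing explicit: for $T>0$ set $F_{T}(\omega)=\int_{-T}^{T} Z_{t}\,e^{-i\omega t}\,dt$, so that the $F(\omega)$ appearing in the statement is understood as this truncated transform prior to taking the limit. Then
\[
\mathbb{E}\,|F_{T}(\omega)|^{2}=\int_{-T}^{T}\!\!\int_{-T}^{T}\mathbb{E}\big[Z_{t}\,\overline{Z_{s}}\big]\,e^{-i\omega(t-s)}\,dt\,ds,
\]
where the interchange of expectation and integration is Fubini's theorem, valid once one checks that $\mathbb{E}|Z_{t}\overline{Z_{s}}|$ is integrable over the square; this follows from wide-sense stationarity together with $R(0)=\mathbb{E}|Z_{0}|^{2}<\infty$. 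Because $Z$ is zero-mean and stationary, $\mathbb{E}[Z_{t}\overline{Z_{s}}]=R(t-s)$, so the integrand depends on $t,s$ only through the lag $\tau=t-s$.

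Next I would push forward along $(t,s)\mapsto \tau=t-s$ on the square $[-T,T]^{2}$: the image of Lebesgue measure has density $(2T-|\tau|)$ on $[-2T,2T]$ (the length of the diagonal slice at lag $\tau$), giving
\[
\frac{1}{2T}\,\mathbb{E}\,|F_{T}(\omega)|^{2}=\int_{-2T}^{2T}\Big(1-\frac{|\tau|}{2T}\Big)\,R(\tau)\,e^{-i\omega\tau}\,d\tau .
\]
Finally, letting $T\to\infty$: the weights $\big(1-\tfrac{|\tau|}{2T}\big)\mathbf{1}_{[-2T,2T]}(\tau)$ increase pointwise to $1$ and are bounded by $1$, so if $R\in L^{1}(\mathbb{R})$ the dominated convergence theorem yields $\lim_{T\to\infty}\frac{1}{2T}\mathbb{E}\,|F_{T}(\omega)|^{2}=\int_{-\infty}^{\infty} R(\tau)\,e^{-i\omega\tau}\,d\tau=G(\omega)$, which is precisely the assertion $S(\omega)=G(\omega)$.

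The main obstacle is the integrability hypothesis on $R$ required for this last step, together with the distributional subtlety encountered when the lemma is applied to L\'{e}vy white noise: there $Z_{t}\,||\,Z_{s}$ for $s\neq t$ forces $R(\tau)$ to be a constant multiple of the Dirac delta, so $R\notin L^{1}$ in the ordinary sense and the computation above must be read in the sense of tempered distributions, or obtained as the $\delta\to 0$ limit of mollified noises $Z^{(\delta)}$ with $R^{(\delta)}\in L^{1}$. In that regularized reading the triangular weight still converges to $1$ when tested against the delta, and one recovers the flat spectrum $G(\omega)\equiv\mathrm{const}$, which is exactly what is used downstream to bound the ratio of maximal to minimal Fourier coefficients of L\'{e}vy noise; accordingly I would prove the lemma for a general zero-mean wide-sense stationary $Z$ under the $L^{1}$ caveat and record the distributional extension separately.
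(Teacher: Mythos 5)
Your proof follows essentially the same route as the paper's: expand $\mathbb{E}|F(\omega)|^{2}$ as a double integral, use stationarity to replace the integrand by $R(t-s)$, change variables to the lag $\tau$ to obtain the triangular weight, and let $T\to\infty$. Your version is in fact slightly more careful than the paper's — you correctly write the weight as $(2T-|\tau|)$ rather than $(2T-\tau)$, justify the interchange via Fubini, and flag the $L^{1}$/distributional caveat needed when the lemma is applied to white noise — but the underlying argument is identical.
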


\begin{proof}

We have $$S(\omega)=\frac{1}{2T}\int_{-T}^{T} \int_{-T}^{T} \mathbb{E}(Z_uZ^{*}_v) e^{-2\pi i(u-v)}dudv=\frac{1}{2T}\int_{-T}^{T} \int_{-T}^{T} R(u-v)e^{-2\pi i(u-v)}dudv.$$  

Let $\tau=u-v.$

Then, the above equals $$ \frac{1}{2T}\int_{-2T}^{2T} R(\tau)e^{-2\pi i(\tau)}(2T-\tau)d\tau$$

Now, we let $T \to \infty,$ arriving at $$S(\omega)=\int_{-\infty}^{\infty}R(\tau)e^{-2\pi i(\tau)}d\tau=\mathcal{F}(R)[\omega]=G(\omega),$$ the Fourier transform of $R.$ 
\end{proof} 

The next claim shows that the Fourier transform  
\begin{claim} 
The power spectrum $S(\omega)$ for L\'{e}vy white noise is flat, that is $S(\omega)= E[(Z_t)^{2}]$ for all frequencies $\omega$.
\end{claim}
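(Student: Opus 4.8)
The plan is to read off the result directly from the Wiener--Khintchine Lemma already proved above, which identifies the power spectrum $S(\omega)$ with $G(\omega) = \mathcal{F}(R)[\omega]$, the Fourier transform of the autocorrelation function $R(\tau) = \mathbb{E}(Z_u Z_v^{*})$ with $\tau = u-v$. So the entire task reduces to computing $R$, and the point is that for Lévy white noise $R$ is (a multiple of) a Dirac delta, whose Fourier transform is constant.

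First I would use the two structural properties of Lévy white noise recorded in its definition: it is a zero-mean stationary process, and $Z_s$ is independent of $Z_t$ for $s \neq t$. Stationarity guarantees that $\mathbb{E}(Z_uZ_v^{*})$ depends only on $\tau = u-v$, so $R(\tau)$ is well defined. Independence together with the zero-mean property gives, for $u \neq v$, $\mathbb{E}(Z_uZ_v^{*}) = \mathbb{E}(Z_u)\,\mathbb{E}(Z_v^{*}) = 0$, hence $R(\tau)=0$ for every $\tau \neq 0$. At $\tau=0$ we have $R(0) = \mathbb{E}|Z_t|^{2} = \mathbb{E}[(Z_t)^{2}]$ since the process is real, and this is finite under the standing assumption that $X_1$ has a finite second moment. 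Combining these, $R = \mathbb{E}[(Z_t)^{2}]\,\delta$ as a tempered distribution, and therefore $S(\omega) = \mathcal{F}(R)[\omega] = \mathbb{E}[(Z_t)^{2}]\,\mathcal{F}(\delta)[\omega] = \mathbb{E}[(Z_t)^{2}]$ for all $\omega$, which is exactly the claim. (The Fourier-transform normalization must be taken consistently with the statement of Wiener--Khintchine; this affects only an overall constant, not flatness.)

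The main obstacle is purely one of rigor: $Z$ is not a genuine pointwise-defined process, so the informal statement ``$R(\tau)=0$ for $\tau\neq 0$, $R(0)=\mathbb{E}[(Z_t)^{2}]$'' must be upgraded to the distributional identity $R = \mathbb{E}[(Z_t)^{2}]\,\delta$. The clean way to justify this is to regularize: for window width $h$, set $Z^{(h)}_t = (X_{t+h}-X_t)/h$; by the independent, stationary increments of the Lévy process $X$, the increments over disjoint length-$h$ intervals are i.i.d.\ with variance $h\,\mathbb{E}[(Z_1)^{2}]$, so $\mathbb{E}(Z^{(h)}_u Z^{(h)}_v)$ equals $\mathbb{E}[(Z_1)^{2}]/h$ when $u,v$ lie in the same interval and $0$ otherwise. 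As $h \to 0$ this covariance converges, in the sense of distributions, to $\mathbb{E}[(Z_1)^{2}]\,\delta(u-v)$; taking Fourier transforms and passing to the limit at each fixed $\omega$ yields the flat spectrum. I would present the short computation via independence and zero-mean as the ``main line'' of the proof, and mention the regularization only as the remark that makes the delta-function manipulation legitimate.
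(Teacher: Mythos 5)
Your proof is correct and follows essentially the same route as the paper: invoke the Wiener--Khintchine lemma to reduce the claim to computing the autocorrelation $R(\tau)$, observe that $R(\tau)=\sigma^{2}\delta_{0}(\tau)$ with $\sigma^{2}=\E[(Z_t)^{2}]$, and note that the Fourier transform of a delta is constant. You in fact supply more detail than the paper does (the explicit zero-mean/independence computation and the regularization of $Z$ justifying the distributional identity), where the paper simply asserts $R(\tau)=\sigma^{2}\delta_{0}(\tau)$ from the stated properties of L\'{e}vy white noise.
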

\begin{proof} 
To see this, recall the fact that the power spectrum $S$ of L\'{e}vy White noise can be obtained by taking the Fourier transform of its autocorrelation function $R$:
$$S(\omega)= \int_{-\infty}^{\infty}e^{- 2 i \pi\tau\omega }R(\tau)d\tau$$
The autocorrelation function $R(\tau)$ of the process $Z$ is given by $$E[Z_{t+\tau}Z_t].$$ Now, given the properties of L\'{e}vy White noise, we have that  $$R(\tau)=\sigma^{2}\delta_{0}(\tau),$$
where $\sigma^{2}=E[(Z_t)^{2}].$
Therefore, the power spectrum, $S(\omega),$ $\mathcal{F}[\sigma^{2}\delta_{0}]$, which is just the constant $\sigma^{2},$ for all $\omega.$

\end{proof}

Applying Chebyshev's Inequality, we have the following bound on on the supremum of the Fourier coefficients of L\'{e}vy Noise, i.e. $\hat{Z}_k:$

$$P(|\hat{Z}_k| \textgreater M) \leq \frac{\mathbb{E}|\hat{Z}_k|^{2}}{M^{2}}=\frac{C}{M^{2}},$$ 

where the constant $C$ is given by $C=\mathbb{E}[{\hat{Z}_k}^{2}]$

\subsection{Analog encodings for L\'{e}vy processes and stochastic integrals}

We develop a quantum spectral method that can be used to prepare quantum states representing stochastic integrals, the method is applicable to generating analog representations of integrals over L\'{e}vy processes. 
This includes integrals over time or over Brownian paths and It\^{o} processes that are linear combinations of such integrals. The quantum spectral method is obtained by quantizing the classical spectral method for 
generating trajectories for the fractional Brownian motion \cite{dieker2003spectral}.

The spectral method for stochastic processes is based on the observation that the discrete analog of an integral kernel of the form $\int^{t}_{0}  K(t-s) f(s) ds$ corresponds to multiplication of the vector $f(x)$ by a lower triangular Toeplitz matrix in the discrete setting . We recall the definition of Toeplitz matrices and the closely related circulant matrices. 
\begin{definition} 
A Toeplitz matrix $T \in \R^{n \times n}$ is a matrix such that $T_{ij} = f(i-j)$ for some function $f: \R \to \R$, that is the entries of $T$ are constant along the main diagonals. 
\end{definition} 

\noindent The integral $\int^{t}_{0}  K(t-s) f(s) ds$ can be approximated by discretizing the vector $f(S)$ into $T$ time steps and then multiplying the vector $f(s)$ by the lower triangular Toeplitz matrix $(T_{K})_{ij}:= K(i-j)$ if $i>j$ and $0$ otherwise. Further, this is also equivalent to discretizing the Kernel into $T$ time steps and then multiplying by the lower triangular Toeplitz matrix  $(T_{f})_{ij}:= f(i-j)$.

In the quantum setting, we will be computing the 
stochastic integrals $\int^{t}_{0}  K(t-s) \mu(s) ds$ against a L\'{e}vy noise $\mu(s)$. This is achieved by multiplying the state corresponding to the amplitude encoding of the discretized kernel against the Toeplitz matrix $T_{\mu}$. 
, that is $\ket{K} = \frac{1}{ \norm{K}} \sum_{t} K(t) \ket{t}$ by the Toeplitz matrix $T_{\mu}$ generates the amplitude encoding for the function $g(t) = \int^{t}_{0}  K(t-s) \mu(s) ds$.  

Circulant matrices defined below \ref{circ} are matrices generated by the cyclic shifts of some vector $c$. 
The spectral method for simulating stochastic integrals is based on the observation that Toeplitz matrices can be embedded into circulant matrices 
and that circulant matrices are diagonalized by the Fourier transform and  their eigenvalues can be computed explicitly. 

\begin{definition} \label{circ} 
A circulant matrix $C \in \R^{n \times n}$ is a matrix such that $C_{ij} = f((i-j) \mod n)$ for some function $f: \R \to \R$, that is the rows of the matrix $C$ are generated by applying cyclic shifts to its first row. 
\end{definition} 
\noindent A Toeplitz matrix $T_{\mu}$ of dimension $n$ can be embedded into a circulant matrix $C_{\mu}$ of dimension $2n$ as follows, 
\all{ 
C_{\mu} = \left( 
\begin{matrix} 
& T_{\mu} & T_{\mu}^{'} \\
& T_{\mu}^{'} & T_{\mu} 
\end{matrix}  \right )
} {circmult} 
where $T_{\mu}^{'}= (T_{\mu^R})^{T}$ where $T_{\mu^R}$ is the the reversed Toeplitz matrix with first column given by the reverse $\mu^{R}$ of $\mu$. (The reverse $x^{R}$ for $x \in \R^{n}$ is the vector with entries $(x^R)_{j} = x_{n-j}$ of the first column of $T_{x}$). 

It is well known that circulant matrices are diagonalized by the Fourier transform, that is a circulant matrix $C$ has a factorization of the form $C=U \diag(\widehat{c}) U^{-1}$ where $U$ is the unitary matrix 
for the quantum Fourier transform and $\widehat{c}= \QFT( C e_{1} )$ is the Fourier transform of the first column of $C$. The eigenvalues of the matrix $C_{K}$ are thus determined by taking the Fourier transform of the 
first column which by construction is the vector $(K, 0^{T})$.

\begin{algorithm} [H]
\caption{Quantum spectral method for stochastic integrals.} \label{integral} 
\begin{algorithmic}[1]
\REQUIRE Number of time steps $T$ assumed to be power of $2$. Circulant matrix $C_{\mu}$ of dimension $2T$ corresponding to the L\'{e}vy noise $d\mu_{s}$ that is being integrated against kernel $K$.  
\ENSURE A quantum state representing the stochastic integral $F(t) = \int^{t}_{0} K(s, t) d\mu_{s} $ discretized to $T$ steps. 
\STATE Compute the Fourier coefficients $b':= \QFT ((\mu, 0^{T}))$ for the L\'{e}vy noise $d\mu_{s}$ and let $b$ be the truncation to $L$ largest Fourier coefficients. 
\STATE  Prepare a $T$ dimensional amplitude encoding for the kernel $\ket{K}$ and append an extra qubit to get $\ket{K, 0}$. Preparation of $\ket{K}$ may be carried out 
using the unary data loader and the unary to binary convertor in appendix $C$. 
\STATE  Apply the $2T$ dimensional inverse quantum Fourier transform circuit to $\ket{K, 0}$ to ket $\ket{\widehat{K}} = \sum \widehat{k_{i}} \ket{i}$. 

\STATE Append an extra qubit and apply controlled rotations to obtain the state $\sum \widehat{k_{i}} \ket{i} ( \frac{b_{i}}{ \max_{i} b_{i} }  \ket{0} + \sqrt{1-\left(  \frac{ b_{i}} {\max_{i} b_{i} } \right ) ^{2}}  \ket{1})$ and postselect on outcome $\ket{0}$. If outcome $1$ is obtained repeat step 2-4.

\STATE Apply the $2T$ dimensional quantum Fourier transform circuit to obtain an amplitude encoding of  $\int^{t}_{0} K(s, t) d\mu_{s} $ concatenated with its reversal. 
\STATE Measure the auxiliary qubit added in step 2, if $0$ then we have amplitude encoding of  $\int^{t}_{0} K(s, t) d\mu_{s} $if $1$ then we have the reversed amplitude encoding.

\end{algorithmic}
\end{algorithm}

The quantum representation of the stochastic integral $\int^{t}_{0}  K(t-s) \mu(s) ds$ is obtained by multiplying the initial state $\ket{ (K(s), 0^{T})} $ by the matrix $C_{\mu}$. 
Multiplication by the matrix $C_{mu}$ can be in turn implemented using the spectral decomposition $C_{mu} =U \diag(\widehat{\mu}) U^{-1}$, the unitaries $U$ correspond to the quantum Fourier 
transform while the multiplication by the diagonal matrix can be implemented probabilistically using a post-selection step similar to that used in the HHL algorithm.

The algorithm for preparing the representation of the stochastic integral is given as Algorithm \ref{integral}. It is described for the case of integrals over a L\'{e}vy noise $d\mu_{s}$. In particular, integrals over the Brownian 
motion can be obtained by taking $\mu_{s}= dB_{s}$. We next establish the correctness of the algorithm and bound its success probability, 

\begin{theorem} 
Algorithm \ref{integral} generates the amplitude encoding of $\int^{t}_{0} K(s, t) d\mu_{s} $, requires $O(T \log T)$ resources and succeeds with probability at least $1/C^{2}$ where $C=  \frac{  \min_{i} b_{i}}{ \max_{i} b_{i} }$ is the 
ratio of the maximum and minimum Fourier coefficients of the L\'{e}vy noise $\mu_{s}$. 
\end{theorem}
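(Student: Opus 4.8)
The plan is to verify that, modulo the post-selection, Algorithm~\ref{integral} is nothing but multiplication of the embedded kernel state $\ket{(K,0^{T})}$ by the $2T\times 2T$ circulant matrix $C_{\mu}$ of \eqref{circmult}, and then to extract both the correctness claim and the block structure of the output from the Toeplitz-into-circulant embedding together with the discrete interpretation of the stochastic integral given just before Definition~\ref{circ}. First I would invoke the standard fact recalled in the text that every circulant matrix factors as $C_{\mu}=U\,\diag(\widehat{\mu})\,U^{-1}$, where $U$ is the $2T$-dimensional QFT and $\widehat{\mu}=\QFT$ of the first column of $C_{\mu}$, and observe that the three factors are exactly what Steps~3, 4 and~5 implement: Step~3 applies $U^{-1}$ to $\ket{K,0}$; Steps~1 and~4 realize a sub-normalized $\diag(b_{i}/\max_{i}b_{i})$ by attaching an ancilla, rotating its amplitude to $b_{i}/\max_{i}b_{i}$ controlled on the index register, and post-selecting the ancilla on $\ket{0}$; and Step~5 applies $U$. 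Hence the unnormalized state just before Step~6 is proportional to $C_{\mu}\ket{(K,0^{T})}$.

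Next I would identify the two halves of this vector. From the block form \eqref{circmult}, $C_{\mu}(K,0^{T})^{\top}=(T_{\mu}K,\,T_{\mu}'K)^{\top}$, so the high-order qubit of the $2T$-dimensional register --- equivalently the auxiliary qubit appended in Step~2 --- being $\ket{0}$ selects $T_{\mu}K$, which by the discussion preceding Definition~\ref{circ} is precisely the amplitude encoding of the discretization of $\int_{0}^{t}K(t-s)\,d\mu_{s}$, while the outcome $\ket{1}$ gives the reversed encoding $T_{\mu}'K$. This establishes the correctness statement of Step~6.

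For the success probability, I would write $\ket{\widehat K}=\sum_{i}\widehat{k_{i}}\ket{i}$ for the unit vector produced in Step~3 and compute the probability that the Step~4 ancilla reads $\ket{0}$ as $\sum_{i}|\widehat{k_{i}}|^{2}(b_{i}/\max_{i}b_{i})^{2}\ge(\min_{i}b_{i}/\max_{i}b_{i})^{2}$, using $\sum_{i}|\widehat{k_{i}}|^{2}=1$; with the convention $C=\max_{i}b_{i}/\min_{i}b_{i}$ this is $\ge 1/C^{2}$, so repeat-until-success terminates after $O(C^{2})$ rounds in expectation. I would also connect $C$ to the spectral flatness proved earlier: by the Wiener--Khintchine computation the power spectrum of the L\'{e}vy noise is constant, so Chebyshev's inequality forces $\min_{i}b_{i}$ and $\max_{i}b_{i}$ over the $L$ retained coefficients to lie within a bounded ratio with high probability, i.e.\ $C=O(1)$. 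Finally I would tally resources: Step~1 is a classical length-$2T$ FFT costing $O(T\log T)$; the $2T$-dimensional QFT and inverse QFT cost $\polylog(T)$; preparing $\ket{K}$ with the unary data loader plus the unary-to-binary conversion of Appendix~\ref{app:c} costs $O(T)$ gates and $O(T)$ qubits up to polylogarithmic depth overhead; and the index-controlled ancilla rotation of Step~4 costs $O(T)$ gates, for a dominant $O(T\log T)$ total.

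I expect the main obstacle to be the Toeplitz $\leftrightarrow$ circulant bookkeeping: one has to check carefully that the vector fed to the QFT in Step~1 really is (a truncation of) the eigenvalue list of $C_{\mu}$, i.e.\ the $\QFT$ of its \emph{first column} and not of $\mu$ padded arbitrarily, and that the $\ket{0}$ branch of the leading qubit is $T_{\mu}K$ rather than its reversal --- the sign and index conventions in \eqref{circmult} and in the embedding are easy to get backwards. A secondary subtlety is that the $b_{i}$ are random and the truncation to the $L$ largest coefficients changes $\min_{i}b_{i}$, so the clean bound $1/C^{2}$ is genuinely a high-probability statement that must be routed through the spectral-flatness claim; I would also flag that, as literally written with $C=\min_{i}b_{i}/\max_{i}b_{i}\le 1$, the bound $1/C^{2}\ge 1$ is vacuous, and the intended reading is the reciprocal ratio (or ``probability at least $C^{2}$'').
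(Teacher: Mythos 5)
Your proposal follows essentially the same route as the paper: factor the circulant as $C_{\mu}=\QFT\,\diag(\widehat{\mu})\,\QFT^{-1}$, identify steps 3--5 with the three factors (with post-selection realizing the diagonal), read off the forward/reversed encoding from the block structure of the Toeplitz-into-circulant embedding, and bound the post-selection success probability by $(\min_i b_i/\max_i b_i)^2$ using normalization of $\ket{\widehat{K}}$. Your observation that the bound $1/C^{2}$ is vacuous as literally written with $C=\min_i b_i/\max_i b_i\le 1$ is a fair catch of a typo in the statement (the intended $C$ is the reciprocal ratio, as the paper's own proof makes clear), and your routing of $C=O(1)$ through the spectral-flatness claim matches the paper's appeal to Section~\ref{levy}.
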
 
\begin{proof} 
We argue that Algorithm \ref{integral} implements correctly the multiplication of the discretized kernel by the circulant matrix $C_{\mu}$, that is step 6 generates the amplitude encoding of the result of the following matrix multiplication, 
\al{ 
\left( 
\begin{matrix} 
& T_{\mu} & T_{\mu}^{'} \\
& T_{\mu}^{'} & T_{\mu} 
\end{matrix}  \right )
\left ( \begin{matrix} 
& K  \\
& 0
\end{matrix}  \right )
}  
The result of this matrix multiplication is an amplitude encoding of  $\int^{t}_{0} K(s, t) d\mu_{s} $ concatenated with its reversal. Measuring the last qubit in step 7 yields either 
the amplitude encoding of  $\int^{t}_{0} K(s, t) d\mu_{s} $ if the outcome is $0$ and the amplitude encoding of the reversal of  $\int^{t}_{0} K(s, t) d\mu_{s} $ if the outcome is $1$. Applying the 
operation $\ket{i} \to \ket{T-i}$ to the amplitude encoding of the reversal recovers the amplitude encoding for $\int^{t}_{0} K(s, t) d\mu_{s} $. 

The matrix multiplication by the circulant matrix is implemented using the relation $C_{\mu} = \QFT \diag(\widehat{\mu}) \QFT^{-1}$ in steps 3-5 of the algorithm, steps 3 and 5 are unitary while step 4 involves post-selection. 
The success probability for the post-selection step 4 is  at least $\frac{  \min_{i} b_{i}}{ \max_{i} b_{i} }^{2} \geq \frac{1}{C^{2}}$. The analysis of the spectrum of L\'{e}vy processes in Section \ref{levy} shows that for most L\'{e}vy processes $C$ is
a constant. The resources required for the algorithm are $O(T \log T)$ to compute the Fourier transform of the L\'{e}vy noise classically and truncate to $L$ terms, the quantum resources needed are $O(L \log L + \log^{2}T)$ operations for steps 2-4 of the algorithm. 

\end{proof}

Note that although more general and applicable to L\'{e}vy processes with flat Fourier spectrum, Algorithm \ref{integral} generates the amplitude encodings of trajectories of L\'{e}vy processes. Generating the coherent amplitude encoding 
for L\'{e}vy processes would  require additional quantum resources compared to Brownian motion as the coefficients in the Fourier expansion of L\'{e}vy process are not independent except for the case of Brownian motion and loading a multi-variate distribution over the angles of a data loader is computationally more expensive than individually loading independent angle distributions as in the case of Brownian motion.

\section{Quantum Monte Carlo methods. }
\label{sec:mcmethods}

Analog representations $\ket{S(T)}$ of the stochastic process are compatible with standard quantum Monte Carlo methods and can be used as part of the simulation circuit (oracle) for estimating a function of the stochastic process using the quantum amplitude estimation algorithm. In this section, we provide a quantum Monte Carlo algorithm to estimate functions of stochastic processes given a coherent analog encoding. The method is described for fractional Brownian motion, but is applicable more generally to stochastic processes with coherent analog encodings. 

The goal of the algorithm is to estimate a degree $d$ function $f: S(t)^d \to \R$ of the stochastic process. A linear function with $d=1$ corresponds to the expectation of an inner product $E_{v(t) \sim S}  \braket{f(t)}{v(t)}$. Time averages over the stochastic process are examples of linear functions with $f(t)$ being a step function. Quadratic functions with $d=2$ correspond to mean square averages and variances of linear functions, and  can be expressed as the inner product of a function with two copies of the trajectory of the stochastic process. The quantum Monte Carlo method is more efficient for low degree functions and for Hurst parameters $H>1/2$ for fractional Brownian motion. 

A general setting for which quantum Monte Carlo methods are applicable is that in which the function to be estimated can be encoded as an amplitude. That is, there is a circuit for unitary $U$ such that, 
\al{ 
U \ket{0} = \alpha \ket{x} \ket{0} + \beta \ket{x^{\perp} } \ket{1} 
}
the quantum amplitude estimation algorithm can then be used to estimate $\alpha$ to additive error $\epsilon$ using $O(1/\epsilon)$ queries, and further low depth variants of amplitude estimation can
obtain a speedup that is proportional to the depth $D$ to which the quantum circuit for $U$ can be run on the quantum hardware. 

The quantum Monte Carlo algorithm is given as Algorithm \ref{QMC} where the goal is to estimate either $E_{B_{H}(t)} [\frac{ \braket{f(t)}{B(t)} } { \norm{B(t)} } ]$ or $E_{B'_{H}(t)} [\braket{f(t)}{B(t)}   ]$  where the expectation 
is over fractional Brownian motion trajectories and over trajectories $B'_{H}(t)$ of bounded norm in the second case. Estimating $E_{B_{H}(t)} [\frac{ \braket{f(t)}{B(t)} } { \norm{B(t)} }  ]$ requires less quantum resources but is likely to have an analytic closed form for simple functions $f$ as this is equivalent to computing the Fourier transform for $f$ and computing an inner product with the Wiener series in the Fourier domain. Estimating $E_{B_{H}(t)} [ \braket{f(t)}{B(t)} ]$ requires an additional norm computation and conditional rotation step in Algorithm \ref{QMC}, however the estimate produced does not have a closed form solution as the process in addition post-selects over Brownian paths of a certain norm, thus additional eliminating the additional structure in the Fourier domain arising from the Wiener series expansion.

\begin{algorithm} [H]
\caption{Quantum Monte Carlo algorithm with coherent analog encodings.} \label{QMC} 
\begin{algorithmic}[1]
\REQUIRE Hurst parameter $H$, number of time steps $T$ and terms $L$ in the Wiener series, a function $f: B_{H}(t) \to \R$, quantum circuit $V$ such that  $V \ket{0^{\log l}} = \ket{f} = \frac{1}{ \norm{f} } \sum_{t \in [T]} f(t) \ket{ t }$, the $\norm{f}$ is assumed to be a constant, power law parameter $\alpha=H+0.5$, an upper bound $B_{max}$ on the norm of the Brownian paths considered by the algorithm. 
\ENSURE An additive error $\epsilon$ estimate for the expectation $E_{B_{H}(t)} [\frac{ \braket{f(t)}{B_{H}(t)} } { \norm{B_{H}(t)} }    ]$ where the expectation is over Fractional Brownian motion paths with Hurst parameter $H$ or an estimate for $E_{B'_{H}(t)} [\braket{f(t)}{B_{H}(t)}   ]$ where the Fractional Brownian motion paths are have norm at most $B_{max}$. 

\STATE  Prepare independently the angle distributions and the Gaussian states to obtain the quantum state in equation \eqref{step2}, 
\al{ 
\frac{1}{ Z_{1} } \sum_{\theta_{i}  }  \prod_{i \in [L-1]}  \sqrt{p(\theta_{i})} \ket{\theta_{1}, \theta_{2}, \cdots, \theta_{L-1} }  \en{ \sum_{k \in [L]} \frac{1}{ k^{\alpha}}  \ket{ k } } \en{ \sum_{i \in [L]} a_{i} \ket{ i } }.  
} 
\STATE Apply CNOT gates to map $\ket{k} \to \ket{ k \oplus i } = \ket{j}$ to obtain, 
\al{ 
\frac{1}{ Z_{1} } \sum_{\theta_{i}  }  \sqrt{p(\theta) } \ket{\theta_{1}, \theta_{2}, \cdots, \theta_{L-1} } \ket{j}  \en{ \sum_{k \in [L]} \frac{a_{k \oplus j} }{ k^{\alpha}}  \ket{ k  } }.
}  

\STATE Apply step 4 of Algorithm \ref{Brownian} so that the last register has $\log T$ qubits and apply $V^{-1} (DST)$ on the last register containing the Brownian path to obtain the state, 
\all{ 
\en{ \sum_{\theta_{i}  }  \sqrt{p(\theta) } \ket{\theta_{1}, \theta_{2}, \cdots, \theta_{L-1} } \ket{j} }  \en{ \frac{ \braket{B_{H, j, \theta}(t)}{f(t)} } { \norm{B_{H, j, \theta}(t)} \norm{f} }  |0^{\log T} >  + \beta \ket{0^{\perp} }}.  
}  {fifteen} 
where the normalization factor $1/Z_{1}= 1/\norm{ B_{H, j, \theta}(t)}$ for  $\norm{ B_{H, j, \theta}(t)}= \en{ \sum_{k}  a_{k \oplus j}^{2} / k^{2\alpha}}^{1/2}$ has been explicitly included. 
\STATE Additional step: Compute $\norm{ B_{H, j, \theta}(t)}= \en{ \sum_{k}  a_{k \oplus j}^{2} / k^{2\alpha}}^{1/2}$ in an auxiliary register, append an extra qubit, apply a conditional rotation depending on the norm 
and uncompute the norm to obtain, 
\all{ 
\en{  \sum_{\theta_{i}  }  \sqrt{p(\theta) } \ket{\theta_{1}, \theta_{2}, \cdots, \theta_{L-1} } \ket{j} }  \en{ { \frac{ \braket{B_{H, j, \theta}(t)}{f(t)} } { \norm{B_{H, j, \theta}(t)} \norm{f} } }   \ket{0^{\log T} }  + \beta \ket{0^{\perp} } } \en{ \frac{\norm{B_{H, j, \theta}(t)}}{B_{\max}}    \ket{0 }  + \beta' \ket{1 }} .  
}  {sixteen} 
\STATE Perform quantum amplitude estimation (or a low-depth variant of amplitude estimation) to estimate either: 
\begin{enumerate} 
\item The amplitude for $\ket{0^{\log T} }$ in equation \eqref{fifteen} to additive error $\epsilon \norm{f}$ in order to estimate $E_{B_{H, j, \theta}(t)} [\frac{ \braket{f(t)}{B_{H, j, \theta}(t)} } { \norm{B_{H, j, \theta}(t)} }  ]$ to additive error $\epsilon$.   
\item The amplitude for $\ket{ 0^{\log T+1} }$ in equation \eqref{sixteen} to to additive error $\epsilon B_{\max} \norm{f}$ in order to estimate $E_{B_{H, j, \theta}(t)} [\braket{f(t)}{B_{H, j, \theta}(t)}   ]$ to additive error $\epsilon$.

\end{enumerate}

\end{algorithmic}
\end{algorithm}

\subsection{Proof of correctness}
The correctness proof shows that the estimates produced by algorithm \ref{QMC} are $\epsilon$ close to the true values in additive error. The analysis 
proceeds by analyzing in turn the errors due to truncation of the Wiener series for $L$ terms, tracing out the auxiliary registers and other 
sources of error due to finite precision that are poly-logarithmic. 

\begin{claim} \label{claim1} 
If $L=O(1/\epsilon^{1/2H})$ terms are retained in the stochastic Fourier series for fractional Brownian motion with Hurst parameter $H$, then $E_{B_{H}(t)} [\frac{ \braket{f(t)}{B_{H}(t)} } { \norm{B_{H}(t)} } ]$ is approximated to 
additive error $O(\epsilon)$ for all test functions $f(t)$. 
\end{claim}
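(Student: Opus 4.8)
The plan is to recast the statement as a comparison, uniform over unit test vectors $\ket f := \frac{1}{\norm f}\sum_{t\in[T]} f(t)\ket t$, between the expectation over genuine fractional Brownian bridges $B_H$ (with their stochastic Fourier series, Theorem~\ref{wiener} with power $\alpha = H+\tfrac12$) and the expectation over their $L$-term truncations $B_{H,L}$ — the latter being, up to the poly-logarithmic finite-precision errors dealt with at the end, exactly the quantity produced by the $\ket{0^{\log T}}$-amplitude estimation in step~5 of Algorithm~\ref{QMC} (equation~\eqref{fifteen}). So it suffices to show $\bigl|\E\braket{f}{\widehat B_H} - \E\braket{f}{\widehat B_{H,L}}\bigr| = O(\epsilon)$, where $\widehat{\,\cdot\,}$ denotes $\ell_2$-normalization; uniformity in $\ket f$ is free since $|\braket{f}{v}|\le\norm v$. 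The two facts I would quote are: (i) the discarded tail $e := B_H - B_{H,L}$ satisfies $\E\norm e^{2} = O(L^{-2H})$ — this is precisely the Section~\ref{efbm} error estimate obtained by bounding the tail of $\zeta(1+2H)$ by an integral — hence $\E\norm e^{2} = O(\epsilon)$ once $L = \Theta(\epsilon^{-1/2H})$; and (ii) $e$ is independent of $B_{H,L}$ (disjoint blocks of i.i.d.\ Gaussian Fourier coefficients) and symmetric, i.e.\ $-e$ has the same distribution as $e$, being a linear image of i.i.d.\ centered Gaussians.

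First I would record the deterministic per-trajectory expansion: whenever $\norm e \le \tfrac12\norm{B_{H,L}}$,
\al{
\braket{f}{\widehat B_H} - \braket{f}{\widehat B_{H,L}} &= D_1(e) + R, \qquad D_1(e) := \frac{\braket{f}{e}}{\norm{B_{H,L}}} - \frac{\braket{f}{B_{H,L}}\,\braket{B_{H,L}}{e}}{\norm{B_{H,L}}^{3}}, \\
|R| &\le \frac{C\,\norm e^{2}}{\norm{B_{H,L}}^{3}},
}
which follows by Taylor-expanding $s\mapsto \braket{f}{B_{H,L}+se}/\norm{B_{H,L}+se}$ to second order and using $\norm{B_{H,L}+se}\ge\tfrac12\norm{B_{H,L}}$ for $s\in[0,1]$ (here $C$ is absolute, and every appearance of $\braket{f}{\cdot}$ has been crudely bounded by $\norm{\cdot}$, which is what keeps the constants $\ket f$-free). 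I would then split on the ``good'' event $\mathcal G := \{\norm{B_{H,L}}\ge \epsilon^{1/8}\}\cap\{\norm e \le \epsilon^{1/4}\}$, on which this hypothesis holds for small $\epsilon$, using only the trivial bound $|\braket{f}{\widehat B_H} - \braket{f}{\widehat B_{H,L}}|\le 2$ on $\mathcal G^{c}$.

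Next I would bound the three contributions. The first-order term vanishes identically in expectation: $D_1$ is an odd function of $e$, while $\mathbbm 1_{\mathcal G}$ is a product of a function of $\norm{B_{H,L}}$ alone and the \emph{even} function $\mathbbm 1[\norm e\le\epsilon^{1/4}]$, so conditioning on $B_{H,L}$ and using the symmetry of $e$ gives $\E[D_1(e)\,\mathbbm 1_{\mathcal G}] = 0$. For the remainder, independence of $e$ and $B_{H,L}$ gives $\E[|R|\,\mathbbm 1_{\mathcal G}] \le C\,\E\norm e^{2}\cdot\E[\norm{B_{H,L}}^{-3}]$, and $\E[\norm{B_{H,L}}^{-3}]$ is finite and bounded uniformly for all large $L$ because $\norm{B_{H,L}}^{2} = \sum_{k=1}^{L} a_k^{2}/k^{1+2H}$ is a positive combination of independent squared Gaussians (Proposition~\ref{c1}), hence has density $O(x^{L/2-1})$ near $0$ whose small-ball tail is killed once $L$ is large; this term is therefore $O(\E\norm e^{2}) = O(\epsilon)$. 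Finally, sub-exponential concentration of $\norm e^{2}$ (weights $k^{-(1+2H)}$, total $O(L^{-2H})$, maximum $O(L^{-(1+2H)})$) makes $\Pr[\norm e > \epsilon^{1/4}]$ super-polynomially small, and the small-ball estimate $\Pr[\norm{B_{H,L}} < \epsilon^{1/8}] = O(\epsilon^{L/8})$ is likewise $o(\epsilon)$, so $\E[(\cdots)\,\mathbbm 1_{\mathcal G^{c}}] \le 2\Pr[\mathcal G^{c}] = o(\epsilon)$. Summing the three pieces gives $\bigl|\E\braket{f}{\widehat B_H} - \E\braket{f}{\widehat B_{H,L}}\bigr| = O(\epsilon)$, and a last routine step absorbs the poly-logarithmic finite-precision errors ($K$-bit angle registers in~\eqref{step2}, the DST circuit, the unary-to-binary converter).

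The main obstacle is the random normalization $\norm{B_H}$: the naive estimate $|\braket{f}{\widehat B_H} - \braket{f}{\widehat B_{H,L}}| \le \norm{\widehat B_H - \widehat B_{H,L}} \le 2\norm e/\norm{B_{H,L}}$ only yields $O(\E\norm e) = O(L^{-H}) = O(\sqrt\epsilon)$, quadratically weaker than the claim. Getting to $O(\epsilon)$ genuinely needs the first-order variation $D_1(e)$ to average to zero — which is why I would lean on the Gaussian symmetry of the high-frequency tail together with its independence from the retained part — and then needs the second-order remainder controlled through inverse moments of $\norm{B_{H,L}}$, which are finite only because $L\to\infty$ makes $\norm{B_{H,L}}^{2}$ a sum of many independent squared Gaussians with a strongly suppressed lower tail. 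Pinning down rigorous $o(\epsilon)$ bounds for the two bad-event probabilities (so the crude bound there is harmless) is the other place requiring care.
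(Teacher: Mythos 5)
Your proposal is correct in its essentials, but it takes a substantially more careful route than the paper, whose entire proof consists of the tail estimate $\E[\norm{B_H - B_L}^2] = O(1/L^{2H})$ (the same $\zeta(1+2H)$ integral comparison you quote) followed by the one-sentence assertion that, since $\E[\norm{B_H(t)}]$ is a constant, the expectation of the normalized inner product is approximated to additive error $O(\epsilon)$. You correctly identify that this last step, read as a triangle-inequality bound on $\norm{\widehat{B}_H - \widehat{B}_{H,L}}$, only yields $O(\E\norm{e}) = O(\sqrt{\epsilon})$, and your second-order Taylor expansion --- killing the first-order term $D_1(e)$ by the sign-symmetry of the Gaussian tail $e$ and its independence from $B_{H,L}$, then controlling the remainder through inverse moments of $\norm{B_{H,L}}$ --- is precisely the extra work needed to recover the stated $O(\epsilon)$. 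The paper does not carry out (or even mention) this cancellation, so your argument is best viewed as supplying a missing justification rather than an alternative to an existing one. Two small points to tighten: the second derivative of the degree-zero-homogeneous map $v \mapsto \braket{f}{v}/\norm{v}$ in direction $e$ scales as $\norm{e}^2/\norm{v}^2$ rather than $\norm{e}^2/\norm{v}^3$ (harmless, since either inverse moment of $\norm{B_{H,L}}$ is uniformly bounded for large $L$ by stochastic monotonicity); and the claimed small-ball bound $\Pr[\norm{B_{H,L}} < \epsilon^{1/8}] = O(\epsilon^{L/8})$ does not survive the $L$-dependent constants coming from the weights $k^{-(1+2H)}$, but restricting attention to a fixed number $m$ of low-frequency coordinates already gives a bound $O_m(\epsilon^{m/8}) = o(\epsilon)$ for any fixed $m > 8$, which is all you need.
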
 
\begin{proof} 
The Fourier series expansion for  fractional Brownian motion with Hurst parameter $H$ is $B_{H}(t) = \sum_{k} \frac{a_{k}}{k^{H+0.5} } \sin(k\theta)$ for i.i.d. Gaussian random variables $a_{k}, k \in \mathbb{N}$.  

The number of terms $L$ that need to be retained in the stochastic Fourier series such that $E[\norm{B_{H}(t) - B_{L}(t) }^{2} ] \leq \epsilon$ can be calculated as follows. 
The tail probability $\sum_{k\geq L+1} a_{k}^{2}/k^{2H+1}$ is approximated by the integral $ \int_{L}^{\infty} dx/x^{1+2H} = O(1/L^{2H})$. Setting the tail probability to $O(\epsilon)$, it follows that $L=O(1/\epsilon^{1/2H})$ terms of the stochastic Fourier series are needed to ensure that $E[\norm{B_{H}(t) - B_{L}(t) }^{2} ] \leq \epsilon$. As $E[\norm{B_{H}(t)}]$ is a constant, we have that $E_{B_{H}(t)} [\frac{ \braket{f(t)}{B_{H}(t)} } { \norm{B_{H}(t)} } ]$ is approximated to additive error $O(\epsilon)$ for all test functions $f(t)$. 
\end{proof} 

The second claim for the correctness of the quantum Monte Carlo method shows that tracing out the $\theta$ and $j$ registers does not affect the expectations. 
\begin{claim} \label{claim2} 
 Tracing out the $\theta$ and $j$ registers does not change the expectation of the quantity being estimated, that is, 
 \al{ 
E_{B_{H, j, \theta}(t)} \frac{ \braket{B_{H, j, \theta}(t)}{f(t)} } { \norm{B_{H, j, \theta}(t)} \norm{f} } = E_{B_{H}(t)} \frac{ \braket{B_{H}(t)}{f(t)} } { \norm{B_{H}(t)} \norm{f} } 
 } 
\end{claim}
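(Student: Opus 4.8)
The plan is to reduce the claim to a purely distributional fact: once the auxiliary registers are traced out, the random path $B_{H,j,\theta}(t)$ carried by the Brownian-path register in Algorithm~\ref{QMC} is distributed exactly as the $L$-term truncation $B_{H,L}(t)$ of fractional Brownian motion from Section~\ref{efbm}. Recall from step~3 that, just before the discrete sine transform, the path register holds $\sum_{k\in[L]}\frac{a_{k\oplus j}}{k^{H+0.5}}\ket{k}$ (up to normalization), where $(a_1,\dots,a_L)$ is the vector produced by the unary data loader acting on the angle registers $\theta_1,\dots,\theta_{L-1}$. First I would invoke Lemma~\ref{lem:gaussian}: with the prescribed angle law $p(\theta)$, this output is precisely a vector of i.i.d.\ $N(0,1)$ variables. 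Hence tracing out the $\theta$ registers simply replaces the coherent angle superposition by the correct classical distribution on $(a_1,\dots,a_L)$, and it remains to understand the effect of the shift $k\mapsto k\oplus j$.

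Second, for each fixed $j$ the map $k\mapsto k\oplus j$ is an involution of the index set $[L]$, so $\en{\frac{a_{k\oplus j}}{k^{H+0.5}}}_{k\in[L]}$ is a centered Gaussian vector with covariance $\diag\en{k^{-2H-1}}$, which is exactly the covariance of $\en{\frac{a_{k}}{k^{H+0.5}}}_{k\in[L]}$ and does not depend on $j$. Applying the fixed orthogonal discrete sine transform then shows that, conditioned on any outcome of the $j$ register, $B_{H,j,\theta}(t)$ has the distribution of the $L$-term truncated fBM $B_{H,L}(t)$, whose Fourier coefficients decay as $k^{-H-0.5}$ (Section~\ref{efbm}). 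Since this conditional law is independent of $j$, the joint law of $B_{H,j,\theta}(t)$ over the internal randomness $(\theta,j)$ coincides with the law of $B_{H,L}(t)$, regardless of the weights the $j$ register carries in the coherent superposition --- in effect this is the coherent restatement, via the deferred-measurement principle, of Lemma~\ref{correct}.

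Third, because $B_{H,j,\theta}(t)$ and $B_{H,L}(t)$ have the same distribution, $E_{B_{H,j,\theta}(t)}\, g(B_{H,j,\theta}(t)) = E_{B_{H,L}(t)}\, g(B_{H,L}(t))$ for every functional $g$, and in particular for $g(B)=\braket{B(t)}{f(t)}/\en{\norm{B(t)}\norm{f}}$; Claim~\ref{claim1} then bounds the difference between the truncated and untruncated expectations by $O(\epsilon)$, giving the stated identity. I expect the one genuinely delicate point to be the second paragraph: one must argue that the dependence on $j$ vanishes for the \emph{joint} law and not merely conditionally, since that is what lets us ignore precisely how amplitude estimation distributes weight over the garbage registers. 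This is exactly the permutation (covariance) invariance of an i.i.d.\ Gaussian vector --- the same mechanism behind Lemma~\ref{correct} --- so it should go through cleanly. A minor additional point, which I would defer to the finite-precision error analysis, is that truncating each $\theta_i$ to $K$ bits perturbs the Gaussian law only by an amount exponentially small in $K$.
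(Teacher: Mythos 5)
Your proposal is correct and follows essentially the same route as the paper: both arguments reduce the claim to (i) Lemma~\ref{lem:gaussian} giving i.i.d.\ Gaussian amplitudes once the $\theta$ registers are traced out, and (ii) the fact that $k\mapsto k\oplus j$ is a permutation of an i.i.d.\ (hence exchangeable) Gaussian vector, so the shift by $j$ leaves the law of the path invariant. The one point where your phrasing is slightly stronger than what is literally established is the assertion that the law \emph{conditioned on the outcome} $j$ is exactly that of truncated fBM --- the probability of observing $j$ depends on $\theta$ through the norm $\sum_k a_{k\oplus j}^2/k^{2\alpha}$, so conditioning on $j$ in principle reweights the Gaussian ensemble; the paper's own proof handles this at the same level of rigor by computing the $j$-summed weight and observing it is proportional to $\norm{\tilde{a}}^{2}$, hence spherically symmetric, so neither argument is materially more complete than the other.
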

\begin{proof} 
The claim is equivalent to showing that after tracing out the $\theta$ and $j$ registers, the states $\sum_{k \in [L]} \tilde{a_{k}}/k^{\alpha}$ in the last register in equation \eqref{sixteen} represent
uniformly random fractional Brownian paths. As the $\theta$ are chosen so tracing out the $\theta$ registers ensures that the $\tilde{a_{k}}$ are i.i.d. Gaussian, it suffices to show that tracing out the $j$ registers, 
 the distribution the  $\tilde{a_{k}}$ remains spherically symmetric, 
\al{ 
\sum_{j} Pr[j] Pr [ \tilde{a} \;|\; \theta, j] =   \frac{1}{ L} \sum_{k, j} \frac{\tilde{a}_{k \oplus j}^{2} }{ k^{2\alpha}} = C\norm{\tilde{a}}^{2} 
} 
It follows that the states in the last register in equation \eqref{sixteen} when $\theta$ and $j$ registers are traced out represent random fractional Brownian paths, so the quantity being estimated by the algorithm is $E_{B_{H}(t)} \frac{ \braket{B_{H}(t)}{f(t)} } { \norm{B_{H}(t)} \norm{f} }$. 
\end{proof} 

\noindent With these auxiliary claims, we can complete the runtime analysis of the quantum Monte Carlo method and resources required for it, 

\begin{theorem} 
Algorithm \ref{QMC} estimates $E_{B_{H}(t)} [\frac{ \braket{f(t)}{B_{H}(t)} } { \norm{B_{H}(t)} }]$ to additive error $\epsilon$ using 
$\widetilde{O}(L + \log T)$ qubits, $\widetilde{O}( (L+ polylog(T) + G)/\epsilon \norm{f})= \widetilde{O}( polylog(T)/\epsilon^{1+1/2H})$ gates where $G$ is the number of gates in the 
circuit $V$ for preparing $\ket{f}$.  
\end{theorem}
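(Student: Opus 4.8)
The plan is to assemble the theorem from the two correctness claims already established together with a straightforward cost accounting for each stage of Algorithm \ref{QMC} and the query complexity of amplitude estimation. First I would settle correctness. By Claim \ref{claim1}, retaining $L = O(\epsilon^{-1/2H})$ terms of the stochastic Fourier series makes the truncated process $B_L$ satisfy $E[\norm{B_H - B_L}^2] = O(\epsilon)$, and since $E[\norm{B_H}]$ is a constant bounded away from $0$, this converts (via Cauchy--Schwarz and the fact that the analog encoding carries the normalization $1/\norm{B_H}$) into an $O(\epsilon)$ additive error on the target quantity $E_{B_H}[\braket{f(t)}{B_H(t)}/\norm{B_H(t)}]$, uniformly over unit-norm $f$. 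By Claim \ref{claim2}, tracing out the $\theta$ and $j$ registers leaves the reduced state on the trajectory register distributed as a genuine (truncated) fBM path, so the amplitude whose magnitude amplitude estimation targets in \eqref{fifteen} is, in expectation over the garbage registers, exactly $E_{B_H}[\braket{f(t)}{B_H(t)}/(\norm{B_H(t)}\,\norm{f})]$. The remaining correctness ingredient is to absorb the finite-precision errors: truncating the data-loader angles $\theta_i$ to $K$ bits, the unary-to-binary conversion of Appendix \ref{app:c}, and the gate synthesis of the DST each contribute error exponentially small in its (poly-logarithmic) precision budget, so choosing all precisions $\polylog(T,\epsilon^{-1})$ keeps the total error $O(\epsilon)$.

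Next I would count resources for one invocation of the circuit in steps 1--4. Step 1 prepares $L-1$ independent angle registers and two $L$-dimensional amplitude encodings via unary data loaders (Lemma \ref{lem:gaussian}, Figure \ref{fig1}), costing $O(L)$ gates, $O(L)$ qubits and $\widetilde{O}(\log L)$ depth; step 2 is $O(L)$ CNOTs; step 3 applies the unary-to-binary converter ($\polylog(T)$ gates and depth), pads to $\log T$ qubits, applies the DST circuit (Corollary \ref{c0}, $\widetilde{O}(\log T)$ gates and depth) and then $V^{-1}$ at cost $G$; the optional step 4 adds only a $\polylog$ norm computation and a conditional rotation. Hence one invocation uses $O(L + \log T)$ qubits and $\widetilde{O}(L + \polylog(T) + G)$ gates. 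Amplitude estimation of the amplitude on $\ket{0^{\log T}}$ in \eqref{fifteen} to additive error $\epsilon\norm{f}$ (which gives additive error $\epsilon$ on the target since $\norm{f}$ is assumed constant) uses $O(1/(\epsilon\norm{f})) = O(1/\epsilon)$ invocations of a reflection built from this circuit, so the total gate complexity is $\widetilde{O}((L + \polylog(T) + G)/\epsilon)$. Substituting $L = O(\epsilon^{-1/2H})$ and $G = \polylog$ collapses this to $\widetilde{O}(\polylog(T)\,\epsilon^{-(1+1/2H)})$, while the qubit count remains $\widetilde{O}(L + \log T) = \widetilde{O}(\log T + \epsilon^{-1/2H})$, matching the claim.

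The main obstacle I expect is the error propagation through the sub-normalized amplitudes rather than any single step. Because the quantity sitting in the amplitude of \eqref{fifteen} carries the path-dependent normalization $1/\norm{B_{H,j,\theta}(t)}$, one must verify that amplitude estimation's additive-error guarantee on \emph{that} amplitude genuinely converts into an additive-error guarantee on $E_{B_H}[\braket{f(t)}{B_H(t)}/\norm{B_H(t)}]$ after $\theta,j$ are traced out --- in particular that the expectation over the garbage registers of the estimated amplitude equals the stated target, which is precisely where Claim \ref{claim2}'s spherical-symmetry computation is needed. A secondary subtlety is that Claim \ref{claim1}'s truncation bound is an \emph{expected} $\ell_2$ error; since the error is a positive-weighted sum of squared Gaussians it concentrates, so the bound also holds with high probability and the rare heavy-tail trajectories contribute negligibly to the expectation. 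Everything else --- the gate counts for data loaders, QFT/DST, and unary--binary conversion --- is bookkeeping using results already cited in the excerpt.
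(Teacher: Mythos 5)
Your proposal follows essentially the same route as the paper's proof: correctness is reduced to Claim \ref{claim1} (truncation error for $L=O(\epsilon^{-1/2H})$), Claim \ref{claim2} (tracing out the $\theta,j$ registers yields the correct expectation), and an $O(\epsilon)$ finite-precision budget for the angle registers, after which the gate count is the per-invocation cost $\widetilde{O}(L+\polylog(T)+G)$ multiplied by the $O(1/\epsilon\norm{f})$ amplitude-estimation repetitions. Your added remarks on Cauchy--Schwarz for the normalization and on concentration of the truncation error are elaborations of steps the paper treats more tersely, not a different argument.
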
 
\begin{proof} 
Claim \ref{claim2} shows that  that the amplitude being estimated by algorithm \ref{QMC} is $E_{B_{H}(t)} [\frac{ \braket{f(t)}{B_{H}(t)} } { \norm{B_{H}(t)} } ]$. There are two further sources of error, the first due to the finite precision for generating the distributions on the angles and the second due to 
truncation of the Wiener series to $L$ terms, claim \ref{claim1} shows that the truncation error is $O(\epsilon)$ for $L= O(1/\epsilon^{1/2H})$. Choosing $\log (1/\epsilon)$ qubits to encode each angle further ensures that the errors due to the finite precision of the angle registers is $O(\epsilon)$. 
Thus, the estimate obtained in step 5 of the algorithm is an $O(\epsilon)$ additive error estimate for $E_{B_{H}(t)} [\frac{ \braket{f(t)}{B_{H}(t)} } { \norm{B_{H}(t)} } ]$. 

The number of qubits used is $\widetilde{O}(L + \log T)$ where the $\widetilde{O}$ absorbs potential logarithmic factors due to higher precision 
on the angle registers. The oracle for the amplitude estimation circuit requires $(L+ polylog(T) + G)$ gates and the amplitude estimation algorithm needs to simulate the oracle $1/\epsilon \norm{f}$ times in step 5 of the algorithm to get the desired estimate, and the gate complexity bound follows. 
\end{proof} 
\noindent The classical Monte Carlo method for the task requires resources $\widetilde{O} (T /\epsilon^{2})$ while the quantum Monte Carlo method with using an oracle compiled from classical circuits would require  $\widetilde{O} (T /\epsilon)$.  The gate complexity of the quantum Monte Carlo method using the fBM simulator is $\widetilde{O}( polyLog(T)/\epsilon^{c})$. It is incomparable to the black box quantum Monte Carlo method 
as it achieves an exponential speedup in $T$, the $\epsilon$ dependence is worse. It is more efficient than the classical Monte Carlo method in the regime $c \in [1,2]$, that is for estimating function averages over fractional Brownian paths with $H>1/2$. 

The analysis of the quantum Monte Carlo method covers case 1 where $E_{B_{H}(t)} [\frac{ \braket{f(t)}{B_{H}(t)} } { \norm{B_{H}(t)} } ]$. The 
analysis of the post-selected quantum Monte Carlo method for estimating $E_{B'_{H}(t)} [\braket{f(t)}{B_{H}(t)}   ]$ is not carried out explicitly as it depends on the post-selection procedure, however the post-selected variant is expected to be more useful in practice due to the lack of an analytic closed form solution for the quantity being estimated.

\subsection{Applications to Monte Carlo methods}

In this section, we provide two further examples of applications of the analog encoding of fractional Brownian motion to Monte Carlo methods. The first application is for pricing variance swap options, while the second is for statistical analysis of anomalous diffusion processes. In these end-to-end examples, we harness the $\widetilde{O}( polylog(T)/\epsilon^{c})$ speedup.

\subsubsection{Pricing Variance Swap options} 
In this section, we consider the problem of pricing a variance swap. 

We assume we have a filtered probability space $(\Omega, \mathcal{F}, \mathbb{F}, \mathbb{P}),$ where $\mathcal{F}=(\mathcal{F}_t)_{0\leq t \leq T}$. The filtration $\mathcal{F}$ represents the information available at a given time.

We consider a model such that the price $S$ and volatility $\sigma,$ under a risk neutral measure $\mathbb{Q},$ are given by $$ S_t=e^{\int_{0}^{t}\sigma_sdB_s+\int_{0}^{t}(r-\frac{1}{2}\sigma^{2}_s)ds }$$ 
and $\sigma_t=B^{H}_t$
for a Brownian motion $B$ and an independent  Fractional Brownian motion $B^{H}$ with Hurst parameter $H.$

The log returns, $R_i,$ are given by $$R_i=\ln S_{t_{i+1}}-\ln S_{t_{i}}= \int_{t_i}^{t_{i+1}}(r-\frac{1}{2}\sigma^{2}_s)ds+   
  \int_{t_i}^{t_{i+1}} \sigma_s dB_s \sim N(\int_{t_i}^{t_{i+1}}(r-\frac{1}{2}\sigma^{2}_s)ds , \int_{t_i}^{t_{i+1}} \sigma^{2}_sds)$$.

\noindent A variance swap is an over-the-counter derivative that allows its holder to speculate on the future volatility of the asset price, without any exposure to the asset itself. In such a swap, one party pays amount that is based on the variance of the asset. The other party pays a fixed amount, i.e. the strike price, which is set so that the present value of the payoff is equal to zero.

The realized variance of the asset price over a discretized time interval $0 \leq t_t \leq t_{i+1} \leq T$ is given by $$ \sigma^{2}_{realized} = \frac{A}{n}\sum_{i=1}^{n}{R_i}^{2},$$ where $A$ is an annualization factor, and $n+1$ is the number of observed prices.

The payoff of a variance swap is given $$N_{var}( \sigma^{2}_{realized}-\sigma^{2}_{strike})$$
\noindent In the above, $N_{var}$ is called the variance notional. If we assume that the volatility follows a Fractional Brownian motion, we can use  our analog encoding of Fractional Brownian motion, to compute the strike price, $$\sigma^{2}_{strike}=\mathbb{E}_{\mathbb{Q}}[\sigma^{2}_{realized}| \mathcal{F}_{t_0}]$$ since $$\mathbb{E}_{\mathbb{Q}} 
 (\sigma^{2}_{realized}| \mathcal{F}_{t_0}) = \frac{A}{n}\sum_{i=1}^{n} \mathbb{E}_{\mathbb{Q}} ( {R^{2}_i})=\frac{A}{n}\sum_{i=1}^{n} \mathbb{E}_{\mathbb{Q}} \int_{t_i}^{t_{i+1}} \sigma^{2}_sds. $$  

 \noindent Let $\pi$ denote the projective measurement onto time steps $t_1  \leq t \leq t_n=T.$ The expectation value of $\pi$ is equal to $\sum_{i=1}^{n}{R^{2}_i}.$ and is thus equal to
 $\mathbb{E}_{\mathbb{Q}}][\sigma^{2}_{realized}| \mathcal{F}_{t_0}]$ up to constant factors. \noindent 
 Note that the above quantity can be computed using a quantum Monte Carlo method by modifying Algorithm \ref{QMC} to tag the amplitudes for time steps $1$ to $T$ in step 3. 
 
If the average is computed over all possible sample paths of the prices process $S$, then the realized variance has a closed form solution. Mild post-selection over the paths, for example choosing paths which whose norm is lies in the interval $[B_{min}, B_{max}]$ suffices to ensure that 
$\sum_{i=1}^{n}{R^{2}_i}$ does not have a closed form solution. 

The method described above prices the variance swap option assuming the volatility is an fBM. Pricing an option on realized variance where the \textit{log} volatility is a fractional Brownian motion would require efficient quantum algorithms for generating analog encodings of geometric Brownian motion and more generally exponentiated fractional Brownian motions. Generating such encodings is an important open question for quantum finance. 

\bigskip

\subsubsection{Anomalous Diffusion of Particles} 

A second example is that of single-particle superdiffusion, an example of \textit{}{anomalous diffusion}, in the context of molecular motion. Single-particle tracking is relevant for  particles in microscopic systems as well as animal and human motion. 
Anomalous diffusion is common in (super)crowded fluids, e.g. the cytoplasm of living cells. We next define anomalous diffusion in terms of the 
mean square averages.

 The time-averaged mean-square-deviation (TAMSD) of a particle is a measure of the deviation of the position of a particle with respect to a reference position over time. Let $X_t$ denote the position of the single particle at time $t$ with $t \in [T]$. The TAMSD is then defined as:

 \begin{equation}\label{tam}
M_{T}(\tau)= \frac{1}{T-\tau} \sum_{j=1}^{T-\tau}(X_{j+\tau}-X_{j})^{2}
 \end{equation} 

The mean-square-displacement (MSD) of a particle with position $X_t$ at time $t$ and with PDF of displacement $P(t,x)$ at time $t$ is given by: 

\begin{equation*} \langle X^{2}(t)\rangle=  \int_{-\infty}^{\infty} x^{2}P(t,x)dx \end{equation*}

For a particle following a fractional Brownian motion trajectory, the mean TAMSD has the following scaling: 
$$\langle M(\tau)\rangle \propto \tau^{2H},$$
where $H$ is the Hurst parameter of the fractional Brownian motion. The constant of proportionality is the diffusion coefficient, $D.$ 
\bigskip
The TAMSD be used to classify anomalous diffusion behaviour of a single particle.  We write $$M_{T}(\tau,D,H)$$ to make explicit its dependence on the Hurst parameter $H$ as well as $D.$  We assume that $D$ is known. Calculating the TAMSD of the particles is telling of their diffusive behaviour; that is, it can distinguish between sub and super-diffusive behaviour. 

 Superdiffusion is salient in the traveling behaviour of humans and spreading of infectious diseases \cite{sikora2017mean}. It corresponds to fBM $H \textgreater \frac{1}{2},$ and long-range correlations between displacements and is thus well modeled by Fractional Brownian Motion with $H>1/2$.  
\bigskip 

 Sikora et. al in \cite{sikora2017mean} proposed a statistical test using the TAMSD to characterize anomalous diffusion. It is known that if the particle trajectory $X_t$ follows an fBM, $(T-\tau)M_{T}$ is distributed as \textit{generalized chi-squared}, that is, as $$Y=\sum_{j=1}^{T-\tau}\lambda_{j}(\tau,D,H)U_j,$$
where the $U_j$ are distributed as i.i.d  $\chi^{2}(1)$ or $\gamma(\frac{1}{2}).$ \bigskip 

The $\lambda_j$ are the eigenvalues of the $(T-\tau) \times(T-\tau)$ covariance matrix of the (Gaussian) vector of fBM increments $[(B^{H}_{1+\tau}-B^{H}_1), (B^{H}_{2+\tau}-B^{H}_{2}),........(B^{H}_{N+\tau}-B^{H}_T)]. $ This covariance matrix, $\Sigma,$ that Toeplitz, has the following entries on the $i$th diagonal $\frac{D}{2}[(i+\tau)^{2H}-2i^{2H}+|i-\tau|^{2H}] $. The $(100-\alpha)\%$ confidence interval for the test statistic $M^{T}(\tau)$ is given by, 

\begin{equation}\label{interval}[\frac{DQ_{\frac{\alpha}{2}}}{T-\tau}, \frac{DQ_{1-\frac{\alpha}{2}}}{T-\tau}].\end{equation} 

\noindent In the above, the $Q_{\frac{\alpha}{2}}$ are quantiles such that $Pr(Y \textless Q_{\frac{\alpha}{2}}) \textless \frac{\alpha}{2} $ and $Q_{1-\frac{\alpha}{2}}$ is such that $Pr(Y \textgreater Q_{1-\frac{\alpha}{2}}) \textless \frac{\alpha}{2}.$ The null hypothesis $H_0$ such that $X_t$ follows an fBM with given Hurst parameter, $H_{test}.$ The alternative hypothesis, $H_1,$ is such that the particle trajectory $X_t$ is an fBM with a different Hurst parameter, or that the trajectory follows a continuous-time random walk. $H_0$ is rejected if the test statistic $M^{T}(\tau)$ falls outside of the above confidence interval.

The authors in \cite{sikora2017mean} use Monte Carlo Methods to estimate the power of this statistical test--that is, the probability that it rejects the null hypothesis, given that the alternative hypothesis is true. The power of the test is defined as, $$\Pr(M^{T}(\tau)) \notin [\frac{DQ_{\frac{\alpha}{2}}}{T-\tau}, \frac{DQ_{1-\frac{\alpha}{2}}}{T-\tau}].        $$ 

\noindent Such a Monte Carlo test hinges on the calculation of the empirical probability that the test statistic $M^{T}(\tau)$ does not lie in the above interval.

We can use our analog encoding of Fractional Gaussian noise (using that its spectrum, $f(\omega),$ is proportional to $\omega^{1-2H}$) with given Hurst parameter $H$ to output the following state: 

\all{ 
\frac{1}{T-\tau} \sum_{j \in [T]}( B^{H}(j+\tau)-B^{H}(\tau)) \ket{j}.
} {eq21}

If we assume the alternative distribution of $X$ is a compound poisson process, a special case of a continous time random walk, we can use algorithm \ref{integral} to obtain as output its amplitude encoding. 

Below is quantum Monte Carlo procedure to calculate the power of the statistical test, based on the method in \cite{sikora2017mean}:

\begin{enumerate}
\item Calculate the eigenvalues $\lambda_{j}$ for the Covariance matrix $\Sigma$. 
\item Using the amplitude encoding for shifted fBM in equation \eqref{eq21}, generate $R$ samples of $$(T-\tau)M^{T}(D, H_{test},\tau).$$ 

\item Using the $R$ above samples, calculate empirical quantiles $Q_{\frac{\alpha}{2}}$ and $Q_{1-\frac{\alpha}{2}}$
and the confidence interval in \eqref{interval}. 
\item Use the analog representation of wither fBM with a different Hurst parameter, or a Compound Poisson process, using algorithm \ref{integral} to simulate the alternative distribution. 

\item Estimate the value of the test statistic $M^{T}(\tau),$ using \eqref{tam} or \eqref{eq21}. 

\item Set a random counter $z$ to 0. If the test statistic from the last step falls out of the interval \eqref{interval} computed in step 3, then add $1$ to $z,$ else, add nothing to it. 

\item Repeat the last 3 steps a total of $L$ times, and the power of the test is given by $\frac{z}{L}.$

\end{enumerate}

Note that we can take a projective measurement, $\pi,$ onto times $t_1  \leq t \leq t_n=T.$ The expectation value of $\pi$ is equal to \begin{equation*}
 M_{T}(\tau)= \frac{1}{T-\tau} \sum_{j=1}^{T-\tau}(X_{j+\tau}-X_{j})^{2},
 \end{equation*} that is, the TAMSD. We can then use a simple modification of the Quantum Monte Carlo algorithm in \ref{QMC}, to compute the expectation of~\eqref{tam}.

 Our analog encoding can be used to distinguish between diffusive regimes where both the $H_{test}$ and $H$ under the alternative hypothesis are $ \approx 0.4$ and above, or if the alternative process is a Compound Poisson process. 
It is an open problem to use our analog encoding for very small $H$ in order to detect subdiffusive regimes (e.g. as is the case with telomere motion). \bigskip

Another characterization of of single particle dynamics is its ergodicity, or lack thereof. Ergodicity is characterised by the equivalence of the MSD and the TAMSD in the limit of long trajectory times. That is $$ \lim_{T \to\infty} M^{T}(\tau)= \langle X^{2}(\tau) \rangle $$

\bigskip

We define

$$\langle M^{N}(\tau)\rangle =\frac{1}{N}\sum_{i=1}^{N}M^{T}_{i}(\tau).$$ That is, the mean of the TAMSD over $N$ trajectories of the process. 

We also define $$\xi=\frac{\langle M^{N}(\tau)\rangle}{M^{N}(\tau)} $$

The ergodocity of a stochastic process is characterized by the Ergodicity-Breaking parameter (EB), defined below: 

\begin{equation*} EB (\tau)=\langle {\xi}^{2} \rangle -1         \end{equation*} 

 \bigskip 

We can use the algorithm in \ref{QMC} via quantum amplitude estimation to estimate the EB for Levy Processes, as well as fBM for $H \textgreater \frac{1}{2}.$

One can in turn use these calculations as benchmarks to characterize observed data, for example, via the distibution of $\xi,$ or via large deviation statistics of $\xi.$

\section*{Acknowledgments}
We thank Iordanis Kerenidis for helpful discussions and an anonymous referee for detailed feedback on the manuscript.
Adam Bouland's contribution to this publication was as a paid consultant and was not part of his Stanford University duties or responsibilities.

\bibliographystyle{alpha} 
\bibliography{b1.bib}

\begin{appendix}

\section{Quantum circuit for the discrete sine transform.} \label{dst}

We next provide an implementation of the discrete sine transform as a depth $O(\log T)$ unitary matrix. The discrete sine transform is closely related to the Quantum Fourier Transform (QFT) and the logarithmic depth circuit for it uses a recursive decomposition similar to the QFT. We provide an implementation for the quantum Fourier transform as a quantum circuit of logarithmic depth using Hadamard and phase gates. 

\begin{Lemma} \label{qft} 
The unitary matrix $U_{N}$ for the $N$ dimensional QFT with entries $(U_{N})_{ij} = \omega_{N}^{ij}$ for $N=2^{k}$ and $\omega_{N}= e^{2\pi i/N}$ 
can be implemented as a quantum circuit on $2k$ qubits with depth $O(k \log k)$. 
\end{Lemma}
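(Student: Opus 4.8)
The plan is to start from the textbook QFT circuit and then parallelize its commuting phase rotations using an ancilla-assisted fan-out, which costs only $k-1$ extra qubits and a $\log k$ factor in depth.

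First I would recall the standard circuit for $U_N$ on $k=\log_2 N$ qubits: for $i=1,\dots,k$ one applies a Hadamard to qubit $i$ and then the controlled-phase rotations $CR_m=\mathrm{diag}(1,1,1,e^{2\pi i/2^m})$ between qubit $i$ and qubit $i+m-1$ for $m=2,\dots,k-i+1$, and finally one reverses the qubit order with one layer of swaps. This already exhibits $U_N$ as a product of $O(k^2)$ Hadamard, single-qubit phase, and CNOT gates, using the usual decomposition of each $CR_m$ into two CNOTs and three single-qubit phase gates (including the exponentially small angle $2\pi/2^k$, which is admissible since we allow exact phase gates). The obstruction to low depth is that, within round $i$, the $k-i$ controlled-phase gates all touch qubit $i$ and so seem forced to run sequentially, giving depth $\Theta(k)$ per round.

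The key step is to remove this bottleneck while keeping the $k$ rounds sequential. Within round $i$ I would: (a) fan out the (post-Hadamard) state of qubit $i$ onto a register of $k-i$ fresh ancillas with a binary tree of CNOTs of depth $\lceil\log_2 k\rceil$, obtaining $\sum_x c_x\ket{x}\ket{x_i,\dots,x_i}_{\mathrm{anc}}$; (b) apply, in a single layer, the gate $CR_m$ between the $m$-th ancilla and qubit $i+m-1$ for all $m$ at once — these act on pairwise-disjoint qubit pairs, so depth $O(1)$; (c) uncompute the fan-out with the mirrored CNOT tree, depth $\lceil\log_2 k\rceil$. Correctness hinges on the $CR_m$ being diagonal: after step (b) the accumulated factors $e^{2\pi i\,x_i x_{i+m-1}/2^m}$ are scalars attached branch by branch to computational-basis states, so they commute past the CNOTs of step (c), which then merely restore the ancilla register to $\ket{0^{\,k-1}}$ with no residual entanglement. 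The ancillas are thus returned clean and reused in every later round, so the whole procedure uses $k$ main qubits plus $k-1$ ancillas, i.e. at most $2k$ qubits.

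Finally I would tally resources: $k$ rounds, each of depth $O(\log k)$ (two CNOT trees plus a constant number of layers for the Hadamard and the parallel phase gates), together with the $O(1)$-depth final swap layer, give total depth $O(k\log k)$; all gates are Hadamard, single-qubit phase, and CNOT, and a CNOT is itself a Hadamard-conjugated controlled-phase, matching the stated gate set. I expect the main point requiring care to be the verification in step (c): that copying qubit $i$ \emph{only} in the computational basis and then applying \emph{only} diagonal two-qubit gates leaves the ancilla register disentangled after uncomputation — equivalently, that the phases genuinely factor out branch by branch — and, relatedly, that the $CR_m$ within a round commute so that the parallel scheduling in step (b) is legitimate; both are consequences of diagonality but constitute the crux of the argument.
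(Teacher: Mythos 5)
Your proposal is correct and follows essentially the same route as the paper: the standard $O(k^2)$-gate QFT circuit, with the commuting diagonal controlled-phase gates in each round parallelized by fanning out the control qubit onto ancillas via a logarithmic-depth CNOT tree and uncomputing afterwards, yielding depth $O(k\log k)$ on $2k$ qubits. Your write-up is in fact more explicit than the paper's about why the fan-out/uncompute step is legitimate (diagonality of the phases), but the underlying argument is the same.
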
 
\begin{proof} 
We give a recursive description of the circuit for the quantum Fourier transform. The base case is $N=2$, for this case $U_{2}$ is the Hadamard gate. 
Let $x= (x_{0}, x_{2}, \cdots, x_{N-1})$ be the input vector and let $x_{o} = (x_{1}, x_{3}, \cdots, x_{N-1})$ and $x_{e} = (x_{0}, x_{2}, \cdots, x_{N-2})$ be the even and 
odd components of $x$. Then the quantum Fourier transform satisfies the following recurrence for all $j \in [N/2]$, 
\all{ 
(U_{N} x)_{j} = (U_{N/2} x_{e})_{j} + \omega_{N}^{j}  (U_{N/2} x_{o})_{j}  \notag \\
(U_{N} x)_{j+N/2} = (U_{N/2} x_{e})_{j} - \omega_{N}^{j}  (U_{N/2} x_{o})_{j} 
}  {recQFT} 
These recurrences allow us to give a recursive description for the quantum Fourier transform. Let $H_{i}$ denote the Hadamard gate applied to the $i$-th qubit. 
Let $CZ_{k,l} (\omega_{N}^{j})$ be the controlled phase shift with qubit $k$ acting as control qubit, with the phase gate 
$Z= \begin{pmatrix} 1 &0 \\ 0 &\omega_{N}^{j} \end{pmatrix}$ being applied to qubit $l$. Then, 
\al{ 
U_{N} =  \sigma H_{k}  CZ_{k,1} (\omega_{N}^{2^{k-2}}) CZ_{k,2} (\omega_{N}^{2^{k-1}}) \cdots CZ_{k,k-1} (\omega_{N})  U_{N/2}
} 
Let us establish the correctness of this recursive decomposition. 
After the application of $U_{N/2}$ (on qubits 1 through $k-1$) on input state $\ket{x}$, the quantum state $\sum_{j} (U_{N/2} x_{e})_{j} \ket{ j, 0} +  (U_{N/2} x_{o})_{j} \ket{j, 1}$ is obtained. 
The product of the phase gates $\Pi_{j \in [k]} CZ_{k,j} (\omega_{N}^{2^{k-j}})$  controlled on qubit $k$ transforms this state to $\sum_{j} (U_{N/2} x_{e})_{j} \ket{ j, 0} +  \omega_{N}^{j} (U_{N/2} x_{o})_{j} \ket{j, 1}$. From equation \eqref{recQFT} it follows that the
application of $H_{k}$ transforms it to $\sum_{j} (U_{N} x)_{j} \ket{ j, 0} + (U_{N} x)_{j+N/2} \ket{j, 1}$. The permutation $\sigma$ is a cyclic shift that moves the $k$-th qubit to the first position, 
this can be implemented by swapping the wires in the circuit. The final state is $\sum_{j} (U_{N} x)_{j} \ket{ 0, j} + (U_{N} x)_{j+N/2} \ket{1, j}$ which is the quantum Fourier transform of $x$.

As the phase gates can be applied in parallel by making $O(\log k)$ copies of the control qubits, the depth of the QFT circuit requires $2k$ qubits and has depth is $O(k \log k)$. 
\end{proof} 
\noindent The discrete cosine and sine transforms are defined as $DCT(x) = \frac{\overline{U}_{N/2} + U_{N/2}}{2} x$ and $DST(x) = \frac{i\overline{U}_{N/2} - iU_{N/2}}{2} x$. 
The QFT circuit can be used to implement the DCT and DST with the same complexity. 

\begin{corollary} \label{c0} 
The unitaries for the discrete  cosine transform and the discrete sine transform can both be implemented as quantum circuits with $2k+1$ qubits and depth $O(k \log k)$. 
\end{corollary}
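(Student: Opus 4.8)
The plan is to build the DST and DCT circuits as small modifications of the QFT circuit from Lemma~\ref{qft}, using the fact that $U_{N}$ is symmetric, so $\overline{U}_{N} = U_{N}^{\dagger} = U_{N}^{-1}$ and hence $DCT(x) = \tfrac12(U_{N}+U_{N}^{\dagger})x$ and $DST(x) = \tfrac{i}{2}(U_{N}^{\dagger}-U_{N})x$ are just the Hermitian real and imaginary parts of the QFT. First I would realize the linear combination $\tfrac12(U_{N}\pm U_{N}^{\dagger})$ by a Hadamard-test / LCU gadget: adjoin one ancilla qubit prepared in $\tfrac{1}{\sqrt2}(\ket0+\ket1)$, apply the Lemma~\ref{qft} circuit for $U_{N}$ conditioned on the ancilla being $\ket0$ and the same circuit run in reverse (which computes $U_{N}^{\dagger}$) conditioned on $\ket1$, and then disentangle the ancilla with one more single-qubit gate — a Hadamard in the cosine case, and a Hadamard composed with an $S$ gate in the sine case so the factors of $i$ come out correctly. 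The branch flagged by $\ket0$ then carries $DCT(x)$ (resp.\ $DST(x)$), and the single ancilla is what turns the $2k$ qubits of Lemma~\ref{qft} into the claimed $2k+1$.

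Next I would check that the conditioning does not hurt the depth. The QFT circuit of Lemma~\ref{qft} is a product of Hadamards and (at worst doubly-)controlled phase gates; conditioning each additionally on the ancilla produces a three-qubit gate at worst, and the $O(\log k)$ fan-out copies of control qubits already used there (now also $O(\log k)$ copies of the ancilla) let these be applied in parallel exactly as before. Hence the depth stays $O(k\log k)$ and the gate count grows only by a constant factor, and running $U_{N}^{\dagger}$ is just the mirror image of the $U_{N}$ circuit, so no new primitives are needed.

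The step I expect to be the real subtlety is that $\tfrac12(U_{N}+U_{N}^{\dagger})$ and $\tfrac{i}{2}(U_{N}^{\dagger}-U_{N})$ are \emph{not} unitary on all of $(\C^{2})^{\otimes k}$: since the eigenvalues of the QFT lie in $\{\pm1,\pm i\}$, their real and imaginary parts have a nontrivial kernel, so the gadget above produces the transform only on the $\ket0$-flagged block, i.e.\ up to postselection on the ancilla. For the applications in Algorithm~\ref{Brownian} this is harmless — one simply carries the flag qubit along, as the algorithm already does for other postselection steps — so I would state the corollary in that form. If a genuinely unitary "DST" is wanted, the alternative is to reduce to the honestly orthogonal transform (e.g.\ the size-$(N{-}1)$ DST-I): embed the input into a length-$2N$ vector by odd extension $x\mapsto(0,x_{1},\dots,x_{N-1},0,-x_{N-1},\dots,-x_{1})$, a depth-$O(1)$ permutation-with-signs, apply the $2N$-point QFT from Lemma~\ref{qft}, which on an odd-extended vector returns a purely imaginary vector supported (up to the known symmetry) on the coordinates $1,\dots,N-1$, and undo the embedding with the adjoint permutation while absorbing the overall factor $i$; the DCT case is identical with an even extension. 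Either way the overhead over the QFT circuit is one qubit and an $O(1)$-depth pre/post-processing layer, keeping depth $O(k\log k)$; the only thing that needs genuine care is the index and sign bookkeeping of the extension map, which I would carry out explicitly.
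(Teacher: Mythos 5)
Your core gadget is essentially the paper's: condition the Lemma~\ref{qft} circuit for $U_N$ and for its conjugate on a single ancilla prepared in $\tfrac{1}{\sqrt2}(\ket0+\ket1)$, then Hadamard the ancilla so that the two branches carry $\mathrm{DCT}(x)$ and $\mathrm{DST}(x)$; the one extra ancilla is exactly where the paper's $2k+1$ comes from (the paper realizes $\overline{U}_N$ by conjugating the phase gates rather than by reversing the circuit, but by symmetry of $U_N$ these agree). Where you diverge is the last step: the paper asserts each branch occurs with probability exactly $1/2$ and invokes exact amplitude amplification to make the procedure deterministic, whereas you keep a postselection flag or reduce to the orthogonal DST-I by odd extension. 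Your caution here is warranted and is the stronger treatment: the branch probabilities are $\|\mathrm{DCT}(x)\|^2$ and $\|\mathrm{DST}(x)\|^2$, which sum to $1$ but are input-dependent (e.g.\ for $x=e_0$ the sine branch has amplitude zero), so the "known exactly" premise behind the paper's exact-amplitude-amplification step does not hold for arbitrary inputs; either your flag-qubit formulation or your odd-extension reduction to a genuinely unitary transform closes that gap at no asymptotic cost.
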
 
\begin{proof} 
The conjugate of the Fourier transform $\overline{U}_{N/2}$ can be applied by conjugating all the controlled phase gates in the QFT circuit. 
Starting with the state $\ket{0} \ket{ U_{N/2} x} + \ket{1} \ket{ \overline{U}_{N/2} x}$ and applying the $iH$ gate the states corresponding to $DCT(x)$ and $DST(x)$ 
are each obtained with probability $1/2$. As the success probability is known exactly, the probabilistic procedure can be made to succeed with probability $1$ using the exact amplitude 
amplification \cite{BHMT00}.  
\end{proof} 

\section{The sum of squares of $k$ Gaussian random variables} \label{gamma} 

\noindent The distribution of the the sum of squares of Gaussian random variables can be expressed in terms of the Gamma function. 
 
 \begin{Lemma} \label{l4} 
The random variable $X^{2}/2$ where $X \sim N(0,1)$ has distribution $\gamma(1/2)$. 
\end{Lemma}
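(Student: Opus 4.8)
The plan is to compute the density of $Y = X^{2}/2$ directly via a change of variables and match it against the $\gamma(1/2)$ density supplied by the definition. First I would write down the cumulative distribution function: for $y > 0$, the event $\{X^{2}/2 \leq y\}$ equals $\{-\sqrt{2y} \leq X \leq \sqrt{2y}\}$, so by the symmetry of the standard normal density $\Pr[Y \leq y] = \frac{2}{\sqrt{2\pi}} \int_{0}^{\sqrt{2y}} e^{-x^{2}/2}\, dx$. Differentiating in $y$ using the fundamental theorem of calculus together with the chain rule (note $\frac{d}{dy}\sqrt{2y} = 1/\sqrt{2y}$) gives the density $f_{Y}(y) = \frac{2}{\sqrt{2\pi}}\, e^{-y} \cdot \frac{1}{\sqrt{2y}} = \frac{1}{\sqrt{\pi}}\, y^{-1/2} e^{-y}$ on $\R^{+}$.

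Next I would compare this expression with the $\gamma(1/2)$ density, which by the definition of the Gamma distribution is $\frac{1}{\Gamma(1/2)}\, y^{-1/2} e^{-y}$. Invoking the classical identity $\Gamma(1/2) = \sqrt{\pi}$ — equivalently the Gaussian integral $\int_{-\infty}^{\infty} e^{-x^{2}/2}\, dx = \sqrt{2\pi}$, which already forces $f_{Y}$ to integrate to $1$ — the two densities coincide, and hence $X^{2}/2 \sim \gamma(1/2)$.

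The only point requiring care is the evaluation $\Gamma(1/2) = \sqrt{\pi}$, but this is standard and can simply be cited or recovered from the normalization of the Gaussian, so I do not expect a genuine obstacle. As an alternative organization that sidesteps change-of-variables bookkeeping, one could instead verify the CDF identity $\Pr[X^{2}/2 \leq x] = \frac{1}{\Gamma(1/2)} \int_{0}^{x} t^{-1/2} e^{-t}\, dt$ directly by substituting $t = u^{2}/2$ in the Gaussian tail integral; this is the same computation carried out at the level of distribution functions rather than densities.
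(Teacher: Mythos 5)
Your proposal is correct and follows essentially the same route as the paper: the paper writes the same CDF $\Pr[|X|\leq\sqrt{2x}]$ and performs the substitution $t^{2}/2=y$ inside the integral to match the $\gamma(1/2)$ CDF, which is exactly the ``alternative organization'' you describe at the end, and your density-level version is just that computation differentiated. The appeal to $\Gamma(1/2)=\sqrt{\pi}$ is implicit in the paper as well, so there is no substantive difference.
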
 
\begin{proof} 
Let $G(x)$ be the cumulative distribution function for $X^{2}/2$ where $X \sim N(0,1)$, then, 
\al{ 
G(x) = \Pr [ |X| \leq \sqrt{2x}] = \frac{2}{\sqrt{2\pi} } \int^{\sqrt{2x}}_{0 } e^{-t^{2}/2 } dt 
} 
Making the substitution so that $t^{2}/2=y$ and $t dt= dy \Rightarrow dt = \frac{dy}{\sqrt{2y}}$, 
\al{ 
G(x) 
=  \frac{1}{\sqrt{\pi} } \int^{x}_{0 } y^{-1/2} e^{-y } dy 
} 
Thus $G(x)$ is identical to the cdf for the Gamma distribution with $a=1/2$. 
\end{proof} 

The sum of squares $\sum_{i} X_{i}^{2}/2$ where $X_{i}$ are $k$ independent Gaussian random variables has distribution $\gamma(k/2)$. 
This distribution is more commonly named as the $\chi^{2}$ distribution with $k$ degrees of freedom. 
This follows from the next lemma on the additivity of the gamma distribution under convolution. 
\begin{Lemma} \label{l5} 
If $Y_{1} \sim \gamma(a), Y_{2}\sim \gamma(b)$ then $Y_{1} + Y_{2} \sim \gamma(a+b)$. 
\end{Lemma}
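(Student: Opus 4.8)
The plan is to compute the density of $Y_{1}+Y_{2}$ directly via the convolution formula, using (as is implicitly assumed) that $Y_{1}$ and $Y_{2}$ are independent. Recall that $\gamma(a)$ has density $f_{a}(x) = \frac{1}{\Gamma(a)} x^{a-1} e^{-x}$ supported on $\R^{+}$, obtained by differentiating the cumulative distribution function in the definition of the Gamma distribution. The density of the sum of two independent positive random variables is the convolution $(f_{a} * f_{b})(z) = \int_{0}^{z} f_{a}(x)\, f_{b}(z-x)\, dx$ for $z>0$, so it suffices to evaluate this integral and recognize it as $f_{a+b}$.

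First I would pull out the exponential factors: since $e^{-x} e^{-(z-x)} = e^{-z}$, the convolution becomes $\frac{e^{-z}}{\Gamma(a)\Gamma(b)} \int_{0}^{z} x^{a-1}(z-x)^{b-1}\, dx$. Then I would substitute $x = zt$, $dx = z\, dt$, turning the remaining integral into $z^{a+b-1} \int_{0}^{1} t^{a-1}(1-t)^{b-1}\, dt$. The integral over $[0,1]$ is the Beta function $B(a,b)$, and from the density of the $\beta(a,b)$ distribution in Definition \ref{d1} (which integrates to $1$) one reads off $B(a,b) = \frac{\Gamma(a)\Gamma(b)}{\Gamma(a+b)}$.

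Combining these steps, the density of $Y_{1}+Y_{2}$ at $z$ equals $\frac{e^{-z}}{\Gamma(a)\Gamma(b)} \cdot z^{a+b-1} \cdot \frac{\Gamma(a)\Gamma(b)}{\Gamma(a+b)} = \frac{1}{\Gamma(a+b)} z^{a+b-1} e^{-z}$, which is precisely the density of $\gamma(a+b)$; integrating recovers the stated cumulative distribution function. Alternatively one could give a one-line argument via moment generating functions: the MGF of $\gamma(a)$ is $(1-t)^{-a}$ for $t<1$, so by independence the MGF of $Y_{1}+Y_{2}$ is $(1-t)^{-(a+b)}$, and uniqueness of MGFs identifies the law as $\gamma(a+b)$. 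There is no genuine obstacle here; the only points needing (minor) care are the implicit independence hypothesis and the invocation of the Beta integral identity, both of which are already available in the text.
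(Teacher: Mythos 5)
Your proposal is correct and follows essentially the same route as the paper: compute the convolution of the two Gamma densities, factor out $e^{-z}$, substitute to reduce the remaining integral to the Beta integral $\int_0^1 t^{a-1}(1-t)^{b-1}\,dt = \frac{\Gamma(a)\Gamma(b)}{\Gamma(a+b)}$, and recognize the resulting expression as the $\gamma(a+b)$ density. Your writeup is in fact slightly cleaner than the paper's (which leaves the exponential factor inside the integral and uses the informal notation $\Pr[Y_1+Y_2=y]$ for the density), and the MGF alternative you mention is a valid shortcut not taken by the paper.
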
 
\begin{proof} The density function $F$ for $Y_{1}+ Y_{2}$ is the convolution of the density functions for $Y_{1}$ and $Y_{2}$, that is, 
\al{ 
\Pr[ Y_{1} + Y_{2} = y] = F(y) = \frac{1}{ \Gamma(a) \Gamma(b)} \int_{t} t^{a-1} (y-t)^{b-1} e^{-y} dt 
} 
Introducing variable $z= t/y$ so that $ydz= dt$, 
\al{ 
 F(y) &= \frac{1}{ \Gamma(a) \Gamma(b)} \int_{t>0} z^{a-1} (1-z)^{b-1} y^{a+b-2} e^{-y}  y dz  \nl
 &=  \frac{y^{a+b-1} e^{-y} }{ \Gamma(a) \Gamma(b)} \int_{z>0} z^{a-1} (1-z)^{b-1} dz \nl 
 &=  \frac{y^{a+b-1} e^{-y} }{ \Gamma(a+b)} 
} 
The final step follows from the definition of the $\beta(a,b)$ probability density function $\frac{\Gamma(a)\Gamma(b)}{\Gamma(a+b)} x^{a-1}(1-x)^{b-1}$. We showed 
that the density function for $Y_{1}+ Y_{2}$ is identical to the density function for $\gamma(a+b)$.  
\end{proof}
\noindent The proposition below is an immediate consequence of Lemmas \ref{l5} and \ref{l4}.

\begin{proposition} \label{c1}
The sum of squares $\sum_{i} X_{i}^{2}/2$ where $X_{i}$ are $k$ independent Gaussian random variables has distribution $\gamma(k/2)$. 
\end{proposition} 
\end{appendix} 

\section{The unary to binary conversion circuit} \label{app:c} 
We provide a logarithmic depth circuit that converts the unary amplitude encoding $\ket{x} = \sum_{i \in [n]} x_{i} \ket{e_{i}}$ for a unit vector $x \in \R^{n}, \norm{x}=1$ to the binary encoding 
$\ket{x} = \sum_{i \in [n]} x_{i} \ket{i}$. The unary encoding requires $n$ qubits while the binary encoding uses only $\log n$ qubits, the convertor circuit operates on $n+ \log n$ qubits. 
The action of the unary to binary convertor circuit on the $n+ \log n$ qubits is given as, 
\all{ 
\left (\sum_{i \in [n]} x_{i} \ket{e_{i}} \right )  \otimes  \ket{0^{\log n} }  \to \ket{0^{n}}  \otimes  \left (\sum_{i \in [n]} x_{i} \ket{i} \right )
} {convert} 
The next proposition shows that the unary to binary conversion circuit can in fact be implemented with logarithmic depth. 
\begin{proposition} 
The unary to binary conversion circuit in equation \eqref{convert} can be implemented by a quantum circuit with depth $O(\log^{2}  n)$ and with $O(n \log n)$ gates and with $O(n)$ 
ancilla qubits. 
\end{proposition}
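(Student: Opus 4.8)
The plan is to implement \eqref{convert} in two stages: first write the binary index into the $\log n$ target qubits while leaving the unary register untouched, and then \emph{uncompute} the unary register using the freshly computed binary register as control. Since on the support of any legal unary encoding exactly one of the $n$ qubits carries a $1$ (the state is a superposition of $\ket{e_i}$'s), all the ``OR'' operations below are in fact parities, and the whole circuit can be built from CNOT, Toffoli, and $X$ gates.

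For the first stage, note that for each bit position $b\in\{0,\dots,\log n-1\}$ the $b$-th bit of the hot index $i$ equals the XOR of the unary qubits in positions $\{\,j:\mathrm{bit}_b(j)=1\,\}$. First I would accumulate each such parity onto target qubit $b$ by a balanced binary tree of CNOTs of depth $O(\log n)$, using $O(n)$ ancilla qubits as intermediate accumulators, and then run that tree in reverse to clean the accumulators. Doing the $\log n$ target bits \emph{sequentially} (reusing the same $O(n)$ accumulator ancillas each time) costs depth $O(\log^2 n)$ and $O(n\log n)$ gates; the reason for going sequentially rather than computing all $\log n$ parities in parallel is that each unary qubit $j$ feeds $\mathrm{popcount}(j)=O(\log n)$ of the target bits, so a parallel version would have to fan each unary qubit out into $O(\log n)$ copies, blowing the ancilla count up to $\omega(n)$.

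For the second stage I would run, in reverse, the standard controlled binary-to-unary ``decoder'': given $\ket{i}$ in binary, flip exactly the unary qubit $e_i$. This is the usual unary-iteration circuit organized as a binary tree of flag ancillas — at level $\ell$ the $2^\ell$ flags record which length-$\ell$ prefix of $i$ has been matched, each flag being set by a Toffoli from its parent flag and the (possibly negated) $\ell$-th index bit — followed by CNOTs from the $n$ leaf flags into the unary register and then uncomputation of all flags. Because index bit $b_\ell$ must drive $2^\ell$ Toffolis, I first fan it out into $2^\ell$ copies by a CNOT tree of depth $O(\ell)$; summed over levels this uses $O\!\big(\sum_\ell 2^\ell\big)=O(n)$ ancillas (reused from level to level), has depth $O\!\big(\sum_{\ell\le\log n}\ell\big)=O(\log^2 n)$, and uses $O(n)$ gates. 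After this stage the unary register is back in $\ket{0^n}$ and the target holds $\ket{i}$, which is exactly \eqref{convert}; combining with the first stage gives total depth $O(\log^2 n)$, gate count $O(n\log n)$, and ancilla count $O(n)$. (Equivalently, one can fuse the two stages into a single recursive ``fold'' that, at each level, extracts the top remaining index bit as a parity, XORs the right half of the unary register into the left half, and clears the right half with Toffolis controlled by that bit fanned out — same resource bounds.)

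The hard part is not any single idea but the simultaneous budgeting: a single index bit genuinely influences $\Theta(n)$ wires, so fan-out is unavoidable, and the tension between polylog depth and linear ancillas is precisely what pins the depth at $O(\log^2 n)$ — one logarithm for the tree/fan-out depth and one for iterating over the $\log n$ index bits (or tree levels). I would also explicitly check that every fan-out and accumulator ancilla is uncomputed so none stays entangled with the output, and that the all-zero input (which lies outside the normalized support but may appear in intermediate analyses) is handled harmlessly, since the same circuit sends $\ket{0^n}\ket{0^{\log n}}$ to $\ket{0^n}\ket{0^{\log n}}$.
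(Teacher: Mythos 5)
Your construction is correct and achieves the stated bounds, but it is a genuinely different circuit from the one in the paper. You use a compute-then-uncompute architecture: first write each of the $\log n$ binary bits into the target register as a parity of the unary qubits at positions whose corresponding bit is set (sequential CNOT trees, depth $O(\log^2 n)$, $O(n\log n)$ gates), and then erase the unary register by running a fanned-out prefix-tree decoder backwards (depth $O(\log^2 n)$, $O(n)$ gates, $O(n)$ flag and fan-out ancillas). The paper instead interleaves the two tasks into a single recursive halving: it computes only the most significant bit as the parity of the last $n/2$ unary qubits, then applies that bit (fanned out) as the control of $n/2$ parallel controlled-SWAPs between the two halves, which simultaneously zeroes the last $n/2$ qubits and leaves a valid unary encoding of half the size to recurse on. The paper's scheme is self-cleaning — the unary register shrinks to $\ket{0^n}$ as a byproduct, with no separate decoder pass — while yours cleanly separates the forward computation from the garbage removal and is closer to the standard unary-iteration circuits in the literature; both land on the same depth, gate, and ancilla budgets, and both hinge on the same two logarithms (tree/fan-out depth times the number of bits or levels). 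One minor slip in your closing remark: your decoder stage unconditionally flips unary qubit $i$ whenever the binary register reads $i$, so on the (out-of-support) input $\ket{0^n}\ket{0^{\log n}}$ your circuit outputs $\ket{e_0}\ket{0^{\log n}}$ rather than fixing the all-zero state; this is harmless for the proposition, which only concerns legal unary encodings, but the claim as written is not true of your circuit.
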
 
\begin{proof} 
Without loss of generality, let $n=2^{k}$ be a power of $2$. 
Given the unary encoding $\left (\sum_{i \in [n]} x_{i} \ket{e_{i}} \right )$, the first bit of the binary encoding is given by the parity of the last $n/2$ qubits of the unary representation. 
The parity of $n/2=2^{k-1}$ qubits can be computed by a circuit with $n$ CNOT gates and with depth $2(k-1)$ using $O(n)$ ancilla qubits. The ancilla qubits store the partial parities 
and are erased at the end of the computation. 

Following the computation of the parity, apply controlled swap gates on qubits $(i,i+n/2)$ for the unary representation with the parity qubit acting as control. 
Recall the two qubit swap gate has the following representation in the standard basis, 
\begin{equation} 
SWAP =  \begin{pmatrix} 
&1  &0 & 0 & 0 \\  
&0  & 0 & 1  &0 \\
&0 & 1  & 0   & 0 \\
&0 & 0 & 0     & 1 
\end{pmatrix}. 
\end{equation}  
The effect of the controlled swap gates is to move the index that is equal to $1$ to the first $n/2$ qubits of the unary representation. The last $n/2$ qubits of the unary representation are equal to $0$ following this 
step and are therefore erased at the end of the computation. The multiple controlled swap gates can be applied in parallel on $n/2$ qubits, this can be accomplished using the available $O(n)$ ancilla qubits to copy the parity bits.  

Following these computations, we have computed $1$ bit of the binary representation and are left with a unary encoding of size $n/2$. Continuing iteratively, we obatin unary to binary conversion circuit with total depth 
$\sum_{1\leq i < (k-1)} (k-i) = O(k^{2}) = O(\log^{2}n)$ is obtained. The number of gates needed for computing each bit of the binary representation is $O(n)$, as there are $O(\log n)$ bits in the binary representation the 
total gate complexity is $O(n \log n)$. The maximum number of ancilla qubits $O(n)$ are needed for computing the first bit of the binary representation, subsequent bits require fewer qubits. The claims on the resource requirements for the unary to binary conversion circuit follow. 
\end{proof} 

\end{document}